\documentclass[reqno,11pt,a4paper]{amsart}

%
\usepackage[numbers]{natbib}



\usepackage[linktocpage=true,colorlinks=true, linkcolor=blue, citecolor=red, urlcolor=magenta]{hyperref}

\usepackage{amsmath, amsthm, amsfonts, amssymb,bbm}
\usepackage{mathtools,empheq}
\usepackage[all]{xy}
\usepackage{cancel}
\usepackage{enumerate}
\usepackage{comment}

\numberwithin{equation}{section}
\usepackage[utf8]{inputenc}

\theoremstyle{plain}
\newtheorem{theorem}{Theorem}
\newtheorem{proposition}[theorem]{Proposition}
\newtheorem{lemma}[theorem]{Lemma}

\theoremstyle{definition}

\newtheorem{definition}[theorem]{Definition}

\theoremstyle{remark}
\newtheorem{remark}[theorem]{Remark}

\DeclareMathOperator{\Mat}{Mat}
\DeclareMathOperator{\rmat}{r}
\DeclareMathOperator{\kmat}{k}
\DeclareMathOperator{\aff}{Aff}
\DeclareMathOperator{\Vect}{Vect}

\def\Z{\mathbb{Z}}	
\def\C{\mathbb{C}}	
\def\R{\mathbb{R}}	
\def\cS{\mathcal{S}}
\renewcommand{\leq}{\leqslant} 		
\renewcommand{\geq}{\geqslant}
\newcommand{\mc}[1]{\mathcal{#1}}
\newcommand{\mf}[1]{\mathfrak{#1}}
\newcommand{\mb}[1]{\mathbb{#1}}
\newcommand{\id}{\mathbbm{1}}


\def\ud{\mathrm{d}}

\newcommand{\dev}{\partial}

\let\ker\relax
\DeclareMathOperator{\ker}{Ker}
\def\A{\mathcal{A}}
\def\hA{\hat{\A}}
\def\F{\mathcal{F}}
\def\hF{\hat{\F}}

\usepackage{xcolor}
\definecolor{light}{gray}{.9}

\begin{document}

\title{Multi--component Hamiltonian difference operators}
\author{Matteo Casati}
\address{School of Mathematics and Statistics\\Ningbo University\\Ningbo City, Zhejiang Province, PRC}
\email{matteo@nbu.edu.cn}
\author{Daniele Valeri}
\address{Dipartimento di Matematica \& INFN, Sapienza Universit\`a di Roma,
P.le Aldo Moro 5, 00185 Rome, Italy}
\email{daniele.valeri@uniroma1.it}

\begin{abstract}
In this paper we study local Hamiltonian operators for multi-component evolutionary differential-difference equations. We address two main problems: the first one is the classification of low order operators for the 
two-component case. On the one hand, this extends the previously known results 
in the scalar case; on the other hand, our results include the degenerate cases, going beyond the foundational investigation conducted by Dubrovin. The second problem is the study and the computation of the Poisson cohomology for a two-component $(-1,1)$-order Hamiltonian operator with degenerate leading term appearing in many integrable differential-difference systems, notably the Toda lattice. The study of its Poisson cohomology sheds light on its deformation theory and the structure of the bi-Hamiltonian pairs where it is included in, as we demonstrate in a series of examples.
\end{abstract}

\maketitle

\tableofcontents

\section{Introduction}
%
%
Evolutionary differential-difference equations (D$\Delta$Es) are a class of systems for functions depending on two sets of variables, of which one is a continuous parameter (the \emph{time}) and the other takes value on a lattice. The prototypical example of such class is the (infinite) Volterra lattice, satisfied by a function $u(n,t)$ ($n \in \Z$, $t\in\R$) such that
$$
\dev_t u(n,t)=u(n,t)\left(u(n+1,t)-u(n-1,t)\right).
$$
Introducing the shift operator $\cS$ such that $\cS^p u(n,t)=u(n+p,t)$, $p\in\mb Z$, a generic D$\Delta$E system for $\ell$ functions (we say that the system has $\ell$ components) of one spatial and one time variable is of the form ($u=(u^1,\dots,u^\ell)$)
\begin{equation*}
\frac{\dev}{\dev t}u^i(n,t)=F^i\left(\ldots, \cS^{-1}u, u,\cS u,\ldots\right)
\end{equation*}
for $i=1,\ldots,\ell$. Among this class of systems, we are particularly interested to those who can be cast in
a local \emph{Hamiltonian form}, namely written as
\begin{equation}\label{intro:Ham}
\dev_t u^i=\sum_{j=1}^\ell K^{ij}\frac{\delta H}{\delta u^j}
\end{equation}
for a difference operator $K$ and a local functional $H$. The operator $K$ must be such that it defines a Lie algebra structure (the \emph{local} Poisson bracket) on the space of local functionals and the usual Lie algebra morphism between local functionals and the Lie algebra of evolutionary vector fields; if this is the case, we say that $K$ is a \emph{Hamiltonian operator}.

The study of Hamiltonian operators is particularly important in the theory of integrable
systems and in deformation quantisation. The Hamiltonian structures of many integrable differential-difference systems have been identified (see \cite{kmw13} and references therein); as it is well known, the existence of two different, but ``compatible'', Hamiltonian formulations for the same system  makes it bi-Hamiltonian -- this is a criterion to assess its integrability and a tool to obtain its symmetries \cite{magri, ol87}.

The study of Hamiltonian structures of D$\Delta$Es is the main topic of this paper. The theory of multiplicative Poisson vertex algebras \cite{DSKVW18-1,DSKVW18-2} has provided a convenient framework for their description. In \cite{DSKVW18-1}, in particular, the second author and collaborators obtained a classification of $\ell=1$ components (or \emph{scalar}) difference Hamiltonian operators up to the 5th order. 

Little inspection, however, has been devoted to multi-component difference Hamiltonian structures. Despite the relative abundance of examples, the only classification result we are aware of is an early result by Dubrovin \cite{Dub89, DN89}, and later expanded on by Parodi \cite{Parodi}. These authors addressed the case of  difference Hamiltonian operators of order $(-1,1)$ with nondegenerate leading term, namely difference operators as in \eqref{eq:diffop-1st} where $A$ is a nondegenerate matrix.

In this paper, we provide necessary and sufficient conditions for the existence of multi-component Hamiltonian structures of order $(-1,1)$ depending at most on nearest neighbours (see Theorem \ref{thm:1st-ham}), which itself is a necessary condition in the nondegenerate case (see Proposition \ref{prop:dep}).  For the two-component case ($\ell=2$) we provide normal forms for the Hamiltonian operators, in both the nondegenerate and degenerate cases. The prototypical example in this class is given by the Toda lattice (see \cite{kmw13} and Section \ref{sec:toda} hereinafter for more details)
\begin{equation*}
\left\{\begin{array}{rl}
u_t&=u\left(v_1-v\right)\\
v_t&=u-u_{-1}
\end{array}\right.
\end{equation*} 
which can be written in Hamiltonian form \eqref{intro:Ham} with respect to the two Hamiltonian operators
\begin{align*}
H_1&=\begin{pmatrix}0 & u\cS-u\\u-u_{-1}\cS^{-1}&0\end{pmatrix}, & H_2&=\begin{pmatrix}uu_1\cS-u_{-1}u\cS^{-1}&uv_1\cS-uv\\uv-u_{-1}v\cS^{-1}&u\cS-u_{-1}\cS^{-1}\end{pmatrix}.
\end{align*} 
The second Hamiltonian structure $H_2$ belongs to the class first studied by Dubrovin \cite{Dub89}, while it is worthy noticing that the first Hamiltonian structure has degenerate leading term and, in some coordinate system, assumes the constant form
\begin{equation}\label{eq:intro1}
H_0=\begin{pmatrix}0&\cS-1\\1-\cS^{-1}&0\end{pmatrix}.
\end{equation}

In analogy with finite dimensional manifolds, where Poisson brackets are identified with Poisson bivectors, we can do the same on the infinite-dimensional manifolds modelling PDEs and D$\Delta$Es.
The Poisson bivectors associated to the Hamiltonian operators can be used to define a differential on the complex of multivectors, whose cohomology is called the Poisson (or Poisson-Lichnerowicz) cohomology \cite{l77}. It provides information about the center of the Poisson algebra (the Casimir
functions), its symmetries, and the compatible bivectors that can be defined on the same
manifold.

The computation of the full Poisson cohomology for a given Poisson bivector is, in general, a formidable task. In his seminal work on Poisson manifolds \cite{l77}, Lichnerowicz established an isomorphism between the Poisson cohomology of a \emph{symplectic} manifold and the De Rham cohomology of the same manifold. In the differential case, the main result on the cohomology of Hamiltonian operators is due to Getzler, \cite{G01} who computed the Poisson cohomology of operators of hydrodynamic type, i.e. homogeneous of first order. The first author has contributed to investigate higher dimensional versions of the same problem \cite{ccs1,ccs2}. The notion of Poisson cohomology for difference Hamiltonian operators has been introduced in \cite{CW19}, where it was computed in the case of scalar ($\ell=1$) operators of order $(-1,1)$, showing that it is essentially trivial and hence reproducing in a different setting Getzler's result. Indeed, operators of order $(-1,1)$ can be seen as discretizations of differential operators of hydrodynamic type.

In this paper, we compute the Poisson cohomology for difference Hamiltonian structures of order $(0,0)$ (with $\ell=2n$) and for the first Hamiltonian structure of the Toda lattice (for which we have the constant normal form $H_0$ in \eqref{eq:intro1}). In particular, we show that its second cohomology group is essentially trivial, namely all the compatible structures of order $(-N,N)$, for $N\geq 1$ can be obtained from $H_0$ by a Miura transformation.

The paper is organized as follows: in Section \ref{sec2} we recall the formalism we use to describe D$\Delta$Es and their Hamiltonian structures, both according to the theory of multiplicative Poisson vertex algebras \cite{DSKVW18-1,DSKVW18-2} and the so-called $\theta$-formalism for (local) poly-vectors. In Section \ref{sec:ham} we find necessary and sufficient conditions for a (matrix) difference operator of order $(-1,1)$ to be Hamiltonian, and in particular we classify two-component operators depending only on nearest neighbours, including the class originally studied by Dubrovin, Novikov, and Parodi. In Section \ref{sec:coho}, after recalling the notion of Poisson cohomology for difference Hamiltonian operators, we compute it for the so-called \emph{ultralocal case} (namely a difference operator of order $(0,0)$) and for the first Hamiltonian structure of the Today lattice. We find (see Theorem \ref{thm:cohoToda}) that the Poisson cohomology $H^p(\hat{\F},d_{P_0})$ is trivial for $p>2$, while it is concentrated in the ``ultralocal'' sector for lower $p$. In particular, as previously mentioned, this implies that higher order deformations of $H_0$ are trivial, i.e. related to $H_0$ by a Miura transformation. Finally, in Section \ref{sec:eg} we illustrate the occurrence of the aforementioned triviality for several examples taken from the review paper \cite{kmw13}.

\section{Multiplicative Poisson vertex algebras and poly-vectors}\label{sec2}
In this section we recall the definition of a multiplicative Poisson vertex algebra \cite{DSKVW18-1} and of the so-called $\theta$-formalism for difference poly-vectors \cite{CW19}.

The notion of a Hamiltonian structure described in these two frameworks is the same \cite{CW19} and it is equivalent with the original definition due to Kupershmidt \cite{Kup85} (see \cite{DSKVW18-1} for a proof of this). However, each framework has its own advantages and disadvantages. Broadly speaking, using Poisson vertex algebras and the machinery of the so-called $\lambda$-brackets makes the computations more straightforward and easy to be implemented in a Computer Algebra System (CAS). For some of the computations required in this paper we adapted the Mathematica package \texttt{MasterPVA} \cite{casati16:_master_walg} to the multiplicative case. On the other hand, the $\theta$ formalism is more efficient in dealing with the complex of poly-vectors and their symmetries.

\subsection{Algebras of difference functions and vector fields}\label{sec:2.1}

Let us summarise the setting of the theory of Hamiltonian differential-difference equations (see \cite{DSKVW18-1}). We denote by $(\mathcal{P}_\ell,\cS)$
the algebra of \emph{difference polynomials} in $\ell$ variables $u=(u^1\,,\dots\,u^\ell)$.  By definition, this is the algebra of polynomials over the field $\C$ in the variables $\{u^i_n\}$, $i=1,\ldots,\ell$, $n\in\Z$ (we are denoting $u^i_0=u^i$),
endowed with an automorphism $\cS$,
called the \emph{shift operator}, which acts on generators as $\cS u^i_n=u^i_{n+1}$. A straightforward computation shows that $\cS$ satisfies
\begin{equation}\label{eq:shiftcomm}
\cS \frac{\dev}{\dev u^i_n}=\frac{\dev}{\dev u^i_{n+1}}\cS.
\end{equation}

An \emph{algebra of difference functions} in $\ell$ variables is a commutative associative unital algebra $\A$, containing $\mathcal{P}_\ell$, endowed with derivations $\frac{\dev}{\dev u^i_n}$ and an automorphism $\cS$, extending those in $\mathcal{P}_\ell$, satisfying \eqref{eq:shiftcomm} and $\frac{\dev f}{\dev u^i_n}=0$ for all but finitely many pairs of the indices $(i,n)$.

We denote by $\mc C=\{c\in\mc A\mid \cS c=c\}\subset\mc A$ the subalgebra of \emph{constants} and by
$$
\bar{\mc C}=\left\{f\in\mc A\,\middle|\,\frac{\partial f}{\partial u_n^i}=0 \text{ for all }i=1,\dots,\ell, n\in\mb Z\right\}\subset\mc A
$$
the subalgebra of \emph{quasiconstants}. It follows from \eqref{eq:shiftcomm} that $\mc C\subset \bar{\mc C}$ and that
$\cS(\bar{\mc C})=\bar{\mc C}$.

We define an ordering $\epsilon:\mb Z\to\mb N$ of the set of integers as follows:
$$
\epsilon(n)=\left\{
\begin{array}{cc}
2n\,, & n\geq0\,,
\\
-2n-1\,,& n<0
\,.
\end{array}
\right.
$$
Associated to this ordering, the algebra $ \mc A $ carries a filtration by subspaces ($n\in\mb Z$, $i=1,\dots,\ell$)
$$
 \mc A_{n,i} = \left\{ f \in \mc A \, \middle| \, \frac{\partial f}{\partial u^j_{m}} = 0
\,\text{ for } (\epsilon(m), j) > (\epsilon(n),i ) \, \text{in the lexicographical order} \right\}.
$$
For every $n\in \mb Z$ we let $\mc A_{n,0}=\mc A_{m,\ell}$, where $m\in\mb Z$ is such that $\epsilon(m)=\epsilon(n)-1$.
We also let $\mc A_{0,0}=\overline{\mc C}$. Hence, the first few terms of the filtration look as follows:
\begin{equation}\label{eq:filtration}
\begin{split}
&\mc C\subset\overline{\mc C}
\subset\mc A_{0,1}\subset\dots\subset\mc A_{0,\ell}=\mc A_{-1,0}
\subset\mc A_{-1,1}\subset\dots\subset\mc A_{-1,\ell}=\mc A_{1,0}
\\
&\subset\mc A_{1,1}\subset\dots\subset\mc A_{1,\ell}=\mc A_{-2,0}\subset\dots\subset\mc A_{n,i}
\subset\dots\subset\mc A\,.
\end{split}
\end{equation}
The algebra of difference functions $ \mc A $ is called \emph{normal}  if $ \frac{\partial}{\partial u^i_{n}} (\mc A_{n,i}) = \mc A_{n,i} $  for all $i=1,\dots,\ell, n \in \mb Z$.
\begin{remark}\label{rem:filtration}
The filtration \eqref{eq:filtration}, hence the definition of a normal algebra of difference functions, is slightly different from the filtration used in \cite{DSKVW18-1}. However, it can be checked that the statements of Theorem 5.1 and Theorem 5.2 in \cite{DSKVW18-1} still hold in our setting.
\end{remark}
\begin{remark}\label{rmk:A0-def}
Let us denote $\mc A_{0,\ell}=\mc A_{-1,0}$ as $\mc A_0=\{f\in\mc A\mid \frac{\partial f}{\partial u_n^s}=0, \text{ for every }s=1,\dots,\ell\,,n\neq0\}$. Then $\mc A_0\subset\mc A$ is a subalgebra with respect to the commutative associative product of $\mc A$.
\end{remark}
In a normal algebra of difference functions $\mc A$ we can "integrate" with respect to the variables $u_n^i$ in the following sense. For every $f\in\mc A_{n,i}$, since, by assumption, the partial derivative $ \frac{\partial}{\partial u^i_{n}}:\mc A_{n,i}\rightarrow \mc A_{n,i} $ is surjective, there exists $g\in\mc A_{n,i}$ such that $\frac{\partial g}{\partial u_{n}^i}=f$.
Hence, we denote
\begin{equation}\label{integral}
\int f \,\ud u_n^i=g\,.
\end{equation}
Clearly, we have
$\frac{\partial}{\partial u^i_{n}} (\int f\,  \ud u^i_n)=f$ and 
$\int f \, \ud u^i_{n}$ is uniquely defined up to adding an element of
$\ker \frac{\partial}{\partial u^i_{n}}|_{\mc A_{n,i}}=\mc A_{n,i-1}$.

The elements of the quotient space
$$
\F=\frac{\A}{(\cS-1)\A}
$$
are called \emph{local functionals}.
This name is motivated from the fact that the space $\A$ can be regarded as the space of densities of local functionals of fields
over a discrete lattice (see \cite{CW19} for details).
Hence, 
we denote the projection map $\A\twoheadrightarrow\F$ with the integral sign: $\int f=f+(\cS-1)\A$, $f\in\A$. We then clearly have
$\int f=\int \cS f$.

The \emph{variational derivative} of a local functional $F=\int f$ with respect to the variable $u^i$ is defined by
$$
\frac{\delta F}{\delta u^i}=\sum_{n\in\Z} \cS^{-n}\frac{\dev f}{\dev u^i_n}
\,.
$$
It follows form \eqref{eq:shiftcomm} that $\frac{\delta}{\delta {u^i}}(\cS-1)f=0$, for every $i=1\,,\dots\,,\ell$. Hence, we have a well defined map $\frac{\delta}{\delta u}:\F\to\mc A^\ell$ defined by
$$
\frac{\delta F}{\delta u}=\left(\frac{\delta F}{\delta u^i}\right)_{i=1}^\ell
\,,
$$
which we call the variational derivative operator.

A \emph{vector field} is, by definition, a derivation $X$ of $\mc A$ of the form
$$
X=\sum_{i=1}^\ell\sum_{n\in\mb Z}X_{i,n}\frac{\partial}{\partial u_n^i}
\,,
\quad X_{i,n}\in\mc A\,.
$$
We denote by $\Vect(\mc A)$ the Lie algebra of vector fields. An \emph{evolutionary vector field} is a vector field commuting with $\cS$ and we denote by $\Vect^{\mc S}(\mc A)\subset\Vect(\mc A)$ the Lie subalgebra of evolutionary vector fields.
By \eqref{eq:shiftcomm}, evolutionary vector fields are in one-to-one correspondence with elements $Q=(Q^i)_{i=1}^\ell\in\mc A^\ell$ via
\begin{equation}\label{eq:vf}
Q\mapsto X_Q=\sum_{i=1}^\ell\sum_{n\in\Z}\cS^n( Q^i)\frac{\dev}{\dev u^i_n}
\,.
\end{equation}
Conversely, given an evolutionary vector field $X$, we associate to it the vector $(X(u^i))_{i=1}^\ell\in\A^\ell$, which is called the \emph{characteristics} of $X$.
Given two evolutionary vector fields $X_P$ and $X_Q$ their Lie bracket is
\begin{equation}\label{eq:commvf}
[X_P,X_Q]=
X_{[P,Q]}\,,
\end{equation}
where ($i=1,\dots,\ell$)
$$
[P,Q]^i=X_P(Q^i)-X_Q(P^i)
=\sum_{j=1}^\ell\sum_{n\in\mb Z}\left(S^n(P^j)\frac{\partial Q^i}{\partial u_n^j}
-S^n(Q^j)\frac{\partial P^i}{\partial u_n^j}\right)
\,.$$
Moreover, given an evolutionary vector field $X$ of characteristic $Q\in\mc A^\ell$, we denote by $\hat{X}$ the induced map $\hat{X}\colon\F\to\F$ defined by
\begin{equation}\label{eq:fvfield-action}
\hat{X}(F)=\hat{X}\left(\int f\right)=\int X(f)=\int Q \cdot \frac{\delta F}{\delta u}\,,
\end{equation}
where the in last identity we are using the usual dot products of vectors.
We call $\hat{X}$ a \emph{local 1-vector}. Local 1-vectors form a Lie algebra with bracket induced by \eqref{eq:commvf}.

In the sequel, with a slight abuse of notation, we will drop the hat sign and we will use the same symbol for an evolutionary vector field of $\A$ and for its local 1-vector on $\F$.

\subsection{Difference operators and multiplicative Poisson vertex algebras}
\label{sec:2.2}
We denote by $\mc M_\ell(\mc A)=\Mat_{\ell\times\ell}(\mc A[\mc S,\mc S^{-1}])$ the \emph{algebra of (local) matrix difference operators}.
Elements of $\mc M_\ell(\mc A)$ are Laurent polynomials
\begin{equation}\label{eq:diffop-gen}
K(\mc S)=\sum_{l=M}^N K_{l}\mc S^l
\,,
\qquad
K_{l}=(K^{ij}_{l})_{i,j=1}^\ell\in\Mat_{\ell\times\ell}(\mc A)
\,,
\end{equation}
with the associative product $\circ$ defined by the relation
$$
\mc S\circ A=\mc S(A)\mc S
\,,\qquad A \in\Mat_{\ell\times\ell}(\mc A)\,.
$$
We say that $K(\mc S)$ as in \eqref{eq:diffop-gen} is a difference operator of order $(M,N)$.
A difference operator of this class defines a bilinear map $\F\times\F\to\F$ by the formula
\begin{equation}\label{eq:prebiv}
B(F,G)=\int\frac{\delta F}{\delta u}\cdot K(\mc S)\left(\frac{\delta G}{\delta u}\right)
\,.
\end{equation}
The \emph{adjoint} of a difference operator $K(\mc S)=(K^{ij}(\mc S))$ as in \eqref{eq:diffop-gen} is the difference operator
$K^{*}(\mc S)=((K^*)^{ij}(\mc S))_{i,j=1}^\ell$, where
\begin{equation}\label{eq:diffop-*}
(K^*)^{ij}(\mc S)=\sum_{l=M}^N \cS^{-l}\circ K^{ji}_{l}
\,.
\end{equation}
If $K$ is skewadjoint, namely $K^*(\mc S)=-K(\mc S)$, then the bilinear map $B$ is skewsymmetric.
Viceversa, if $B$ as in \eqref{eq:prebiv} is skewsymmetric, then $K(\mc S)$ is skewadjoint (such $B$ will be called a local 2-vector in Section \ref{sec:2.3}).
Note that, a local difference operator of the form $\eqref{eq:diffop-gen}$ is skewadjoint if and only if $M=-N$ for $N\geq0$ and
\begin{equation}\label{eq:diffop-skew}
K^{ij}_{-l}=-\cS^{-l} K^{ji}_{l},
\end{equation}
for every $i,j=1,\dots,\ell$ and $l=0,\dots,N$, \cite[Proposition 9]{CW19}.
In particular, this means that $K_{0}$ is a skewsymmetric matrix.

A \emph{(local) multiplicative $\lambda$-bracket} on an algebra of difference functions $\A$ is a bilinear operation $\A\times\A\to\A[\lambda,\lambda^{-1}]$
that
satisfies the following properties:
\begin{enumerate}
\item sesquilinearity: $\{\cS f_{\lambda}g\}=\lambda^{-1}\{f_{\lambda}g\}$ and $\{f_{\lambda}\cS g\}=\lambda\cS\{f_{\lambda}g\}$,
for every $f,g\in\A$;
\item left and right Leibniz rules: $\{f_{\lambda}gh\}=\{f_{\lambda}g\}h+g\{f_{\lambda}h\}$, $\{fg_{\lambda}h\}=\{f_{\lambda\cS}h\}g+\{g_{\lambda\cS}h\}f$, for every $f,g,h\in\A$.
\end{enumerate}
The notation $\{f_{\lambda\cS}g\}h$ should be read as follows:
write 
\begin{equation}\label{20240528:eq1}
\{f_{\lambda}g\}=\sum_{p=M}^N c_p \lambda^p
\,,
\end{equation}
then
$\{f_{\lambda\mc S}g\}h=\sum_{p=M}^N c_p \lambda^p\mc S^ph$,
namely we replace $\lambda$ with $\lambda\cS$ in the expansion of the $\lambda$-bracket and we let the shift operator act on the right.

Given a $\lambda$-bracket on $\mc A$ we define a matrix difference operator
\begin{equation}\label{20240918:eq1}
K(\mc S)=(\{u^j_{\mc S} u^i\})_{i,j=1}^\ell
\end{equation}
by replacing $\lambda$ with the shift operator $\mc S$. On the other hand, given a matrix difference operator $K(\mc S)$
we associate to it a multiplicative $\lambda$-bracket on $\mc A$ according to the following 
\emph{Master Formula} \cite{DSKVW18-1}
\begin{equation}\label{eq:masterf}
\{f_{\lambda}g\}_K=\sum_{i,j=1}^\ell\sum_{m,n\in\Z}\frac{\dev g}{\dev u^j_m}(\lambda\cS)^mK^{ji}(\lambda\mc S)(\lambda\cS)^{-n}\frac{\dev f}{\dev u^i_n}
\,.
\end{equation}
In particular, writing $K(\mc S)$ as in \eqref{eq:diffop-gen}, the Master Formula \eqref{eq:masterf} on generators becomes
\begin{equation}\label{eq:lambdadef}
\{u^i_\lambda u^j\}_K=K^{ji}(\lambda)=\sum_{l=M}^NK^{ji}_{l}\lambda^l
\,.
\end{equation}
When $\mc A=\mc P_\ell$, equations \eqref{20240918:eq1} and \eqref{eq:masterf} give a
one-to-one correspondence between multiplicative $\lambda$-brackets on $\mc P_\ell$ and the algebra of matrix difference operators $\mc M_{\ell}(\mc P_\ell)$ (this is also true for some algebras of difference functions, see Remark 2.3 in \cite{DSKVW18-1}).

Note also that the bilinear operation defined in \eqref{eq:prebiv} is given in terms of the multiplicative $\lambda$-bracket \eqref{eq:masterf}
as follows
\begin{equation}
B(F,G)=\int\{g_{\lambda}f\}_K\big|_{\lambda=1}.
\end{equation}

For an element $f\in\mc A$ we define its \emph{Fr\'echet derivative}
as the difference operator $D_f(\mc S):\mc A^\ell\rightarrow\mc A$ defined by (see \eqref{eq:vf})
\begin{equation}\label{eq:frechet}
D_f(\mc S)Q=X_Q(f)
=\sum_{i=1}^\ell\sum_{n\in\mb Z}\frac{\partial f}{\partial u_n^i}\mc S^nQ^i\,,
\quad Q=(Q^i)_{i=1}^\ell\in\mc A^\ell
\,.
\end{equation}
More generally, for a collection of elements $\phi=(\phi^1,\dots \phi^m)\in\mc A^m$, the corresponding Fr\'echet derivative is the
matrix 
difference operator $D_\phi(\mc S):\mc A^\ell\rightarrow\mc A^m$ defined by ($k=1,\dots,m$)
\begin{equation}\label{eq:frechet2}
\left(D_\phi(\mc S)Q\right)_k=D_{\phi^k}(\mc S)Q
\,\left(\,=X_Q(\phi^k)\right)
\,,
\quad Q=(Q^i)_{i=1}^\ell\in\mc A^\ell
\,.
\end{equation}
Explicitly, $D_\phi(\mc S)=\left(D_{\phi}^{ij}(\mc S)\right)\in\Mat_{m\times\ell}(\mc A[\mc S,\mc S^{-1}])$ is the matrix difference operator with entries
\begin{equation}\label{eq:frechet3}
D_\phi^{ij}(\mc S)
=\sum_{n\in\mb Z}\frac{\partial \phi^i}{\partial u_n^j}\mc S^n
\,,
\qquad
i=1,\dots,m\,,j=1\dots,\ell
\,.
\end{equation}
By equation \eqref{eq:diffop-*}, the adjoint of the Fr\'echet derivative operator $D_\phi(\mc S)$ is the matrix difference operator
$D_\phi^*(\mc S)=\left((D_{\phi}^*)^{ij}(\mc S)\right)\in\Mat_{\ell\times m}(\mc A[\mc S,\mc S^{-1}])$, where
\begin{equation}\label{eq:frechet4}
(D_\phi^*)^{ij}(\mc S)
=\sum_{n\in\mb Z}\mc S^{-n}\circ \frac{\partial \phi^j}{\partial u_n^i}
\,,
\qquad
i=1,\dots,m\,,j=1\dots,\ell
\,.
\end{equation}
In particular, from \eqref{eq:frechet4}, we have that
$D_f^*(\mc S)(1)=\frac{\delta F}{\delta u}$, where $F=\int f$, $f\in\mc A$.
Hence, using equations \eqref{eq:frechet} and \eqref{eq:frechet4} we can then rewrite the Master Formula \eqref{eq:masterf} as
\begin{equation}\label{eq:frechet5}
\{f_\lambda g\}_K=D_{g}(\lambda \mc S)K(\lambda \mc S)D_f^*(\lambda\mc S)(1)
\,.
\end{equation}

Let $\mc A$ be an algebra of difference functions
endowed with a multiplicative $\lambda$-bracket
$\{\cdot\,_\lambda\,\cdot\}$. 
As in \cite{DSKVW18-1},
we say that $\mc A$ is 
a \emph{multiplicative Poisson vertex algebra} (multiplicative PVA) if 
the multiplicative $\lambda$-bracket satisfies the skewsymmetry property
\begin{equation}\label{eq:pva-skew}
\{g_{\lambda}f\}=-{}_\to\{f_{(\lambda\cS)^{-1}}g\}
\end{equation}
and the Jacobi identity
\begin{equation}\label{eq:pva-jac}
\{f_{\lambda}\{g_{\mu}h\}\}-\{g_\mu\{f_\lambda h\}\}-\{\{f_{\lambda}g\}_{\lambda\mu}h\}=0
\,,\end{equation}
for every $f,g,h\in\A$.
The notation used in \eqref{eq:pva-skew} means the following: let $\{f_\lambda g\}$ be as in \eqref{20240528:eq1}, then
${}_\to\{f_{(\lambda\cS)^{-1}}g\}=\sum_{p=M}^N \left(\lambda\cS\right)^{-p}c_p$,
namely the formal parameter $\lambda$ of the bracket is replaced by $(\lambda\cS)^{-1}$ and the shift operator acts on the coefficients $c_p$.

Let $K(\mc S)$ be a matrix difference operator and $\{\cdot\,_\lambda\,\cdot\}_K$ be the corresponding $\lambda$-bracket
on $\mc A$ defined by \eqref{eq:masterf}. It is shown in \cite{DSKVW18-1} that skewsymmetry \eqref{eq:pva-skew} holds if and only if $K(\mc S)$ is skewadjoint, and that, assuming $K(\mc S)$ is skewadjoint, the Jacobi identity \eqref{eq:pva-jac} holds if and only if
\begin{equation}\label{eq:pva-jac-K}
\begin{split}
&
\sum_{s=1}^\ell\sum_{n\in\mb Z}\left(
\frac{\partial H^{kj}(\mu)}{\partial u_n^s}(\lambda \mc S)^nH^{si}(\lambda)
-\frac{\partial H^{ki}(\lambda)}{\partial u_n^s}(\mu \mc S)^nH^{sj}(\mu)
\right)
\\
&
=\sum_{s=1}^\ell\sum_{n\in\mb Z}
H^{ks}(\lambda\mu\mc S)(\lambda\mu\mc S)^{-n}\frac{\partial H^{ji}(\lambda)}{\partial u_n^s}
\,,
\quad\text{for every  $i,j,k=1\,\dots\,\ell$}
\,.
\end{split}
\end{equation}
We say that a local difference operator of the form $\eqref{eq:diffop-gen}$ is \emph{Hamiltonian} if and only if
its associated $\lambda$-bracket \eqref{eq:lambdadef} defines a multiplicative Poisson vertex algebra structure on $\A$, namely it is skewadjoint and the condition \eqref{eq:pva-jac-K} holds.

\subsection{Local poly-vectors and the \texorpdfstring{$\theta$}{theta} formalism}\label{sec:2.3}
A local 1-vector is a linear map $\F\to\F$ of the form \eqref{eq:fvfield-action}. In this paragraph we extend the definition of local $1$-vectors to local poly-vectors and recall the $\theta$ formalism as a convenient framework to describe their complex and their operations.

First, recall from \cite{CW19} that a local $p$-vector, in the context of difference algebras, is an alternating $p$-linear map $\F^p\to\F$ of the following form
\begin{equation}\label{eq:pvf}
B(F_1,\ldots,F_p)=\sum\int\frac{\delta F_1}{\delta u^{i_1}}K^{i_1,i_2,\ldots,i_p}_{n_2,\ldots,n_p}\left(\cS^{n_2}\frac{\delta F_2}{\delta u^{i_2}}\right)\left(\cS^{n_3}\frac{\delta F_3}{\delta u^{i_3}}\right)\cdots\left(\cS^{n_p}\frac{\delta F_p}{\delta u^{i_p}}\right).
\end{equation}
We identify local $0$-vectors with local functionals in the obvious way. Moreover, a skewadjoint difference operator defines a local $2$-vector (or \emph{bivector}) according to the formula \eqref{eq:prebiv}, since the bilinear map $B$ is skewsymmetric; conversely, any bivector is defined by a skewadjoint difference operator, see \cite{CW19}.
The Lie bracket of local 1-vector and the action of a local 1-vector on a local functional, i.e. the Lie derivative of a local functional along the local 1-vector, have already been described in \eqref{eq:commvf} and \eqref{eq:fvfield-action}.

In order to characterise the full complex of local poly-vectors (from now on we skip the term ``local" and simply call them poly-vectors), we adopt the so-called ``$\theta$ formalism'', which is now a standard machinery in the differential case (see \cite{G01,lz11,lz13,ccs1,ccs2}) and has already been adapted to the difference case (see \cite{CW19} for more details), which extends to the infinite dimensional setting the well-known fact that, for a finite dimensional manifold $M$,  $T^*[1]M$ is isomorphic to $\Gamma(\Lambda (TM))$.

To this aim, we introduce the space of \emph{densities of poly-vectors}
\begin{equation}
\hA=\A[\theta_{i,n}\mid i=1,\ldots,\ell,\;n\in\Z].
\end{equation}
The space $\hA$ is graded according to
\begin{align*}
\deg f&=0\,,\quad f\in\mc A\,, &\deg \theta_{i,n}&=1\,,
\end{align*}
and the product in $\hA$ is graded commutative: $fg=gf$, $f \theta_{i,n}=\theta_{i,n}f$, for $f,g\in\mc A$, but $\theta_{i,n}\theta_{j,m}=-\theta_{j,m}\theta_{i,n}$. The shift operator naturally extends to $\hA$ by $\cS\theta_{i,n}=\theta_{i,n+1}$ (we will often simply denote
$\theta_{i,0}=\theta_i$). 
We also extend the filtration \eqref{eq:filtration} to a filtration of $\hA$ by letting
$$
\hA^{p}_{n,i}=\left\{\omega\in \hA^p\middle| \, \frac{\partial \omega}{\partial u^j_{m}} = 0
\,\text{ for } (\epsilon(m), j) > (\epsilon(n),i ) \, \text{in the lexicographical order} \right\}
\,.
$$
for every $n\in\mb Z$, $i=1,\dots,\ell$ and $p\geq0$. For every $n\in \mb Z$ we also let $\hA_{n,0}^p=\hA_{m,\ell}^p$, where $m\in\mb Z$ is such that $\epsilon(m)=\epsilon(n)-1$. 

The main point of this construction is the identification of 
$$
\hF:=\frac{\hA}{(\cS-1)\hA}
$$
with the space of poly-vectors described above.
Since $\cS$ respects the gradation of $\hA$, the space $\hF$ is graded as well and its homogeneous components $\hF^p$ of degree $p$ are in one-to-one correspondence with $p$-vectors. Indeed, a generic element of $\hF^p$ is of the form
\begin{equation}\label{eq:pvect-theta}
B=\sum_{i_1,\ldots,i_n=1}^\ell\,\sum_{n_2,\ldots,n_p\in\Z}\int\theta_{i_1} K^{i_1,i_2,\ldots,i_p}_{n_2,\ldots,n_p}\theta_{i_2,n_2}\theta_{i_3,n_3}\cdots\theta_{i_p,n_p},
\end{equation}
$K^{i_1,\ldots,i_p}_{n_2,\ldots,n_p}\in\A$, and it is in 1-to-1 correspondence with the $p$-vector in \eqref{eq:pvf}.
%
%

On the space of poly-vectors it is defined a degree $-1$ bilinear operation called the \emph{Schouten bracket}, which extends the notion of commutator of 1-vectors and of Lie derivative. Such bracket enjoys of a graded version of the skewsymmetry and of the Jacobi identity, typical of a \emph{Gerstenhaber algebra}:
\begin{gather}\label{eq:sch-skew}
[B,A]=-(-1)^{(a-1)(b-1)}[A,B]\\ \label{eq:sch-jac}
[A,[B,C]]=[[A,B],C]+(-1)^{(a-1)(b-1)}[B,[A,C]]
\end{gather}
with $A\in\hF^a$ and $B\in\hF^b$. Note that this grading corresponds to the one of standard Schouten brackets and it is different from the (simpler) grading of the bracket defined in \cite{CW19} (which is the one commonly used in almost all the literature on the differential case, with the notable exception of Getzler's seminal paper \cite{G01}). Such bracket is nothing else that the canonical symplectic structure on (our replacement of) $T^*[1]M$ \cite{v02}, namely
\begin{equation}\label{eq:sch-def}
[A,B]=-\sum_{l=1}^\ell\int\left(\frac{\delta A}{\delta u^l}\frac{\delta B}{\delta \theta_l}+(-1)^a\frac{\delta A}{\delta \theta_l}\frac{\delta B}{\delta u^l}\right).
\end{equation}
The variational derivative with respect to $\theta$ in \eqref{eq:sch-def} and in the following is defined by the same formula as the variational derivative with respect to $u$; note that the partial derivative with respect to $\theta$ must be odd (i.e. $\dev_\theta(ab)=(\dev_\theta a)b+(-1)^{\deg a}a(\dev_\theta b)$) to be consistent with $\theta_{i,n}\theta_{j,m}=-\theta_{j,m}\theta_{i,n}$. Moreover, this choice of signs not only is consistent with properties \eqref{eq:sch-skew} and \eqref{eq:sch-jac}, but also coincides with the commutator induced by \eqref{eq:commvf} and the identification of $[X,R]=\mathcal{L}_X R$ for a 1-vector $X$ and any poly-vector $R$.

A skewadjoint difference operator $K=K(\mc S)$ as in \eqref{eq:diffop-gen} defines a bivector $P$ by
\begin{equation}\label{eq:bic-def}
P=\frac12\sum_{i,j=1}^\ell\int\theta_i K^{ij}(\theta_j)
=\frac{1}{2}\sum_{i,j=1}^\ell\sum_{l=M}^N\int\theta_i K^{ij}_{l}\theta_{j,l}.
\end{equation}
Note that, thanks to the skewsymmetry, we can recover the difference operator $K$ from $P$ by computing
\begin{equation}
\sum_{j=1}^\ell K^{ij}(\theta_j)=\frac{\delta P}{\delta \theta_i}.
\end{equation}

We say that a bivector is \emph{Poisson} if its Schouten bracket with itself vanishes $[P,P]=0$ (the identity is non trivial because of the graded skewsymmetry of the bracket). Moreover,
there is a 1-to-1 correspondence between Poisson bivectors and multiplicative PVAs structures on algebras of difference functions given by \eqref{eq:masterf}, see \cite{CW19} for further details.

\begin{remark}
Because of the interplay between skewsymmetry in the exchange of the variables $\theta_i$ and the integral operation, it is not always obvious how to impose the vanishing of an element of $\hF$. A convenient ``normalization operator'' has been introduced by Barakat in the differential setting \cite{b08}. Note that $B\in\hF^0$ is vanishing if and only if its variational derivative with respect to the variables $u^i$ is 0. Given an element $B$ of $\hF^p$ for $p\geq 1$, we choose as its representative in $\hA^p$ the element
\begin{equation}\label{eq:b-norm}
\mathcal{N}B=\sum_{a=1}^\ell \frac{1}{p}\theta_a\frac{\delta B}{\delta \theta_a}
\end{equation}
and observe that $B=0$ if and only if $\mathcal{N}B=0$.
\end{remark}

\subsection{Homomorphisms of algebras of difference functions and point transformations}
Let $\mc A$ be an algebra of difference functions in $\ell$ variables $u^1,\dots,u^\ell$, let $\mc B$ be an algebra of difference functions in $m$ variables $v^1,\dots v^m$.
By a homomorphism of algebras of difference functions we mean an associative algebra homomorphism $\phi:\mc B\to\mc A$
such that $\phi(\mc S f)=\mc S\phi(f)$, for every $f\in\mc B$ (by an abuse of notation we are denoting the shift automorphisms of $\mc A$ and $\mc B$ with the same symbol).
Assuming that $\phi$ is injective, we have $m\leq\ell$ (in the examples that will interest us we have $m=\ell$) and we can identify $\mc B$ with the subalgebra $\phi(\mc B)\subset\mc A$.
We can regard injective homomorphisms of algebras of difference functions as  ``changes of variables". In this section we then review the usual change of variables formulas for vector fields, variational derivatives and difference operators, and we describe a special class of homomorphisms, that are called point transformations, which will be used throughout the paper.

Denote by $\phi^i=\phi(v^i)\in\mc A$, for every $i=1,\dots,m$, and consider 
the Fr\'echet derivative operator $D_\phi(\mc S)=\left(D_{\phi}^{ij}(\mc S)\right)\in\Mat_{m\times\ell}(\mc A[\mc S,\mc S^{-1}])$ associated to the collection
$(\phi^1,\dots,\phi^m)\in\mc A^m$ using \eqref{eq:frechet4}.
Let also $K(\mc S)\in \mc M_\ell(\mc A)$ be a matrix-difference operator and consider the corresponding $\lambda$-bracket on $\mc A$ given by the Master Formula \eqref{eq:masterf} or equivalently by \eqref{eq:frechet5}. 
From equation \eqref{eq:frechet},
we have $D_{u^i}(\mc S)=(\delta_{ij})_{j=1}^\ell$, for every $i=1,\dots,\ell$.
Hence, from the Master Formula \eqref{eq:frechet5} we get ($f\in\mc B$)
\begin{equation}\label{tr1}
\{{u^i}{}_{\lambda}\phi(f)\}_K
=\left(D_{\phi(f)}(\lambda \mc S)K(\lambda)\right)_i
\,,
\qquad i=1.\dots,\ell
\,.
\end{equation}
On the other hand, using the Master Formula \eqref{eq:masterf} twice, first with respect to the variables $\phi^1=\phi(v^1),\dots,\phi^m=\phi(v^m)$, and then with respect to the variables $u^1,\dots,u^\ell$, we get ($i=1,\dots,\ell$)
\begin{equation}\label{tr2}
\begin{split}
&\{{u^i}_\lambda \phi(f)\}_K
=\sum_{k=1}^m\sum_{n\in\mb Z}\phi\left(\frac{\partial f}{\partial v^k_n}\right)(\lambda\mc S)^n\{u^i{}_\lambda \phi^k\}_K
\\
&=\sum_{j=1}^\ell\sum_{k=1}^m\sum_{n,\tilde n\in\mb Z}
\phi\left(\frac{\partial f}{\partial v^k_n}\right)(\lambda\mc S)^n
\frac{\partial \phi^k}{\partial u^j_{\tilde n}}(\lambda\mc S)^{\tilde n}
\{u^i{}_\lambda u^j\}_K
\\
&
=\left(\phi\left(D_{f}(\lambda\mc S)\right)
D_{\phi}(\lambda\mc S)K(\lambda)
\right)_i
\,.
\end{split}
\end{equation}
In the last equality of \eqref{tr1} we used \eqref{eq:frechet} and
\eqref{eq:frechet3} and we note that $D_{f}(\mc S)$ is the Fr\'echet derivative \eqref{eq:frechet} of $f\in\mc B$ which computed with respect to the variables $v^1,\dots,v^m$.
Combining equations \eqref{tr1} and \eqref{tr2} we get the identity
$$
\left(
D_{\phi(f)}(\lambda S)-\phi\left(D_{f}(\lambda\mc S)\right)
D_{\phi}(\lambda\mc S)
\right)K(\lambda)=0\,,
$$
which holds for any matrix difference operator $K(\mc S)\in\mc M_\ell(\mc A)$.
We then get, for every $f\in\mc B$, 
\begin{equation}\label{tr3}
D_{\phi(f)}(\lambda)
=\phi\left(D_{f}(\lambda\mc S)\right)
D_{\phi}(\lambda)
\,.
\end{equation}
Let $X_Q$ be an evolutionary vector field on $\mc A$ as in \eqref{eq:vf} such that
$X_Q(\mc B)\subset\mc B$ (as before we are identifying $\mc B\simeq\phi(\mc B)\subset\mc A$). Using equations \eqref{eq:frechet} and \eqref{eq:frechet2}
we have that $X_Q(\phi^i)=D_{\phi^i}(\mc S)Q=(D_{\phi}(\mc S)Q)_i\in\mc B$, for every $i=1,\dots,m$.
Hence, applying both sides of \eqref{tr3} to $Q$ and using \eqref{eq:frechet} we get the identity
$$
X_Q(\phi(f))=D_{\phi(f)}(\mc S)Q
=\phi\left(D_{f}(\mc S)\right)
D_{\phi}(\mc S)Q
=\phi\left(X_{D_\phi(\mc S)Q}(f)\right)
\,,
$$
which states the change of variables formula for the restriction of the vector field $X_Q$ to $\mc B$ in terms of the vector field $X_{D_{\phi}(\mc S)Q}$
on $\mc B$.

Let $F=\int f\in\mc B/(\mc S-1)\mc B$. Since $\phi$ commutes with $\mc S$ we denote
by $\phi(F)=\phi(\int f)=\int \phi(f)\in\mc A/(\mc S-1)\mc A$.
Recall from Section \ref{sec:2.2} that $\frac{\delta \phi(F)}{\delta u}=D_{\phi(f)}^*(\mc S)(1)\in\mc A^\ell$ and similarly $\frac{\delta F}{\delta v}=D_{f}^*(\mc S)(1)\in\mc B^m$ (note that the latter Fr\'echet derivative is computed with respect to the variables $v^1,\dots,v^m$).
Computing the adjoint operators of both sides of \eqref{tr3} and applying them to $1$ we get the change of variables relation between the variational derivative of $F$ and $\phi(F)$:
$$
\frac{\delta \phi(F)}{\delta u}=
D_\phi^*(\mc S)\left(
\phi\left(\frac{\delta F}{\delta v}\right)
\right)
\,.
$$
Furthermore, let us denote by $\tilde K(\mc S)\in\mc M_{m}(\mc A)$ the matrix difference operator with entries $\tilde K^{ij}(\lambda)=\{\phi^j{}_\lambda\phi^i\}_K$, $i,j=1,\dots,m$. 
It immediately follows from \eqref{eq:frechet5} that
we have the change of variables relation between $K(\mc S$) and $\tilde K(\mc S)$
given by
\begin{equation}\label{tr5}
\tilde K(\lambda)=D_{\phi}(\lambda \mc S)K(\lambda \mc S)D^*_{\phi}(\lambda)
\,.
\end{equation}
If the entries of $\tilde K(\mc S)$ belong to $\mc B[\mc S,\mc S^{-1}]$ and $K(\mc S)$ is a Poisson structure on $\mc A$, then $\tilde K(\mc S)$ is a Poisson structure on $\mc B$. 

In \cite{Dub89,DN89}, the authors think of $(u^1,\dots,u^\ell)$
as local coordinates on an $\ell$-dimensional manifold $M$
and they classify a certain class of Hamiltonian difference operators up to local change of coordinates on $M$. The importance of considering invariance with respect to local change of coordinates of the underlying manifold $M$ stems from the fact that the invariant Hamiltonian difference operators provide interesting
"geometric objects" related to the manifold $M$, see for example Remark \ref{rem:DP}.
In terms of algebras of difference functions, a local change of coordinates correspond to a \emph{point transformation}.
\begin{definition}\label{def:pt}
Let $\mc A$ be an algebra of difference functions in the variables $u^1,\dots,u^\ell$ and let $\mc B$ be an algebra of difference functions in the variables $v^1,\dots,v^\ell$. An injective homomorphism of algebras of difference functions $\phi:\mc B\to\mc A$ is called a point transformation if $\phi(\mc B_0)\subset\mc A_0$.
\end{definition}
Let $\phi:\mc B\to\mc A$ be a point transformation and let, as before,
$\phi^i=\phi(v^i)\in\mc A$, $i=1,\dots,\ell$. In this case, since
by assumption $\phi^i\in\mc A_0$, we have $\frac{\partial \phi^i}{\partial u^{j}_n}=0$, for every $n\neq0$ (see Remark \ref{rmk:A0-def}) and  the Fr\'echet derivative 
\eqref{eq:frechet3} reduces to the Jacobian matrix
$$
J_\phi=D_\phi(\mc S)
=\left(\frac{\dev\phi^i}{\dev u^j}\right)_{i,j=1}^\ell\in\Mat_{\ell\times\ell}(\mc A)
\,.
$$
This matrix is non-degenerate (meaning that $\det J_{\phi}\neq0$) since $\phi$ is injective. 
Under a point transformation, the change of variables formula \eqref{tr5}
for a Hamiltonian operator $K(\mc S)$ becomes
\begin{equation}\label{eq:tildeK}
\tilde K(\mc S)=J_{\phi}K(\mc S)J_{\phi}^T
\,.
\end{equation}
Since $J_{\phi}$ has order 0 as a matrix difference operator and it is nondegenerate, then the matrix difference operators $K(\mc S)$ and $\tilde K(\mc S)$ have the same order.

An example of a point transformation is the homomorphism of algebras of difference functions
$\phi:\mb C[\tilde u_n^{\pm1},\tilde v_n^{\pm1}\mid n\in\mb Z]\rightarrow
\mb C[u_n^{\pm1},v_n^{\pm1},\log u_n\mid n\in\mb Z]$ defined on generators by
$$
\phi(\tilde u)=\log u\,,
\quad \phi(\tilde v)=v\,,
$$
used in Remark \ref{remark:Toda} to pass from the first Hamiltonian operator of the Toda lattice to the Hamiltonian operator \eqref{eq:intro1}.

\begin{remark}
If we drop the request that $\phi(\mc B_0)\subset\mc A_0$ in Definition \ref{def:pt}
we get a big group of transformations acting on the space of matrix Hamiltonian
difference operators extensively studied in \cite{cmw19,cmw20}.
It is clear from \eqref{tr5} that these transformations do not preserve the order of Hamiltonian operators, and invariant Hamiltonian operators under such transformations lose track of the geometric properties of the underlying manifold $M$. 

In fact,
since we can always embed an algebra of difference functions in a suitable completion of an algebra of differential functions (see for example \cite{Guido} or the appendix in \cite{VY23}),
an injective homomorphism between algebras of difference functions corresponds (under a mild non-degeneracy assumption) to a Miura-type transformation, in the sense of Dubrovin-Zhang \cite{DZ01}, between algebras of differential functions.
For example, the homomorphism $\phi:\mb C[v_n\mid n\in \mb Z]\rightarrow\mb C[u_n\mid n\in\mb Z]$
given on generators by $\phi(v_n)=u_nu_{n+1}$, which sends the modified Volterra lattice in the Volterra lattice \cite{kmw13}, gives rise to the Miura-type transformation
$$
\mu(v)=\sum_{k\in\mb Z_+}\frac{1}{k!}u\partial_x^ku\epsilon^k
=u^2+u\partial_xu\epsilon+\frac{1}{2}u\partial_x^2u\epsilon^2+\dots\,.
$$
The requirement $\phi(\mc B_0)\subset\mc A_0$ in Definition \ref{def:pt}
of a point transformation corresponds to the strongest condition that
the corresponding Miura transformation is independent of $\epsilon$.
\end{remark}

\section{Hamiltonian matrix difference operators}\label{sec:ham}
Recall from Section \ref{sec2}, that a matrix difference operator as in \eqref{eq:diffop-gen} is said to be \emph{Hamiltonian} if its associated bivector \eqref{eq:bic-def} is Poisson; alternatively, if its
associated $\lambda$-bracket \eqref{eq:lambdadef} defines a multiplicative Poisson vertex algebra.

In this paper, we are particularly interested in Hamiltonian operators of order $(-1,1)$; indeed, several examples of Hamiltonian structures for integrable difference equation belong to this class (see e.g. \cite{kmw13}), which can be regarded as a discretisation of the Hamiltonian operators of hydrodynamic type (namely, differential operators of order 1), see \cite{DN83,Dub89}.

A generic skewsymmetric difference operator $K=K(\mc S)$ of this class is of the form
\begin{equation}\label{eq:diffop-1st}
K=A\,\cS-\cS^{-1}\circ A^T+B,
\end{equation}
where $A,B\in\mathrm{Mat}_{\ell\times\ell}(\A)$ and $B=-B^T$.
The corresponding $\lambda$-bracket and bivector (in the $\theta$ formalism) are, respectively,
($1\leq i,j\leq \ell$)
\begin{align}\label{eq:lambda-1}
\{u^i_{\lambda}u^j\}_K&=A^{ji}\lambda-\cS^{-1}A^{ij}\lambda^{-1}+B^{ji},\\
P_K&=\int\left(A^{ij}\theta_i\theta_{j,1}+\tfrac12B^{ij}\theta_i\theta_j\right).
\end{align}
Substituting \eqref{eq:diffop-1st} into \eqref{eq:pva-jac-K} we have that the $\lambda$-bracket defined by \eqref{eq:lambda-1} satisfies
the Jacobi identity \eqref{eq:pva-jac} if and only if the following identity holds
for every $i,j,k=1,\dots,\ell$:
\begin{equation}
\label{20240617:eq1}
\begin{split}
&\sum_{s=1}^\ell\sum_{n\in\mb Z}
\left(
\frac{\partial A^{kj}}{\partial u_n^s}
\mc S^{n}A^{si}
+
\frac{\partial A^{kj}}{\partial u_{n+1}^s}
\mc S^{n+1}B^{si}
-
\frac{\partial A^{kj}}{\partial u_{n+2}^s}
\mc S^{n+1}A^{is}
\right)\lambda^{n+1}\mu
\\
&+\sum_{s=1}^\ell\sum_{n\in\mb Z}
\left(
\frac{\partial B^{kj}}{\partial u_n^s}
\mc S^{n}A^{si}
+
\frac{\partial B^{kj}}{\partial u_{n+1}^s}
\mc S^{n+1}B^{si}
-
\frac{\partial B^{kj}}{\partial u_{n+2}^s}
\mc S^{n+1}A^{is}
\right)\lambda^{n+1}
\\
&-\sum_{s=1}^\ell\sum_{n\in\mb Z}
\mc S^{-1}\left(
\frac{\partial A^{jk}}{\partial u_{n}^s}
\mc S^{n}A^{si}
+
\frac{\partial A^{jk}}{\partial u_{n+1}^s}
\mc S^{n+1}B^{si}
-
\frac{\partial A^{jk}}{\partial u_{n+2}^s}
\mc S^{n+1}A^{is}
\right)\lambda^{n}\mu^{-1}
\\
&-\sum_{s=1}^\ell\sum_{n\in\mb Z}
\left(
\frac{\partial A^{ki}}{\partial u_n^s}
\mc S^{n}A^{sj}
+
\frac{\partial A^{ki}}{\partial u_{n+1}^s}
\mc S^{n+1}B^{sj}
-
\frac{\partial A^{ki}}{\partial u_{n+2}^s}
\mc S^{n+1}A^{js}
\right)\lambda\mu^{n+1}
\\
&-\sum_{s=1}^\ell\sum_{n\in\mb Z}
\left(
\frac{\partial B^{ki}}{\partial u_n^s}
\mc S^{n}A^{sj}
+
\frac{\partial B^{ki}}{\partial u_{n+1}^s}
\mc S^{n+1}B^{sj}
-
\frac{\partial B^{ki}}{\partial u_{n+2}^s}
\mc S^{n+1}A^{js}
\right)\mu^{n+1}
\\
&+\sum_{s=1}^\ell\sum_{n\in\mb Z}
\mc S^{-1}\left(
\frac{\partial A^{ik}}{\partial u_{n}^s}
\mc S^{n}A^{sj}
+
\frac{\partial A^{ik}}{\partial u_{n+1}^s}
\mc S^{n+1}B^{sj}
-
\frac{\partial A^{ik}}{\partial u_{n+2}^s}
\mc S^{n+1}A^{js}
\right)\lambda^{-1}\mu^{n}
\\
&=
\sum_{s=1}^\ell\sum_{n\in\mb Z}
\left(
A^{ks}\mc S^n\frac{\partial A^{ji}}{\partial u_{1-n}^s}
+
B^{ks}\mc S^n\frac{\partial A^{ji}}{\partial u_{-n}^s}
-
(\mc S^{-1}A^{sk})\mc S^n\frac{\partial A^{ji}}{\partial u_{-1-n}^s}
\right)\lambda^{n+1}\mu^{n}
\\
&+
\sum_{s=1}^\ell\sum_{n\in\mb Z}
\left(
A^{ks}\mc S^n\frac{\partial B^{ji}}{\partial u_{1-n}^s}
+
B^{ks}\mc S^n\frac{\partial B^{ji}}{\partial u_{-n}^s}
-
(\mc S^{-1}A^{sk})\mc S^n\frac{\partial B^{ji}}{\partial u_{-1-n}^s}
\right)\lambda^{n}\mu^{n}
\\
&-\sum_{s=1}^\ell\sum_{n\in\mb Z}
\left(
A^{ks}\mc S^{n}\frac{\partial A^{ij}}{\partial u_{1-n}^s}
+
B^{ks}\mc S^{n}\frac{\partial A^{ij}}{\partial u_{-n}^s}
-
(\mc S^{-1}A^{sk})\mc S^{n}\frac{\partial A^{ij}}{\partial u_{-1-n}^s}
\right)\lambda^{n}\mu^{n+1}\,.
\end{split}
\end{equation}
\begin{proposition}\label{prop:dep}
Let $K$ be a Hamiltonian difference operator of the form \eqref{eq:diffop-1st}. If $A$ is nondegenerate, then
\begin{equation}\label{eq:cond0}
A^{ij}=A^{ij}(u,u_1),\qquad\qquad B^{ij}=B^{ij}(u)
\,,
\end{equation}
for every $i,j=1,\dots,\ell$ and denoting $u_q=(u^1_q,\ldots,u^\ell_q)$.
\end{proposition}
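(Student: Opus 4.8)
The plan is to treat the Jacobi identity in its expanded form \eqref{20240617:eq1} as an equality of Laurent polynomials in the two formal parameters $\lambda,\mu$ with coefficients in $\mc A$, and to extract the coefficients of individual monomials $\lambda^a\mu^b$. The only non-combinatorial ingredient is the elementary observation that, for $M,N\in\Mat_{\ell\times\ell}(\mc A)$ with $N$ nondegenerate, the relation $MN=0$ forces $M=0$ (multiply by $\mathrm{adj}(N)$ and use that $\det N\neq0$ is not a zero divisor in $\mc A$). I shall apply this repeatedly with $N=\cS^pA$, which is again nondegenerate since $\det(\cS^pA)=\cS^p(\det A)\neq0$. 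Write $p_A$ and $m_A$ for the largest and smallest integers such that some $\frac{\partial A^{ij}}{\partial u^s_{p_A}}$, resp. $\frac{\partial A^{ij}}{\partial u^s_{m_A}}$, is nonzero, and define $p_B,m_B$ analogously for $B$; the goal is to prove $m_A\geq0$, $p_A\leq1$, and $m_B\geq0$, $p_B\leq0$.

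First I would bound the dependence of $A$ from above by collecting in \eqref{20240617:eq1} all monomials proportional to $\mu$ (the $\mu^1$-sector). On the left-hand side these come from the first sum, which carries the factor $\lambda^{n+1}\mu$ and hence every power $\lambda^{n+1}$, together with the single contributions of the fourth, fifth and sixth sums at $\lambda\mu$, $\lambda^0\mu$ and $\lambda^{-1}\mu$; on the right-hand side the $\mu^1$-sector contains only $\lambda^2\mu$, $\lambda\mu$ and $\lambda^0\mu$. Within the first sum the three pieces produce $\lambda$-degrees in the ranges $[m_A+1,p_A+1]$, $[m_A,p_A]$ and $[m_A-1,p_A-1]$ respectively, so the top degree $\lambda^{p_A+1}$ is attained only by the first piece, with coefficient $\sum_s\frac{\partial A^{kj}}{\partial u^s_{p_A}}\cS^{p_A}A^{si}$. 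If $p_A\geq2$ the monomial $\lambda^{p_A+1}\mu$ is matched by nothing else on either side, so its coefficient must vanish; reading the result for fixed $j$ as $M\,\cS^{p_A}A=0$ with $M^{ks}=\frac{\partial A^{kj}}{\partial u^s_{p_A}}$, nondegeneracy gives $M=0$, contradicting the maximality of $p_A$. Hence $p_A\leq1$.

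The remaining bounds follow from the same isolation of extreme powers. Taking instead the lowest $\lambda$-degree in the $\mu^1$-sector isolates, when $m_A\leq-1$ (so that $\lambda^{m_A-1}$ sits strictly below the lowest competing power $\lambda^{-1}$), the coefficient $\sum_s\frac{\partial A^{kj}}{\partial u^s_{m_A}}\cS^{m_A-1}A^{is}$; the same nondegeneracy argument yields $m_A\geq0$, whence $A^{ij}=A^{ij}(u,u_1)$. For $B$ I would repeat the procedure in the $\mu^0$-sector, where now the derivatives $\frac{\partial B^{kj}}{\partial u^s_n}$ live in the second sum (carrying $\lambda^{n+1}\mu^0$), while all other left-hand side terms and the entire right-hand side are confined to $\lambda^{\pm1}\mu^0$ and $\lambda^0\mu^0$. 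Isolating the top power $\lambda^{p_B+1}$ (for $p_B\geq1$) gives $\sum_s\frac{\partial B^{kj}}{\partial u^s_{p_B}}\cS^{p_B}A^{si}=0$ and hence $p_B\leq0$, while isolating the bottom power $\lambda^{m_B-1}$ (for $m_B\leq-1$) gives $m_B\geq0$; together these prove $B^{ij}=B^{ij}(u)$.

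The only delicate point is the bookkeeping underlying these isolations: one must verify, term by term against the explicit right-hand side and the $B$-dependent pieces of the first and second sums, that above $\lambda^2$ (resp. $\lambda^1$) and below $\lambda^{-1}$ in the $\mu^1$-sector (resp. $\mu^0$-sector) no monomial other than the single distinguished term can occur, so that the vanishing of the isolated coefficient is forced with no possibility of cancellation. Once this is checked, the nondegeneracy of $A$ -- equivalently of each shift $\cS^pA$ -- is exactly what upgrades the vanishing of these coefficients to the vanishing of the partial derivatives, giving \eqref{eq:cond0}.
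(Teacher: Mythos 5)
Your proposal is correct and takes essentially the same approach as the paper's proof: both treat the expanded Jacobi identity \eqref{20240617:eq1} as a Laurent polynomial in $\lambda,\mu$, isolate the extreme-degree monomials in the $\mu^{1}$- and $\mu^{0}$-sectors, and use nondegeneracy of the shifted matrix $\cS^{p}A$ (via the adjugate/determinant argument) to force the extreme partial derivatives of $A$ and $B$ to vanish, contradicting maximality/minimality of the shift indices. The only cosmetic difference is that for the lower bound on the dependence of $A$ you isolate $\lambda^{m_A-1}\mu$ in the $\mu^{1}$-sector, whereas the paper isolates the coefficient of $\lambda^{-j_1-2}\mu^{-1}$ coming from the $\mu^{-1}$-sector; both isolations are valid and give the same conclusion.
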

\begin{proof}
Since $K$ is Hamiltonian, equation \eqref{20240617:eq1} holds for every $i,j,k=1,\dots,\ell$.
Let us denote
\begin{align}\label{eq:pf1}
i_1&=\max\left\{i\,\Big|\;\exists \,l,m,n \text{ s.t. }\frac{\dev A^{lm}}{\dev u^n_i}\neq 0\right\}\\
i_0&=\max\left\{i\,\Big|\;\exists \,l,m,n \text{ s.t. }\frac{\dev B^{lm}}{\dev u^n_i}\neq 0\right\}\\\label{eq:pf2}
j_1&=\max\left\{j\,\Big|\;\exists \,l,m,n \text{ s.t. }\frac{\dev A^{lm}}{\dev u^n_{-j}}\neq 0\right\}\\
j_0&=\max\left\{j\,\Big|\;\exists \,l,m,n \text{ s.t. }\frac{\dev B^{lm}}{\dev u^n_{-j}}\neq 0\right\}.
\end{align}
First, let us suppose $i_0\geq 1$ and $i_1\geq2$ and consider the coefficient of $\lambda^{i_1+1}\mu$ in 
equation \eqref{20240617:eq1}. We obtain the identity
$$
\sum_{s=1}^\ell
\left(\frac{\dev A^{kj}}{\dev u^s_{i_1}}\cS^{i_1}A^{si}
+\frac{\dev A^{kj}}{\dev u^s_{i_1+1}}\cS^{i_1+1}B^{si}
-\frac{\dev A^{kj}}{\dev u^s_{i_1+2}}\cS^{i_1+1}A^{is}
\right)=0
\,,
$$
which, taken into account \eqref{eq:pf1}, reduces to
\begin{equation}\label{20240617:eq2}
\sum_{s=1}^\ell
\frac{\dev A^{kj}}{\dev u^s_{i_1}}\cS^{i_1}A^{si}=0
\,,
\end{equation}
for all $i, j, k=1,\dots,\ell$. If $A$ is nondegenerate, then \eqref{20240617:eq2} contradicts the definition of $i_1$ for $i_1\geq 2$, so we can conclude that $i_1\leq 1$. On the other hand, let us assume that $j_1\geq1$.
By computing the coefficient of $\lambda^{-j_1-2}\mu^{-1}$ in \eqref{20240617:eq1} we obtain the identity 
$$
\sum_{s=1}^\ell \frac{\dev A^{jk}}{\dev u^s_{-j_1}}\cS^{-j_1-1}A^{is}=0\,,
$$
which, if $A$ is nondegenerate, contradicts \eqref{eq:pf2}. Hence, we have that $j_1\leq 0$. Together with the constraint on $i_1$ this gives the first part of \eqref{eq:cond0}. 

Similarly, using the assumption that $A$ is nondegenerate, the same analysis for the coefficient of $\lambda^{i_0+1}\mu^0$ in \eqref{20240617:eq1} leads us to conclude that $i_0\leq 0$, while the same argument applied to the coefficient of $\lambda^{-j_0-1}\mu^0$ forces us to exclude $j_0\geq1$, giving the second part of \eqref{eq:cond0}.
\end{proof}
\begin{theorem}\label{thm:1st-ham}
Let $K$ be a difference operator of the form \eqref{eq:diffop-1st} such that
the conditions given by \eqref{eq:cond0} hold.
Then, $K$ is Hamiltonian if and only if
\begin{subequations}\label{eq:cond-comp}
\begin{gather}\label{eq:cond1}
\sum_{s=1}^\ell\left(
\frac{\dev A^{kj}}{\dev u^s_1} \cS A^{si}
-A^{ks}\cS\frac{\dev A^{ji}}{\dev u^s}
\right)
=0\,,\\
\label{eq:cond2}
\sum_{s=1}^\ell
\left(
\frac{\dev A^{kj}}{\dev u^s}A^{si}
+\frac{\dev A^{kj}}{\dev u^s_1}\cS B^{si}
-\frac{\dev A^{ki}}{\dev u^s}A^{sj}
-\frac{\dev A^{ki}}{\dev u^s_1}\cS B^{sj}
-A^{ks}\cS\frac{\dev B^{ji}}{\dev u^s}
\right)
=0\,,
\\
\label{eq:cond3}
\sum_{s=1}^\ell
\left(\frac{\dev A^{kj}}{\dev u^s_1}A^{is}
+\frac{\dev A^{kj}}{\dev u^s}B^{is}
-\frac{\dev A^{ij}}{\dev u^s_1}A^{ks}
-\frac{\dev A^{ij}}{\dev u^s}B^{ks}
+ \frac{\dev B^{ki}}{\dev u^s}A^{sj}\right)=0\,,\\
\label{eq:cond4}
\sum_{s=1}^\ell
\left(
\frac{\dev B^{kj}}{\dev u^s}B^{si}+\frac{\dev B^{ik}}{\dev u^s}B^{sj}+\frac{\dev B^{ji}}{\dev u^s}B^{sk}
\right)=0,
\end{gather}
\end{subequations}
for every $i,j,k=1,\dots,\ell$.
\end{theorem}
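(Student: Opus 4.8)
The plan is to read the theorem directly off the master identity \eqref{20240617:eq1}, which by the discussion preceding it is equivalent to $K$ being Hamiltonian: an operator of the form \eqref{eq:diffop-1st} is automatically skewadjoint (since $B=-B^T$), so only the Jacobi identity \eqref{eq:pva-jac-K} remains, and \eqref{20240617:eq1} is precisely its specialization to this class. First I would feed in the dependence hypotheses \eqref{eq:cond0}: because $\dev A^{lm}/\dev u^s_n=0$ unless $n\in\{0,1\}$ and $\dev B^{lm}/\dev u^s_n=0$ unless $n=0$, every sum over $n\in\Z$ in \eqref{20240617:eq1} collapses to a handful of nonzero terms, and the whole relation becomes a \emph{finite} Laurent polynomial identity in $\lambda$ and $\mu$. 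Hence $K$ is Hamiltonian if and only if the coefficient of each monomial $\lambda^a\mu^b$ vanishes, and the proof reduces to analysing this finite system.

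For the necessity of \eqref{eq:cond-comp} I would extract four distinguished coefficients. The extremal monomial $\lambda^2\mu$ picks up only the term $\dev A^{kj}/\dev u^s_1\,\cS A^{si}$ from the first line of \eqref{20240617:eq1} and, from the seventh line at $n=1$, the term $A^{ks}\cS(\dev A^{ji}/\dev u^s)$; equating them gives exactly \eqref{eq:cond1}. The coefficient of $\lambda\mu$ gathers the $u$-derivatives of $A$ together with the $B$-terms and reproduces \eqref{eq:cond2}. The coefficient of $\mu=\lambda^0\mu^1$, after rewriting $B^{si}=-B^{is}$, yields \eqref{eq:cond3}. Finally the coefficient of $\lambda^0\mu^0$ involves only $B$ and its $u$-derivatives, and using $B^{ik}=-B^{ki}$ and $B^{sk}=-B^{ks}$ it becomes the cyclic relation \eqref{eq:cond4}. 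This shows each of the four conditions is necessary.

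The substantial part is sufficiency: one must check that, once \eqref{eq:cond1}--\eqref{eq:cond4} hold, \emph{all} remaining coefficients of \eqref{20240617:eq1} vanish as well. The clean way to organise this is through the $\theta$-formalism, where \eqref{20240617:eq1} is equivalent to $[P_K,P_K]=0$ for the bivector $P_K=\int\!\big(A^{ij}\theta_i\theta_{j,1}+\tfrac12 B^{ij}\theta_i\theta_j\big)$ of \eqref{eq:bic-def}, an element of $\hF^3$. Since the Schouten bracket \eqref{eq:sch-def} is graded symmetric, this $3$-vector is totally antisymmetric in its three $\theta$-slots; its monomial coefficients are therefore \emph{not} independent but fall into orbits under the simultaneous permutation of the three index/shift pairs, combined with the skewadjointness relations \eqref{eq:diffop-skew} (which supply the ``antipodal'' symmetry $\lambda\mapsto(\lambda\cS)^{-1}$) and integration by parts. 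Assigning to $\lambda^a\mu^b$ the momentum triple $(a,b,-a-b)$, one checks that every surviving monomial lies in the orbit of exactly one of $(2,1,-3)$, $(1,1,-2)$, $(0,1,-1)$, $(0,0,0)$ — that is, of the four representatives producing \eqref{eq:cond1}--\eqref{eq:cond4}. I would make this explicit by computing the normalised representative $\mathcal{N}[P_K,P_K]$ via \eqref{eq:b-norm} and reading off a minimal generating set of coefficient equations, which turn out to be precisely the four already obtained.

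I expect the necessity direction and the collapse of the infinite sums to be routine; the main obstacle is the bookkeeping in the sufficiency step, namely verifying that each of the roughly a dozen surviving monomials reduces, under the antisymmetry of $[P_K,P_K]$ together with \eqref{eq:diffop-skew} and integration by parts, to one of the four representative equations rather than to an independent constraint. For instance the coefficient of $\lambda^{-1}\mu^{-2}$ must be shown to reduce to \eqref{eq:cond1} (via the antipode followed by the transposition of the $k,j$ slots) and the coefficient of $\lambda$ to \eqref{eq:cond3}. Once the orbit structure is laid out, no further computation is required and sufficiency follows; the care lies entirely in confirming that the four conditions generate the full system and not a proper subsystem.
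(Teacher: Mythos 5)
Your overall strategy coincides with the paper's: under \eqref{eq:cond0} the identity \eqref{20240617:eq1} collapses to a finite Laurent polynomial in $\lambda,\mu$, and one shows that the thirteen surviving coefficients, taken over all $i,j,k$, fall into four equivalence classes matching \eqref{eq:cond1}--\eqref{eq:cond4}. The paper does exactly this by direct computation, with the grouping $(\lambda^2\mu)^{\pm1},(\lambda\mu^2)^{\pm1},(\lambda\mu^{-1})^{\pm1}\leftrightarrow$\eqref{eq:cond1}, $\lambda\mu,\lambda^{-1},\mu^{-1}\leftrightarrow$\eqref{eq:cond2}, $\lambda,\mu,\lambda^{-1}\mu^{-1}\leftrightarrow$\eqref{eq:cond3}, constant term $\leftrightarrow$\eqref{eq:cond4}. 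Your necessity step is correct and agrees with this assignment.

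The gap is in the orbit bookkeeping you invoke for sufficiency. Attaching to $\lambda^a\mu^b$ the triple $(a,b,-a-b)$ is not the right invariant: the third slot of the Jacobiator carries the formal variable $(\lambda\mu\cS)^{-1}$, so the correct object is the triple $(a,b,0)$ taken modulo the diagonal $\mb Z(1,1,1)$, with $S_3$ acting on that quotient. With your normalization the orbits do \emph{not} coincide with the actual equivalence classes of coefficients: for instance $\lambda^{-1}\mapsto(-1,0,1)$ lies in the orbit of your representative $(0,1,-1)$, yet its coefficient vanishes iff \eqref{eq:cond2} holds, not \eqref{eq:cond3}; likewise $\lambda\mu^{-1}$ and $\lambda^{-1}\mu$ land in that same orbit although they are \eqref{eq:cond1}-coefficients; $\lambda^{-1}\mu^{-1}$ lands in the orbit of your representative $(1,1,-2)$ although it is a \eqref{eq:cond3}-coefficient; and the $S_3\times\{\pm1\}$-orbit of $(2,1,-3)$ contains monomials such as $\lambda^2\mu^{-3}$ that never occur in the collapsed identity. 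Hence the claimed check ``every surviving monomial lies in the orbit of exactly one of the four representatives'' fails as stated, and the sufficiency argument built on it does not go through -- indeed it contradicts your own (correct) coefficient extractions in the necessity step. The repair is straightforward: either use the correct quotient, under which the $S_3$-orbits of $(2,1,0)$, $(1,1,0)$, $(0,1,0)$, $(0,0,0)$ reproduce exactly the paper's four groups, or drop the orbit device altogether and verify directly, as the paper does, that each of the thirteen coefficients is (up to an overall shift and a relabeling of $i,j,k$) the left-hand side of one of \eqref{eq:cond1}--\eqref{eq:cond4}.
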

\begin{proof}
Using the assumption \eqref{eq:cond0}, by a straightforward computation, we have that the coefficients of 
$(\lambda^2\mu)^{\pm1}$, $(\lambda\mu^2)^{\pm1}$ and $(\lambda\mu^{-1})^{\pm1}$ vanish if and only if condition \eqref{eq:cond1} holds,
the coefficients of $\lambda\mu$, $\lambda^{-1}$ and $\mu^{-1}$ vanish if and only if condition \eqref{eq:cond2} holds,
the coefficients of $\lambda^{-1}\mu^{-1}$, $\lambda$ and $\mu$ vanish if and only if condition \eqref{eq:cond3} holds,
and the constant term (in $\lambda$ and $\mu$) vanishes if and only if condition \eqref{eq:cond4} holds.
\end{proof}
\begin{remark}\label{rmk:A0}
Consider $\mc \A_0$ of Remark~\ref{rmk:A0-def}. For every $f,g\in\mc \A_0$, let
\begin{equation}\label{20240618:eq1}
\{f,g\}=\sum_{i,j=1}^\ell \frac{\partial g}{\partial u^j}B^{ji}(u)\frac{\partial f}{\partial u^i}\in\mc A_0
\,.
\end{equation}
By Theorem \ref{thm:1st-ham}, $\mc A_0$ with the bracket \eqref{20240618:eq1} is a Poisson algebra.
\end{remark}
\begin{remark}\label{rem:DP}
As stated in \cite{Dub89} (see \cite{Parodi} for further details), in the case when the matrix $A$ is nondegenerate, the Hamiltonian operators of the form \eqref{eq:diffop-1st} are in one-to-one correspondence with admissible Poisson-Lie groups $(G,\rmat,\kmat)$ (here,
$\rmat\in\mathfrak g\otimes \mathfrak g$ and $\kmat\in\bigwedge^2\mathfrak g$, $\mathfrak g=\textrm{Lie}(G)$, are required to satisfy certain axioms).
Under this correspondence the matrices $A$ and $B$ take the form 
$$
A(u^l,u_1^l)=L(u^l) \rmat R(u_1^l)\,,
\qquad
B(u^l)=L(u^l) (\pi_G+\textrm{Ad}^{(2)}_{u^{-1}}\kmat) L(u^l)\,,
$$
where 
the columns of $L(u^l)$ (respectively, the rows of $R(u^l)$) define a basis of left invariant (respectively, right invariant) vector
fields on $G$ in a local chart $(u^1,\dots,u^\ell)$, $\pi_G$ is the Poisson structure of $G$, and $\textrm{Ad}^{(2)}$ denotes the adjoint action of $G$ on $\mathfrak g\otimes \mathfrak g$.
\end{remark}
\begin{remark}
When the matrix $A$ is nondegenerate, hence invertible, similarly to the proof of Theorem 2.5 of \cite{DSKVW18-1}, we can rewrite \eqref{eq:cond1} as
\begin{equation}\label{eq:cond1-bis}
\sum_{s=1}^\ell
\left(A^{-1}\right)_{is}\frac{ \dev A^{sj}}{\dev u^k_1}
=\sum_{s=1}^\ell
\cS\left(\frac{\dev A^{js}}{\dev u^i}\left(A^{-1}\right)_{sk}\right)
\,,
\end{equation}
for every $i,j,k=1,\dots,\ell$. However, in contrast with the scalar case ($\ell=1$), we cannot infer from \eqref{eq:cond1-bis} that $\dev_{u^k_1}\log A^{ij}=\cS\dev_{u^i}\log A^{jk}$, because $A$ and its derivatives do not necessarily commute (in particular, they \emph{do not} commute in the examples provided in \cite{kmw13}).
\end{remark}

\subsection{Normal forms for \texorpdfstring{$\ell=2$}{l=2} difference Hamiltonian operators}\label{sec:3.1}
In the following part of the paper, we focus on the $\ell=2$ component case. This is motivated by the richness of examples available in the literature (see \cite{kmw13} for a review) and in particular by the Toda lattice integrable system.

Consider a matrix difference operator $K\in\mc M_2(\mc A)$ as in \eqref{eq:diffop-1st}.
Throughout this section we assume that $\mc A$ has no zero-divisors and that $\mc C=\bar{\mc C}$ is a field. 

If $K$ is Hamiltonian, then Proposition \ref{prop:dep} fixes the 
dependency of $A$ and $B$ on the variables $u_q$, $q\in\mb Z$, when $A$ is nondegenerate. The classification for this class of 
operators has been studied in \cite{Dub89}. We present this result, for $\ell=2$, in Section \ref{sec:aff} in the non-Abelian case, and in Section  
\ref{sec:ab} in the Abelian case. For a degenerate matrix $A$, the dependence on the variables $u_1$ (to ensure that $K$ is Hamiltonian) provided by Proposition
 \ref{prop:dep} is not necessary. However, all the examples known to us (see \cite{kmw13}) satisfy it, 
therefore this is the case that we study comprehensively in Section \ref{sec:deg}. Finally, we present a few sparse examples of difference Hamiltonian operators with 
degenerate leading terms with more general dependency on the variables than the one required by Proposition \ref{prop:dep}.

\subsubsection{The Hamiltonian difference operator corresponding to \texorpdfstring{$\aff(\mb R)$}{Aff(R)}}\label{sec:aff}
As it is showed in \cite{Dub89}, this is the normal form -- depending on several parameters -- of non-constant Hamiltonian operators
$K$ as in \eqref{eq:diffop-1st} with nondegenerate leading term.

Let $G=\aff(\mb R)$ be the Lie group of affine transformations of $\mb R$. It is a two-dimensional non-Abelian
Lie group that we can represent as the matrix group
$$
G=\left\{\begin{pmatrix}u&v\\0&1\end{pmatrix}\Big| u,v\in\mb R, u\neq0\right\}\subset GL_2(\mb R)
\,.
$$
Its corresponding Lie algebra is the two-dimensional non-Abelian Lie algebra
$$
\mathfrak g=\left\{\begin{pmatrix}u&v\\0&0\end{pmatrix}\Big| u,v\in\mb R\right\}\subset gl_2(\mb R)
\,.
$$
We denote by $e_1=\begin{pmatrix}1&0\\0&0\end{pmatrix}$ and $e_2=\begin{pmatrix}0&1\\0&0\end{pmatrix}$ a choice for a basis of $\mf g$.
A basis of the space of left-invariant (resp. right-invariant) vector fields on $G$ is given by
$$
L_1=u\frac{\partial}{\partial u}\,,
\quad
L_2=u\frac{\partial}{\partial v}\,.
\quad
(\text{resp. } R_1=u\frac{\partial}{\partial u}+v\frac{\partial}{\partial v}\,,
\quad R_2=\frac{\partial}{\partial v}\,.
)
$$
Let $\rmat=\begin{pmatrix}a& b\\c& d\end{pmatrix}\in gl_2(\mb R)$
(note that, as vector spaces, $\mf g\otimes\mf g\cong gl_2(\mb R)$), then the matrix $A=A(u,v,u_1,v_1)$ from Remark \ref{rem:DP}
takes the form
\begin{equation}\label{eq:Aaff}
A=\begin{pmatrix}u&0\\0&u\end{pmatrix}
\begin{pmatrix}a&b\\c&d\end{pmatrix}
\begin{pmatrix}u_1&v_1\\0&1\end{pmatrix}
=
\begin{pmatrix}auu_1&u(a v_1+b)\\cuu_1&u(cv_1+d)\end{pmatrix}
\,.
\end{equation}
Consider also the Lie algebra $\mf g^*=\mb Re_1^*\oplus \mb Re_2^*$ with Lie bracket 
\begin{equation}\label{eq:20240621-gstar}
[e_1^*,e_2^*]=\alpha e_1^*+\beta e_2^*,\quad \alpha,\beta\in\mb R.
\end{equation} The corresponding bivector $\pi_G$ takes the form
$$
\pi_G=\begin{pmatrix}
0 & -\alpha(1-\frac1u)-\beta\frac vu\\\alpha(1-\frac1u)+\beta\frac vu&0
\end{pmatrix}
\,.
$$
Let $\kmat=\begin{pmatrix}0&\gamma\\-\gamma&0\end{pmatrix}\in so_2(\mb R)$
(note that, as vector spaces, $\bigwedge^2\mf g\cong so_2(\mb R)$). Then, the matrix $B=B(u,v)$ in Remark \ref{rem:DP}
takes the form
\begin{align}\notag
B&=\begin{pmatrix}u&0\\0&u\end{pmatrix}
\begin{pmatrix}
0 & -\alpha(1-\frac1u)-\beta\frac vu-\frac{\gamma}{u}\\\alpha(1-\frac1u)+\beta\frac vu+\frac{\gamma}{u}&0
\end{pmatrix}
\begin{pmatrix}u&0\\0&u\end{pmatrix}
\\\label{eq:Baff}
&=
\begin{pmatrix}
0 & -\alpha(u^2-u)-\beta uv-\gamma u\\\alpha(u^2-u)+\beta uv+ \gamma u&0
\end{pmatrix}
\,.
\end{align}
By Theorem \ref{thm:1st-ham}, the difference operator $K=A\mc S+B-\mc S^{-1}\circ A^T$ is Hamiltonian if and only if
the following conditions are met
\begin{equation}\label{20240618:eq2}
\left\{\begin{array}{l}
a\alpha+c\beta=0\,,
\\
a(\alpha-\gamma)+b\beta=0\,,
\\
b \alpha +d\beta=ad -bc\,,
\\
c(\alpha-\gamma)+d\beta=ad-bc\,.
\end{array}\right.
\end{equation}
The conditions in \eqref{20240618:eq2} are satisfied if and only if the matrix $\rmat$ defines a Lie algebra homomorphism
from $\mf g^*$ (with Lie bracket \eqref{eq:20240621-gstar}) to $\mf g$, and $\rmat^*$ defines a Lie algebra
homomorphism from $\mf g^*$ (with Lie bracket $[e_1^*,e_2^*]\, \tilde{} =(\alpha-\gamma) e_1^*+\beta e_2^*$) to $\mf g$
in agreement with the results from \cite{Dub89} mentioned in Remark \ref{rem:DP}.

For the simplest choice $\alpha=0$, $\beta=1$ in \eqref{eq:20240621-gstar}, and $\kmat=0$,  conditions \eqref{20240618:eq2} are satisfied for  $\rmat=\id_2$ (the identity matrix). In this case the Hamiltonian operator $K=A\mc S+B=\mc S^{-1}\circ A^t$ becomes
$$
K=
\begin{pmatrix}
u(\mc S-\mc S^{-1})\circ u& u(\mc S-1)\circ v\\v(1-\mc S^{-1})\circ u&u\mc S-\mc S^{-1}\circ u
\end{pmatrix}
\,,
$$
which is the well-known second Hamiltonian structure of the Toda lattice (see for example \cite{kmw13}).

\subsubsection{The Hamiltonian difference operator corresponding to the \texorpdfstring{$2$}{2}-dimensional Abelian lie group}\label{sec:ab}
Let
$$
G=\left\{\begin{pmatrix}u&0\\0&v\end{pmatrix}\Big| u,v\in\mb R, uv\neq0\right\}\subset GL_2(\mb R)
$$
be the 2-dimensional Abelian Lie group and let $\mf g=\mb R e_1\oplus\mb R e_2$ be its Lie algebra (which is Abelian as well). In this case, a basis of the space of left-invariant (as well as right-invariant) vector fields is the basis of coordinate vector fields $\frac{\partial}{\partial u}$ and $\frac{\partial}{\partial v}$. Then the matrix $A$ from Remark
\ref{rem:DP} is simply 
\begin{equation}\label{eq:Aabel}
A=\rmat=\begin{pmatrix} a & b \\c & d\end{pmatrix},\qquad\qquad\det A\neq 0.
\end{equation}
Following the procedure in \cite{Dub89}, we obtain that the bivector $\pi_G$ corresponding to the Lie algebra $\mf g^*$ with Lie bracket
\eqref{eq:20240621-gstar} is
$$
\pi_G=
\begin{pmatrix}
0& \alpha u+\beta v
\\
-\alpha u-\beta v & 0
\end{pmatrix}
\,.
$$
Let $\kmat=\begin{pmatrix}0&\gamma\\-\gamma&0\end{pmatrix}\in so_2(\mb R)$
Then, the matrix $B=B(u,v)$ in Remark \ref{rem:DP}
takes the form
\begin{align}\notag
B&=
\begin{pmatrix}
0 & \alpha u+\beta v+\gamma
\\
-\alpha u -\beta v-\gamma&0
\end{pmatrix}
\,.
\end{align}
According to the results in \cite{Dub89}, $K=A\mc S+B-\mc S^{-1}\circ A^T$ is a difference Hamiltonian operator if and only if the matrices $\rmat$ 
and $\rmat^*$ define a Lie algebra homomorphism from $\mf g^*$ with Lie bracket \eqref{eq:20240621-gstar} to the Abelian Lie algebra $\mf g$. Since both $\rmat$ and $\rmat^*$ are invertible, this condition forces $\alpha=\beta=0$. Hence,
$$
B=\begin{pmatrix}0 & \gamma \\ -\gamma & 0\end{pmatrix}\,,
\quad\gamma\in\mb R\,.
$$
On the other hand, let $A$ be as in \eqref{eq:Aabel} and let
$$
B=\begin{pmatrix}
0& f(u,v)\\
-f(u,v) & 0
\end{pmatrix}
\,.
$$
From \eqref{eq:cond1}-\eqref{eq:cond4} we have that $K=A\mc S+B-\mc S^{-1}\circ A^T$ is a difference Hamiltonian operator if and only if
$$
A\begin{pmatrix}
\frac{\partial f}{\partial u}\\
\frac{\partial f}{\partial v}
\end{pmatrix}
=A^T
\begin{pmatrix}
\frac{\partial f}{\partial u}\\
\frac{\partial f}{\partial v}
\end{pmatrix}
=0\,.
$$
Since, by assumption, $A$ is nondegenerate, we then have $f=\gamma\in\mb R$ in agreement with the results of \cite{Dub89}.

\subsubsection{The case when the matrix \texorpdfstring{$A$}{A} is degenerate}\label{sec:deg}

We study the Hamiltonian difference operators $K$ of the form \eqref{eq:diffop-1st}
with a matrix  $A$ which is degenerate and satisfies \eqref{eq:cond0}, namely, $A=A(u,v,u_1,v_1)$, where we are denoting $u^1=u$ and $u^2=v$.
First, we prove the following technical result.
\begin{lemma}\label{lem:col}If $A$ is a degenerate matrix of the form
\begin{equation}\label{eq:Adeg1}
A=\begin{pmatrix} a_{11}(u,v,u_1,v_1) & \mu(u_1,v_1)a_{11}(u,v,u_1,v_1)\\a_{21}(u,v,u_1,v_1) & \mu(u_1,v_1)a_{21}(u,v,u_1,v_1)\end{pmatrix},
\end{equation}
then there exists a point transformation leading it to the form
\begin{equation}\label{eq:Adeg-red}
\widetilde{A}=\begin{pmatrix}\widetilde{a}_{11}(u,v,u_1,v_1)&0\\\widetilde{a}_{21}(u,v,u_1,v_1)& 0\end{pmatrix}.
\end{equation}
\end{lemma}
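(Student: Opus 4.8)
The plan is to track only how the leading coefficient $A$ transforms under a point transformation; note that the Hamiltonian property of $K$ plays no role, so this is a purely algebraic reduction. Given a candidate point transformation $\phi$ with (order-zero, invertible) Jacobian $J=J_\phi=\bigl(\partial\phi^i/\partial u^j\bigr)_{i,j=1}^2$, formula \eqref{eq:tildeK} gives $\widetilde K=J\,K\,J^T$. Extracting the coefficient of $\mc S$ on both sides, using $\mc S\circ M=\mc S(M)\,\mc S$ and the fact that $J$ has order $0$, I read off that the transformed leading matrix is
\begin{equation*}
\widetilde A=J\,A\,\mc S(J^T).
\end{equation*}
The structural feature of \eqref{eq:Adeg1} I would exploit is that it is an outer product: setting $C=(a_{11},a_{21})^T$ and $m:=\mc S^{-1}\mu=\mu(u,v)\in\mc A_0$, so that $\mu=\mc S(m)$, we have $A=C\,(1,\mu)$, whence $A$ has rank at most one, which is exactly its degeneracy.

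Next I would compute the second column of $\widetilde A$ explicitly. Since $\mc S(J^T)$ has second column $(\mc S J_{21},\,\mc S J_{22})^T$, a direct computation gives
\begin{equation*}
(\text{second column of }\widetilde A)=\mc S\bigl(J_{21}+m\,J_{22}\bigr)\,(J\,C).
\end{equation*}
Because $\mc A$ has no zero-divisors and $J$ is invertible, $J\,C\neq0$ unless $A\equiv0$; as $\mc S$ is an automorphism, the second column therefore vanishes precisely when
\begin{equation*}
J_{21}+m\,J_{22}=\frac{\partial\phi^2}{\partial u}+m(u,v)\,\frac{\partial\phi^2}{\partial v}=0.
\end{equation*}
The crucial point here is that $m=\mu(u,v)$ lies in $\mc A_0$, i.e. depends only on the unshifted variables, so this is a constraint on a genuine point transformation: a linear first-order PDE for $\phi^2$ with coefficients in $\mc A_0$.

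I would then solve this PDE by the method of characteristics. The vector field $\partial_u+m\,\partial_v$ has unit $\partial_u$-component, hence is nowhere vanishing and admits (locally) a non-constant first integral, which I take as $\phi^2$. Choosing $\phi^1$ to complete $(\phi^1,\phi^2)$ to a coordinate system — for instance $\phi^1=u$, which yields $\det J=\partial_v\phi^2\neq0$ since $\phi^2$ is transverse to the characteristics — produces an injective homomorphism $\phi$ with nondegenerate order-zero Jacobian, i.e. a point transformation in the sense of Definition \ref{def:pt}. By construction $\widetilde A=J\,A\,\mc S(J^T)$ has vanishing second column, which is precisely the form \eqref{eq:Adeg-red} with $\widetilde a_{i1}=(J\,C)_i\,\mc S\bigl(J_{11}+m\,J_{12}\bigr)$ for $i=1,2$.

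The step I expect to be the main obstacle is this last one: guaranteeing that the first integral $\phi^2$ genuinely exists and can be completed to an invertible change of variables \emph{within} the relevant algebra of difference functions, so that $\phi$ is a bona fide point transformation rather than a merely formal solution of the PDE. This is exactly where the standing assumptions of Section \ref{sec:3.1} (no zero-divisors, $\mc C=\bar{\mc C}$ a field) and the hypothesis that $\mu$ depends only on $(u_1,v_1)$ — which is what forces $m\in\mc A_0$ — are used.
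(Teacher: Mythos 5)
Your proposal is correct and follows essentially the same route as the paper's proof: extract the $\mc S$-coefficient $\widetilde A=J\,A\,\mc S(J^T)$ under a point transformation, use the column-proportional (rank-one) structure of $A$ to see that the second column of $\widetilde A$ carries the common factor $\mc S\bigl(\partial_u\phi^2+\mu(u,v)\,\partial_v\phi^2\bigr)$, and kill it by solving the homogeneous linear PDE $\partial_u\phi^2+\mu(u,v)\,\partial_v\phi^2=0$ by characteristics. Your additions — the outer-product notation, the ``if and only if'' direction via the no-zero-divisor hypothesis, and the explicit completion $\phi^1=u$ with $\det J=\partial_v\phi^2\neq0$ — are minor refinements of, not departures from, the paper's argument.
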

\begin{proof} By \eqref{eq:tildeK}, the transformed difference operator $\tilde{K}$ is of the same order, and in particular the matrix $A$ of the form \eqref{eq:Adeg1} is transformed into
\begin{equation}
\widetilde{A}=\begin{pmatrix}\widetilde{a}_{11} & \widetilde{a}_{12}\\\widetilde{a}_{21} & \widetilde{a}_{22}\end{pmatrix}
\end{equation}
with
\begin{subequations}
\begin{align}
\widetilde{a}_{12}&
=\left(\frac{\dev \widetilde{u}}{\dev u}a_{11}+\frac{\dev \widetilde{u}}{\dev v}a_{21}\right)\mc S
\left(\frac{\dev \widetilde{v}}{\dev u}+\mu(u,v)\frac{\dev \widetilde{v}}{\dev v}\right)\,,
\label{eq1}\\
\widetilde{a}_{22}&
=\left(\frac{\dev \widetilde{v}}{\dev u}a_{11}+\frac{\dev \widetilde{v}}{\dev v}a_{21}\right)\mc S
\left(\frac{\dev \widetilde{v}}{\dev u}+\mu(u,v)\frac{\dev \widetilde{v}}{\dev v}\right)
\label{eq2}\,.\end{align}
\end{subequations}
From \eqref{eq1} and \eqref{eq2} we can get $\widetilde{a}_{12}=\widetilde{a}_{22}=0$ if
\begin{equation}\label{eq:Adeg-ch}
\frac{\dev\tilde{v}}{\dev u}+\mu(u,v)\frac{\dev\tilde{v}}{\dev v}=0.
\end{equation}
Equation \eqref{eq:Adeg-ch} is a homogeneous linear PDE for the function $\widetilde{v}=\widetilde{v}(u,v)$ that can be solved, for example, with the method of characteristics. 
 \end{proof}

Observe that a generic degenerate matrix $A$, depending on the aforementioned variables $(u,v,u_1,v_1)$ can be of the form \eqref{eq:Adeg-red} -- possibly after the swapping of columns --, of its transposed, or of the form
\begin{equation}\label{eq:Adeg-gen}
A=\begin{pmatrix} a_{11}(u,v,u_1,v_1) & a_{12}(u,v,u_1,v_1)\\\mu(u,v,u_1,v_1)a_{11}(u,v,u_1,v_1) & \mu(u,v,u_1,v_1)a_{12}(u,v,u_1,v_1)\end{pmatrix}
\end{equation}
with $a_{11}, a_{12}\neq 0$. The first step in our classification of Hamiltonian operators with degenerate leading term is showing that, in some coordinate system, \eqref{eq:Adeg-red} is indeed the only possible form. First of all, note that we can switch from the matrix $A$ to $A^{T}$ by exchanging $(\cS,u^l_n)\leftrightarrow(\cS^{-1},u^l_{-n})$, as it can be deduced from the form of \eqref{eq:diffop-1st}. Then, observe that $\det A=0$ means $a_{11}a_{22}-a_{12}a_{21}=0$. If $a_{12}=0$, then $a_{11}$ or $a_{22}$ must be equal to 0 too, therefore obtaining the ``one-column'' (i.e. \eqref{eq:Adeg-red}) or ``one-row'' case (by transposition of \eqref{eq:Adeg-red}). Finally, if $a_{12}\neq 0$, we can solve for $a_{21}$ and obtain $a_{21}=a_{11}a_{22}/a_{12}=\mu a_{11}$; we can also rewrite $a_{22}=\mu\, a_{12}$ therefore obtaining \eqref{eq:Adeg-gen}. Again, note that if $a_{11}$ vanishes we are back to the ``one-column'' case.

\begin{lemma}\label{lem:colfactor} Let $(A,B)$ define a Hamiltonian difference operator $K$ of the form \eqref{eq:diffop-1st} where $A$ is of the form \eqref{eq:Adeg-gen}.  Then
\begin{equation}\label{0110:eq1}
a_{12}(u,v,u_1,v_1)=\lambda(u_1,v_1)a_{11}(u,v,u_1,v_1)
\,,
\end{equation}
for some function $\lambda(x,y)$. Namely,
$A$ can be expressed as
$$
A=\begin{pmatrix} a_{11} & \lambda(u_1,v_1) a_{11}\\ \mu\, a_{11} & \lambda(u_1,v_1) \mu a_{11}\end{pmatrix},
$$
where we omit to denote the dependency on $(u,v,u_1,v_1)$ of the functions $a_{11}$ and $\mu$.
\end{lemma}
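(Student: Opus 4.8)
The plan is to exploit the rank-one structure forced by \eqref{eq:Adeg-gen}. Writing $u^1=u$, $u^2=v$ and $A^{ij}$ for the entries, the matrix \eqref{eq:Adeg-gen} factorises as $A^{ij}=p^iq^j$ with $p=(1,\mu)^T$ and $q=(a_{11},a_{12})^T$, all functions of $(u,v,u_1,v_1)$. Since $\mc A$ has no zero-divisors and $a_{11}\neq0$, I may set $\lambda=a_{12}/a_{11}$ in the field of fractions of $\mc A$; because $a_{11},a_{12}$ depend \emph{a priori} only on $(u,v,u_1,v_1)$, proving \eqref{0110:eq1} amounts to showing $\frac{\dev\lambda}{\dev u}=\frac{\dev\lambda}{\dev v}=0$.

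Next I would substitute the factorisation $A^{ij}=p^iq^j$ into the Hamiltonicity condition \eqref{eq:cond1}, specialised to $k=j=1$. This is the advantageous choice: with $k=1$ one has $A^{1s}=q^s$ and $A^{1j}=q^j$ (no stray $\mu$-factors), and with $j=1$ one has $\cS p^1=1$ and $\frac{\dev p^1}{\dev u^s}=0$, which removes all derivatives of $\mu$. The two instances $i=1,2$ then collapse to the pair of identities
\begin{align*}
a_{11}\,\cS\tfrac{\dev a_{11}}{\dev u}+a_{12}\,\cS\tfrac{\dev a_{11}}{\dev v}&=(\cS a_{11})\,\Phi\,,\\
a_{11}\,\cS\tfrac{\dev a_{12}}{\dev u}+a_{12}\,\cS\tfrac{\dev a_{12}}{\dev v}&=(\cS a_{12})\,\Phi\,,
\end{align*}
where $\Phi=\tfrac{\dev a_{11}}{\dev u_1}+(\cS\mu)\tfrac{\dev a_{11}}{\dev v_1}$ is the \emph{same} auxiliary quantity in both. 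The point is that $\Phi$ is an unknown but appears only as an overall factor, so it can be eliminated by cross-multiplication: using that $\cS$ is an automorphism, $\cS a_{12}=(\cS\lambda)(\cS a_{11})$, the two identities give $\Theta^2=(\cS\lambda)\,\Theta^1$, where $\Theta^i=a_{11}\,\cS\tfrac{\dev a_{1i}}{\dev u}+a_{12}\,\cS\tfrac{\dev a_{1i}}{\dev v}$. Substituting $a_{12}=\lambda a_{11}$, applying the Leibniz rule together with $\cS$-equivariance, and cancelling the factor $\cS a_{11}\neq0$, everything reduces to
$$
\cS\frac{\dev\lambda}{\dev u}+\lambda\,\cS\frac{\dev\lambda}{\dev v}=0\,.
$$

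The finishing argument turns on a separation of variables. In the displayed relation $\lambda$ depends on $(u,v,u_1,v_1)$, whereas $\cS\frac{\dev\lambda}{\dev u}$ and $\cS\frac{\dev\lambda}{\dev v}$ depend on $(u_1,v_1,u_2,v_2)$ and hence are \emph{independent} of $(u,v)$. Differentiating the relation with respect to $v$ kills the first term and the coefficient $\cS\frac{\dev\lambda}{\dev v}$, leaving $\frac{\dev\lambda}{\dev v}\cdot\cS\frac{\dev\lambda}{\dev v}=0$. Since $\mc A$ (hence its field of fractions) has no zero-divisors and $\cS$ is injective, $\frac{\dev\lambda}{\dev v}\neq0$ would force its shift to be nonzero and the product nonzero; therefore $\frac{\dev\lambda}{\dev v}=0$. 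Plugging this back gives $\cS\frac{\dev\lambda}{\dev u}=0$, whence $\frac{\dev\lambda}{\dev u}=0$, and so $\lambda=\lambda(u_1,v_1)$, which is \eqref{0110:eq1}.

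I expect the main obstacle to be purely organisational: carrying out the rank-one substitution in \eqref{eq:cond1} and recognising that the index choice $k=j=1$ is the one that isolates a single equation with a common, removable factor $\Phi$ (other choices produce $\mu$-derivative clutter and genuinely coupled relations). The only genuinely delicate point is the last step, where the conclusion hinges on the fact that $\lambda$ and its $\cS$-shifted derivatives live in disjoint sets of jet variables, so that differentiation in $v$ combined with the domain hypothesis forces the derivatives to vanish rather than merely satisfy an algebraic relation.
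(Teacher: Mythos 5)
Your proof is correct and takes essentially the same route as the paper's: you use the same two specializations of \eqref{eq:cond1} (indices $k=j=1$, $i=1,2$), eliminate the same common prefactor $\Phi$, and your resulting relation $\cS\frac{\dev\lambda}{\dev u}+\lambda\,\cS\frac{\dev\lambda}{\dev v}=0$ is precisely the paper's identity \eqref{eq:lem8-pf3} rewritten in terms of $\lambda=a_{12}/a_{11}$. The only (minor) divergence is in the finish: where the paper splits into cases (both log-derivative brackets vanish, or neither does) and concludes from the mismatch of jet-variable dependency, you differentiate the relation in $v$ and invoke the no-zero-divisor hypothesis together with injectivity of $\cS$ --- a slightly more uniform packaging of the same underlying fact that shifted and unshifted variables are independent.
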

\begin{proof}
For $i=j=k=1$ and $i=2,j=k=1$, condition \eqref{eq:cond1} yields 
\begin{gather}\label{eq:lem8-pf1}
\left(\frac{\dev a_{11}}{\dev u_1}+\frac{\dev a_{11}}{\dev v_1}\cS\mu\right)\cS a_{11}-a_{11}\cS\frac{\dev a_{11}}{\dev u}-a_{12}\cS\frac{\dev a_{11}}{\dev v}=0\\\label{eq:lem8-pf2}
\left(\frac{\dev a_{11}}{\dev u_1}+\frac{\dev a_{11}}{\dev v_1}\cS\mu\right)\cS a_{12}-a_{11}\cS\frac{\dev  a_{12}}{\dev u}-a_{12}\cS\frac{\dev a_{12}}{\dev v}=0.
\end{gather}
Since we can assume that both $a_{11}$ and $a_{12}$ are different from 0, we may divide \eqref{eq:lem8-pf1} by $\cS a_{11}$ and \eqref{eq:lem8-pf2} by $\cS a_{12}$ and observe that the first two terms in both the equations are the same. Equating the remaining ones we obtain
\begin{equation}\label{eq:lem8-pf3}
a_{11}\cS\left(\frac{\dev_u a_{11}}{a_{11}}-\frac{\dev_u a_{12}}{a_{12}}\right)=a_{12}\cS\left(\frac{\dev_v a_{12}}{a_{12}}-\frac{\dev_v a_{11}}{a_{11}}\right).
\end{equation}
Observe that, since we assume that $a_{11}$ and $a_{12}$ are not vanishing, if the parenthesis in the LHS of \eqref{eq:lem8-pf3} vanishes, the parenthesis in the RHS of \eqref{eq:lem8-pf3} vanishes as well. Hence, \eqref{0110:eq1} holds.
If one of the two brackets does not vanish, then the other does not vanish as well and we can rewrite \eqref{eq:lem8-pf3} as
$$\frac{a_{11}}{a_{12}}=\cS\left(\frac{\frac{\dev_v a_{12}}{a_{12}}-\frac{\dev_v a_{11}}{a_{11}}}{\frac{\dev_u a_{11}}{a_{11}}-\frac{\dev_u a_{12}}{a_{12}}}\right)
$$
whose LHS can only depend on $(u,v,u_1,v_1)$, while the RHS depends on $(u_1,v_1,u_2,v_2)$.  This would imply that the ratio between $a_{11}$ and $a_{12}$ must be a function of $(u_1,v_1)$ only as in \eqref{0110:eq1}.
\end{proof}
By Lemmas \ref{lem:col} and \ref{lem:colfactor}, we can then focus on solving the full set of conditions \eqref{eq:cond1}-\eqref{eq:cond4} for $A$ as in \eqref{eq:Adeg-red}. One further simplification of the problem can be derived from \eqref{eq:cond1} for $i=k=1,j=2$, which yields $\partial_u a_{21}=0$. This means that we now need to specialise the set of equations \eqref{eq:cond-comp} to the simpler case
\begin{align}\label{eq:Adeg-simple}
A&=\begin{pmatrix}a_{11}(u,v,u_1,v_1)&0\\a_{21}(v,u_1,v_1)&0\end{pmatrix}&B&=\begin{pmatrix}0&b(u,v)\\-b(u,v)&0\end{pmatrix}.
\end{align}

The direct solution of the sets of equation \eqref{eq:cond-comp} for this choice is now still quite a cumbersome endeavour, which splits in the two subcases for $b\neq 0$ and $b=0$. Our approach is the following: first we find all the possible Hamiltonian difference operator $K$ of the form \eqref{eq:diffop-1st}, depending on a number of arbitrary functions -- this is the content of Propositions~\ref{prop:deg-B} and \ref{prop:deg-noB}; then we look for point transformations which simplify their expressions, fixing as many as these functional parameters as possible -- the final result is given in Theorem~\ref{thm:normal}.
\begin{proposition}\label{prop:deg-B}
Let $A$ and $B$ be as in equation \eqref{eq:Adeg-simple} and let us assume that $b\neq0$. Then, the pair  $(A,B)$ defines a Hamiltonian difference operator of the form \eqref{eq:diffop-1st} if and only if it is of one of the following two forms:
\begin{enumerate}[(1)]
\item \begin{align}\label{eq:Adeg-withb}
A&=\begin{pmatrix}f(u,v)f(u_1,v_1)g(v)F(u,v)&0\\
f(u_1,v_1)g(v)&0\end{pmatrix},&B&=\begin{pmatrix} 0 & \kappa f(u,v) g(v) \\ -\kappa f(u,v)g(v)&0\end{pmatrix}
\end{align}
where $F(u,v)$ is a solution of
\begin{equation}\label{eq:Adeg-withb-defF}
f(u,v)^2\dev_u F(u,v)=\dev_v f(u,v),
\end{equation}
for arbitrary nonvanishing constant $\kappa$, a single-variable function $g(x)$ and a two-variable function $f(x,y)$.
\item \begin{align}\label{eq:Adeg-withb-special}
A&=\begin{pmatrix}f(u,v)f(u_1,v_1)g(v,v_1)&0\\
0&0\end{pmatrix},&B&=\begin{pmatrix} 0 &  f(u,v) k(v) \\ - f(u,v)k(v)&0\end{pmatrix}
\end{align}
for an arbitrary single-variable function $k(x)$ and two two-variable functions $f(x,y)$ and $g(x,y)$.
\end{enumerate}
\end{proposition}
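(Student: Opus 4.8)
The plan is to substitute the ansatz \eqref{eq:Adeg-simple} into the Hamiltonianity conditions \eqref{eq:cond1}--\eqref{eq:cond4} and solve the resulting system, taking advantage of the fact that most entries of $A$ and $B$ vanish (throughout $u^1=u$, $u^2=v$). First I record the simplifications special to $\ell=2$. Condition \eqref{eq:cond4} holds automatically: it is the Jacobi identity for the skew-symmetric bivector $B=B(u,v)$ in two variables, which is always Poisson (cf.\ Remark \ref{rmk:A0}), and indeed no choice of $i,j,k\in\{1,2\}$ yields a nonzero left-hand side. Next, \eqref{eq:cond2} is antisymmetric under $i\leftrightarrow j$ and \eqref{eq:cond3} under $i\leftrightarrow k$, so only the index choices with $\{i,j\}=\{1,2\}$, respectively $\{i,k\}=\{1,2\}$, are independent. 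Using in addition $\partial_u a_{21}=0$ (already obtained before the statement), a direct computation collapses the whole system to five scalar equations. From \eqref{eq:cond1} (indices $(1,1,1)$ and $(1,1,2)$):
\begin{gather*}
(\partial_{u_1}a_{11})\,\cS a_{11}+(\partial_{v_1}a_{11})\,\cS a_{21}-a_{11}\,\cS(\partial_u a_{11})=0,\\
(\partial_{u_1}a_{21})\,\cS a_{11}+(\partial_{v_1}a_{21})\,\cS a_{21}-a_{21}\,\cS(\partial_u a_{11})=0;
\end{gather*}
from \eqref{eq:cond2} (indices $(1,2,1)$ and $(1,2,2)$):
$$
a_{11}\,\cS(\partial_u b)=(\partial_{u_1}a_{11})\,\cS b,\qquad a_{21}\,\cS(\partial_u b)=(\partial_{u_1}a_{21})\,\cS b;
$$
and from \eqref{eq:cond3} (index $(1,1,2)$) a single relation among $a_{11},a_{21},b$ and their first-order derivatives with no shifts.

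The two equations from \eqref{eq:cond2} are the entry point. Since $\cS(\partial_u b)=\partial_{u_1}(\cS b)$, they read $\partial_{u_1}\!\big(\cS b/a_{11}\big)=0$ and $\partial_{u_1}\!\big(\cS b/a_{21}\big)=0$; recalling $\cS b=b(u_1,v_1)$, this says that $a_{11}$ and, when it does not vanish, $a_{21}$ are obtained from $b(u_1,v_1)$ by dividing by factors independent of $u_1$, i.e.\ $a_{11}=b(u_1,v_1)/\omega(u,v,v_1)$ and $a_{21}=b(u_1,v_1)/\chi(v,v_1)$. I would then split on whether $a_{21}\equiv 0$. If $a_{21}\equiv 0$, the second \eqref{eq:cond1}-equation is void, the \eqref{eq:cond3}-relation reduces to $b\,\partial_u a_{11}-a_{11}\partial_u b=0$ (so $a_{11}/b$ is independent of $u$), and combining this with the $u_1$-independence above forces $a_{11}=b(u,v)\,b(u_1,v_1)\,r(v,v_1)$ for an arbitrary $r$; the first \eqref{eq:cond1}-equation is then an identity. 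This is exactly family \eqref{eq:Adeg-withb-special} (given such $b,r$ one recovers it with $f=b$, $k\equiv 1$, $g=r$, the extra parameters in \eqref{eq:Adeg-withb-special} being redundant), settling case (2).

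In the branch $a_{21}\neq 0$ I would feed $a_{11}=b(u_1,v_1)/\omega(u,v,v_1)$ and $a_{21}=b(u_1,v_1)/\chi(v,v_1)$ into the two \eqref{eq:cond1}-equations. After clearing the common factor $b(u_2,v_2)$, each becomes a relation among quantities attached to the sites $0,1,2$ in which the only dependence on the site-$0$ variable $v$ enters through $\partial_{v_1}\log\chi(v,v_1)$, respectively $\partial_{v_1}\log\omega(u,v,v_1)$; since $b\neq 0$ the coefficients of these terms never vanish, so both $\chi$ and $\omega$ must separate multiplicatively, $\chi(v,v_1)=\rho(v)\sigma(v_1)$ and $\omega(u,v,v_1)=\eta(u,v)\,\tau(v_1)$. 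Matching the two separations through the common factor $b(u_1,v_1)$ forces the $v_1$-parts to agree up to a scalar: a function of $v$ equal to a function of $v_1$ is a genuine constant, because $\mc C=\bar{\mc C}$ is a field, and this scalar is the parameter $\kappa$. Setting $g:=1/\rho$ (up to normalisation), $f:=b/(\kappa g)$, and extracting the residual $(u,v)$-dependence as $F(u,v)$, one recovers $a_{21}=f(u_1,v_1)g(v)$, $b=\kappa f(u,v)g(v)$ and $a_{11}=f(u,v)f(u_1,v_1)g(v)F(u,v)$, i.e.\ \eqref{eq:Adeg-withb}. Finally, substituting these shapes into the unshifted \eqref{eq:cond3}-relation, the shifted factors cancel pairwise and what survives is precisely $f^2\partial_u F=\partial_v f$, which is \eqref{eq:Adeg-withb-defF}.

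I expect the main obstacle to be this separation-of-variables step for the two \eqref{eq:cond1}-equations in the branch $a_{21}\neq 0$: one must keep careful track of the shifted arguments at the three lattice sites, verify that isolating the $v$-dependence really forces $\partial_v\partial_{v_1}\log\chi=\partial_v\partial_{v_1}\log\omega=0$ with nothing lost, and confirm the compatibility of the two separations that pins down $\kappa$. (Here the hypothesis $b\neq 0$ is what makes the separation clean and avoids the dichotomy encountered in Lemma \ref{lem:colfactor}.) The converse direction---substituting both families \eqref{eq:Adeg-withb} and \eqref{eq:Adeg-withb-special} back into \eqref{eq:cond1}--\eqref{eq:cond4}, using \eqref{eq:Adeg-withb-defF} to check the \eqref{eq:cond3}-relation---is routine but should be carried out to establish sufficiency.
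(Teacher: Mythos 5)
Your reduction to five scalar equations is correct (condition \eqref{eq:cond4} is indeed vacuous for $\ell=2$, and the antisymmetries you invoke eliminate all other index choices), your entry point through \eqref{eq:cond2} coincides with the paper's, and your treatment of the branch $a_{21}\equiv 0$ is sound. But there is a genuine gap in the branch $a_{21}\neq 0$, exactly at the step you flag as delicate. The two \eqref{eq:cond1}-equations do give the multiplicative separations $\omega=\eta(u,v)\tau(v_1)$, $\chi=\rho(v)\sigma(v_1)$, and (by your ``function of one site equals function of another site'' argument) the proportionality $\tau=c\,\sigma$; but they do \emph{not} force the $v$-part and the $v_1$-part of $\chi$ to be reciprocal up to a constant, i.e.\ $\rho\sigma=\mathrm{const}$, which is precisely what the parameter $\kappa$ in \eqref{eq:Adeg-withb} encodes (one needs $b(x,y)=\kappa f(x,y)g(y)$ with the \emph{same} $f,g$ that build $a_{21}$). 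Concretely, $a_{11}=a_{21}=1$, $b=e^v$ is of your separated form (take $\eta=\rho=1$, $\tau=\sigma=e^{v_1}$, $c=1$) and satisfies all four equations coming from \eqref{eq:cond1} and \eqref{eq:cond2}, yet it violates the \eqref{eq:cond3}-relation ($b\,\dev_u a_{11}-a_{11}\dev_u b-a_{21}\dev_v b+\dots=-e^v\neq 0$) and lies in neither family \eqref{eq:Adeg-withb} nor \eqref{eq:Adeg-withb-special}. So ``matching the two separations through the common factor $b(u_1,v_1)$'' cannot produce $\kappa$. The missing ingredient is \eqref{eq:cond3}: in the paper's proof the residual \eqref{eq:cond1}-relation \eqref{eq:prop9-pf2} and the \eqref{eq:cond3}-relation \eqref{eq:prop9-pf3} are compared (one is, up to a shift, the negative of the other), and it is this comparison that yields $g'(v)/g(v)=-h'(v)/h(v)$, hence $h=\rho/g$ with $\rho$ constant. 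In your scheme \eqref{eq:cond3} is spent only at the very end to produce the ODE \eqref{eq:Adeg-withb-defF}; it must instead be brought in \emph{before} the form \eqref{eq:Adeg-withb} is asserted, to exclude solutions like the one above.

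A second, smaller omission: your two-branch split ($a_{21}\equiv 0$ versus $a_{21}\neq 0$) does not cover $a_{11}\equiv 0$, $a_{21}\neq 0$, because in the second branch you write $a_{11}=b(u_1,v_1)/\omega(u,v,v_1)$, which presupposes $a_{11}\neq 0$ (when $a_{11}=0$ the corresponding \eqref{eq:cond2}-equation is vacuous and provides no such representation, and the whole separation argument collapses). This case needs its own short argument --- here \eqref{eq:cond1} forces $\dev_{v_1}a_{21}=0$, and \eqref{eq:cond2}--\eqref{eq:cond3} then give $a_{21}=f(u_1)g(v)$, $b=\kappa f(u)g(v)$ --- and it lands in family \eqref{eq:Adeg-withb} only as the degenerate instance $f(x,y)=f(x)$, $F=0$, which is why the paper isolates it as its case~(ii).
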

\begin{proof}
The form of the admissible Hamiltonian difference operators is obtained by directly solving the set of equations \eqref{eq:cond-comp} with the aid of a computer algebra software as \textsl{Mathematica}. We consider independently the three cases $a_{11},a_{21}\neq 0$, $a_{11}=0$, and $a_{21}=0$.

(i) $a_{11},a_{21}\neq0$. We start from \eqref{eq:Adeg-simple} and write \eqref{eq:cond2}, yielding
\begin{align*}
a_{11}\dev_{u_1}\cS b&=\left(\cS b\right) \dev_{u_1}a_{11},&a_{21}\dev_{u_1}\cS b&=\left(\cS b\right)\dev_{u_1}a_{21}.
\end{align*}
The two equations are identically solved if $b=0$, which is the case studied in Proposition \ref{prop:deg-noB}. If $b$ does not vanish, then the ratio between $a_{11}$ and $\cS b$, as well as the ratio between $a_{21}$ and $\cS b$, does not depend on $u_1$. Keeping into account the different functional dependency of $b$, $a_{11}$ and $a_{21}$ we can replace $a_{11}=R_1(u,v,v_1)\cS b$ and $a_{21}=R_2(v,v_1)\cS b$.

After this substitution and assuming that none of the factors identically vanishes, we obtain the following two equations from \eqref{eq:cond1}:
\begin{align}\label{eq:prop9-pf1a}
\cS\left( b R_2\right)\dev_{v_1}R_1+R_1\cS\left(R_2\dev_{v}b+R_1\dev_{u} b- b\dev_{u} R_1\right)&=0,\\
\label{eq:prop9-pf1b}
\cS\left( b R_2\right)\dev_{v_1}R_2+R_2\cS\left(R_2\dev_{v}b+R_1\dev_{u} b- b\dev_{u} R_1\right)&=0.
\end{align}
Dividing \eqref{eq:prop9-pf1a} by $R_1$ and \eqref{eq:prop9-pf1b} by $R_2$ we immediately see that they imply $\dev_{v_1}R_1/R_1=\dev_{v_1}R_2/R_2$, namely that the ratio between $R_1$ and $R_2$ does not depend on $v_1$. We can then replace $R_1(u,v,v_1)$ with $K(u,v)R_2(v,v_1)$. The remaining condition is
\begin{equation*}
\left(\cS b\right)\dev_{v_1}R_2+R_2\cS\left(\dev_v b+K\dev_u b-b\dev_u K\right)=0,
\end{equation*}
or equivalently
\begin{equation}\label{eq:prop9-pf2}
\frac{\dev_{v_1}R_2}{R_2}=-\cS\left(\frac{\dev_vb+K\dev_u b-b\dev_uK}{b}\right).
\end{equation}
Similarly, from \eqref{eq:cond3} we obtain
\begin{equation}\label{eq:prop9-pf3}
\frac{\dev_v R_2}{R_2}=\frac{\dev_v b+K\dev_u b-b\dev_uK}{b}.
\end{equation}
Note that the LHS of \eqref{eq:prop9-pf2} (resp. \eqref{eq:prop9-pf3}) can depend at most on $(v,v_1)$, while the RHS may depend on $(u_1,v_1)$ (resp. $(u,v)$). This means that $\dev_{v_1}R_2/R_2$ does not depend on $v$ (resp. $\dev_v R_2/R_2$ does not depend on $v_1$), which tells us that $R_2(v,v_1)=g(v)h(v_1)$. With this factorization, the LHS of \eqref{eq:prop9-pf2} reads $h'(v_1)/h(v_1)$ and the one of \eqref{eq:prop9-pf2} is $g'(v)/g(v)$. Taking into account the shift on the RHS of \eqref{eq:prop9-pf2} and comparing the RHS of the two conditions, we have $g'(v)/g(v)=-h'(v)/h(v)$, namely $h(v)=\rho/g(v)$ for a nonzero constant $\rho$. Summarizing, we have now
\begin{align}
a_{11}(u,v,u_1,v_1)&=\frac{\rho\,g(v)}{g(v_1)}K(u,v)b(u_1,v_1)\\ a_{21}(v,u_1,v_1)&=\frac{\rho\, g(v)}{g(v_1)}b(u_1,v_1)\,.
\end{align}
To simplify the expression, and noting that we have $b$ here an arbitrary function of the two variables $(u_1,v_1)$, we can redefine $\rho b(x,y)/g(y)=:f(x,y)$ from which, after calling $1/\rho=\kappa$, we have
\begin{align*}
a_{11}&=f(u_1,v_1)K(u,v)g(v),&a_{21}&=f(u_1,v_1)g(v),&b&=\kappa f(u,v)g(v).
\end{align*}
Rewriting \eqref{eq:prop9-pf3} in this form we are left with the equation
\begin{equation}\label{eq:prop9-pf4} 
f\dev_u K-\dev_u f K=\dev_v f,
\end{equation}
which we can regard as an ODE for $K$. Equation \eqref{eq:prop9-pf4} takes the simpler form \eqref{eq:Adeg-withb-defF} if we factorize $K=f(u,v)F(u,v)$; being a solution of an ODE in the variable $u$, the solution $F(u,v)$ additionally depends on an arbitrary function $k(v)$. The replacement of $K(u,v)$ in the expression for $a_{11}$ gives us \eqref{eq:Adeg-withb}.

(ii) $a_{11}=0$. We start once again from \eqref{eq:Adeg-simple} with $a_{11}=0$. From \eqref{eq:cond1} we obtain $\dev_{v_1}a_{21}=0$. Moreover, \eqref{eq:cond2} and \eqref{eq:cond3} yield the conditions
\begin{align*}
\cS\frac{\dev_u b}{b}&=\frac{\dev_{u_1}a_{21}}{a_{21}},&\frac{\dev_v b}{b}&=\frac{\dev_v a_{21}}{a_{21}}.
\end{align*} 
The different dependency of variables of the LHS and RHS of these conditions, as well as their form, give the solutions $a_{21}=f(u_1)g(v)$, $b=\kappa f(u)g(v)$. Note that this solution is of the same form as \eqref{eq:Adeg-withb} for $f(x,y)=f(x)$ and $k(v)=0$, from which $F=0$.

(iii) $a_{21}=0$. In this case, \eqref{eq:cond1} yields
\begin{equation}\label{eq:prop9-pf5}
\cS\frac{\dev_u a_{11}}{a_{11}}=\frac{\dev_{u_1}a_{11}}{a_{11}}.
\end{equation} 
Comparing the dependency of the terms on the LHS and on the RHS, we obtain the three conditions $\dev_{uu_1}\log a_{11}=0$, $\dev_{uv_1}\log a_{11}=0$, $\dev_{vu_1}\log a_{11}=0$, that are solved by $a_{11}=f(u,v)g(v,v_1)h(u_1,v_1)$. Finally, \eqref{eq:prop9-pf5} with such an ansatz gives $h(u_1,v_1)=f(u_1,v_1)h(v_1)$ -- the arbitrary single-function variable $h(v_1)$ can be absorbed into the arbitrary function $g(v,v_1)$ to give the form of $A$ in \eqref{eq:Adeg-withb-special}. Finally, for $b\neq 0$ both \eqref{eq:cond2} and \eqref{eq:cond3} yield the condition $\dev_u b/b=\dev_u f/f$, whose solution is $b(u,v)=f(u,v)k(v)$.
\end{proof}
\begin{proposition}\label{prop:deg-noB}
Let $A$ and $B$ be as in equation \eqref{eq:Adeg-simple} and let us assume that $b=0$.
Then $A$ defines a Hamiltonian difference operator of the form \eqref{eq:diffop-1st} if and only if is of one of the following three forms:
\begin{enumerate}[(1)]
\item
\begin{equation}
A=\begin{pmatrix}f(u)g(v)f(u_1)g(v_1)F(u,v)&0\\f(u_1)g(v_1)&0\end{pmatrix}
\end{equation}
where $F(u,v)$ is a solution of
\begin{equation}\label{eq:prop10-eq2}
f(u)g(v)^2\dev_u F(u,v)-g'(v)=0,
\end{equation}
for arbitrary single-variable functions $f(x), g(x)$.
\item
\begin{equation}\label{eq:prop:deg-noB2}
A=\begin{pmatrix}0&0\\f(u_1,v)&0\end{pmatrix}
\end{equation}
for an arbitrary two-variable function $f(x,y)$;
\item
\begin{equation}
A=\begin{pmatrix}f(u,v)f(u_1,v_1)g(v,v_1)&0\\0&0\end{pmatrix}
\end{equation}
for arbitrary two-variable functions $f(x,y), g(x,y)$.
\end{enumerate}
\end{proposition}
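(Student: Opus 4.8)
The plan is to mirror the proof of Proposition~\ref{prop:deg-B}, exploiting the simplifications that $b=0$ brings to the system \eqref{eq:cond-comp}. With $B=0$ condition \eqref{eq:cond4} is void; moreover, since the second column of $A$ in \eqref{eq:Adeg-simple} vanishes, every term of \eqref{eq:cond2} contains a factor $A^{s2}=0$ or cancels under the antisymmetry $i\leftrightarrow j$, so \eqref{eq:cond2} holds identically. Hence the entire content of Hamiltonianity is carried by \eqref{eq:cond1} and \eqref{eq:cond3}. As in Proposition~\ref{prop:deg-B} I would split the analysis into the three cases $a_{11},a_{21}\neq0$, $a_{11}=0$, and $a_{21}=0$, which produce respectively the forms (1), (2) and (3); the converse, that each listed operator is Hamiltonian, is then a direct substitution.

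First I would dispose of the two degenerate cases. If $a_{11}=0$, the instances $(i,j,k)=(1,1,2)$ and $(1,2,2)$ of \eqref{eq:cond1}, after division by the non-vanishing $\cS a_{21}$, give $\dev_{v_1}a_{21}=0$ and $\dev_u a_{21}=0$; hence $a_{21}=f(u_1,v)$, which is form (2). If instead $a_{21}=0$, the only surviving instance of \eqref{eq:cond1}, namely $(i,j,k)=(1,1,1)$, reads $\dev_{u_1}\log a_{11}=\cS\bigl(\dev_u\log a_{11}\bigr)$; this is exactly the relation \eqref{eq:prop9-pf5} met in case (iii) of Proposition~\ref{prop:deg-B}. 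Comparing the admissible variable dependence of the two sides forces the mixed second logarithmic derivatives of $a_{11}$ to vanish and yields the factorisation $a_{11}=f(u,v)f(u_1,v_1)g(v,v_1)$ of form (3), precisely as there.

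The substance of the argument is the generic case $a_{11},a_{21}\neq0$. I would begin with the instance $(i,j,k)=(1,1,2)$ of \eqref{eq:cond3}, which gives $a_{11}\dev_{u_1}a_{21}=a_{21}\dev_{u_1}a_{11}$, i.e.\ $\dev_{u_1}(a_{11}/a_{21})=0$; thus $a_{11}=\rho\,a_{21}$ with $\rho=\rho(u,v,v_1)$ independent of $u_1$. Substituting this into the instances $(i,j,k)=(1,1,1)$ and $(1,1,2)$ of \eqref{eq:cond1}, dividing both by $\cS a_{21}$ and subtracting $\rho$ times the second from the first, the terms involving the shifted factor $\cS\rho$ cancel and one is left with $(\dev_{v_1}\rho)\,a_{21}=0$; hence $\rho=K(u,v)$ and $a_{11}=K(u,v)\,a_{21}$. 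The single remaining relation, coming from the instance $(1,1,2)$, then reads
\[
K(u_1,v_1)\,\dev_{u_1}a_{21}+\dev_{v_1}a_{21}-a_{21}\,(\dev_u K)(u_1,v_1)=0 .
\]
From this I would read off, by the logarithmic-derivative and variable-separation comparisons used repeatedly in Proposition~\ref{prop:deg-B}, that $a_{21}$ separates as $f(u_1)g(v_1)$ and that $K=f(u)g(v)F(u,v)$, whereupon the displayed relation collapses to the ODE \eqref{eq:prop10-eq2} for $F$, giving form (1).

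The main obstacle is this last extraction in the generic case: the surviving relation couples $a_{21}$ and the ratio $K$ both multiplicatively and through the shift $\cS$, so the separation of variables is far more delicate than in the two degenerate cases, and the real work is to argue that no non-product solution is admissible. As already flagged for Proposition~\ref{prop:deg-B}, I expect a computer algebra system (e.g.\ \textsl{Mathematica}) to be genuinely useful here, both to enumerate the solution branches and to confirm that the three families exhaust them. The converse is then routine: substituting each of (1)--(3) into \eqref{eq:cond1} and \eqref{eq:cond3}, with \eqref{eq:cond2} and \eqref{eq:cond4} automatic, verifies Hamiltonianity, the check for (1) reducing exactly to \eqref{eq:prop10-eq2}.
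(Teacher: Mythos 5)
Your proposal retraces the paper's proof almost step by step: the same three-case split ($a_{11},a_{21}\neq 0$; $a_{11}=0$; $a_{21}=0$), the same observation that \eqref{eq:cond4} is void and \eqref{eq:cond2} identically satisfied once $b=0$ and the second column of $A$ vanishes, the same proportionality $a_{11}=\rho\,a_{21}$ with $\dev_{u_1}\rho=0$ extracted from \eqref{eq:cond3}, the same cancellation between the two nontrivial instances of \eqref{eq:cond1} giving $\dev_{v_1}\rho=0$ (this is precisely the paper's remark that \eqref{eq:prop10-pf2} equals $a_{21}\dev_{v_1}R+R\cdot\eqref{eq:prop10-pf1}$), and the same handling of the two degenerate cases. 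Your index bookkeeping is in fact more accurate than the paper's: the relation $a_{11}\dev_{u_1}a_{21}=a_{21}\dev_{u_1}a_{11}$ comes from \eqref{eq:cond3} with $(i,j,k)=(1,1,2)$, as you state, whereas the paper's cited choice $i=j=k=2$ makes \eqref{eq:cond3} vanish identically and is evidently a slip.

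The genuine gap is exactly the step you flag and postpone to a CAS: deducing $a_{21}=f(u_1)g(v_1)$ and $\rho=f(u)g(v)F(u,v)$ from the surviving relation. The paper closes this with a concrete idea your proposal lacks: regard \eqref{eq:prop10-pf1} as a linear nonhomogeneous first-order ODE in $u_1$ for the unknown $\cS R$, namely $\dev_{u_1}\bigl(\cS R/a_{21}\bigr)=\dev_{v_1}a_{21}/a_{21}^{2}$, integrate it so that $\cS R$ is expressed through $a_{21}$, its derivatives and an integration ``constant'' $h(v,v_1)$, and then impose that $\cS R$ depend on $(u_1,v_1)$ alone, in particular not on $v$, which enters through $a_{21}$; it is this consistency requirement, not a logarithmic-derivative comparison as in Proposition~\ref{prop:deg-B}, that yields the factorization and reduces \eqref{eq:prop10-pf1} to \eqref{eq:prop10-eq2}. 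Moreover, your instinct that ``the real work is to argue that no non-product solution is admissible'' is better founded than you may realize: the consistency requirement only has teeth when $a_{21}$ genuinely depends on $v$. In the branch $\dev_v a_{21}=0$ it is empty, and non-product solutions of \eqref{eq:prop10-pf1} do exist --- for instance $a_{21}=u_1+v_1$, $R=-1$ (so $a_{11}=-(u_1+v_1)$, $B=0$) satisfies all of \eqref{eq:cond1}--\eqref{eq:cond4} yet is not of any of the three listed forms. So a CAS run aimed at ``confirming that the three families exhaust'' the solutions would not succeed as stated; this branch must be treated separately (and it in fact puts pressure on the statement of the proposition itself, not only on your proof of it). Everything else in your write-up is correct and coincides with the paper's argument.
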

\begin{proof}
As in the previous case, we start from \eqref{eq:Adeg-simple} and consider separately the cases $a_{11},a_{21}\neq 0$, $a_{11}=0$, and $a_{21}=0$.

(i) $a_{11},a_{21}\neq 0$. From \eqref{eq:cond3} for $b=0$ and $i=j=k=2$ we obtain
$$
a_{11}\frac{\dev a_{21}}{\dev u_1}=a_{21}\frac{\dev a_{11}}{\dev u_1}
$$
which, where $a_{11}$ and $a_{21}$ are not zero, implies that $a_{11}=a_{21}R(u,v,v_1)$. After this substitution, the system \eqref{eq:cond1} is reduced to
\begin{gather}\label{eq:prop10-pf1}
\dev_{v_1}a_{21}+\left(\cS R\right)\dev_{u_1}a_{21}-a_{21}\dev_{u_1}\cS R=0\\\label{eq:prop10-pf2}
a_{21}\dev_{v_1}R+R\dev_{v_1}a_{21}+R\left(\cS R\right)\dev_{u_1}a_{21}-R a_{21}\dev_{u_1}\cS R=0.
\end{gather}
Observe that \eqref{eq:prop10-pf2} is $a_{21}\dev_{v_1}R+R\text{\eqref{eq:prop10-pf1}}$, so we conclude that $R$ does not depend on $v_1$. Finally, we are left with \eqref{eq:prop10-pf1} to solve. The equation can be regarded as a linear nonhomogeneous ODE for $\cS R$, that can then be expressed in terms of $a_{21}$, its derivatives, and an arbitrary function $h$ of single-variable. If we also require that $\cS R$ should not depend on $v$ because it is naturally a function of only $(u_1,v_1)$, this forces $a_{21}(v,u_1,v_1)$ to be of the form $f(u_1)g(v_1)$. Rewriting $\cS R(u_1,v_1)$ as $f(u_1)g(v_1)\cS F(u_1,v_1)$ we then reduce \eqref{eq:prop10-pf1} to the form \eqref{eq:prop10-eq2}.

(ii) $a_{11}=0$. The only condition we obtain from \eqref{eq:cond1} is $\dev_{v_1}a_{21}=0$; \eqref{eq:cond2} and \eqref{eq:cond3} are all identically satisfied by $a_{21}=f(u_1,v)$.

(iii) $a_{21}=0$. As for the  similar case in the proof of Proposition \ref{prop:deg-B}, \eqref{eq:cond1} admits as solution $a_{11}=f(u,v)f(u_1,v_1)g(v,v_1)$. For $b=0$ there are not further conditions.
\end{proof}
\begin{theorem}\label{thm:normal}
Let $K$ be a Hamiltonian difference operator of the form \eqref{eq:diffop-1st} with a matrix $A$ which is degenerate and satisfies \eqref{eq:cond0}. Then, up to point transformations, $K$ is of one of the following forms:
\begin{enumerate}
    \item Constant form, with
    \begin{equation}\label{eq:normaform-1}
    A=\begin{pmatrix}0 & 0\\1 &0\end{pmatrix},\qquad B=\begin{pmatrix}0 & \kappa \\-\kappa & 0\end{pmatrix},
    \end{equation}
    for an arbitrary constant $\kappa$.
    \item Type I form, with
    \begin{equation}\label{eq:normalform-2}
    A=\begin{pmatrix}g(v,v_1) & 0\\0 & 0\end{pmatrix},\qquad B=\begin{pmatrix} 0 & \sigma\\-\sigma & 0
    \end{pmatrix}
    \end{equation}
    for an arbitrary function $g(x,y)$ of two variables and $\sigma=0,1$.
    \item Type II form, with $A$ as in \eqref{eq:prop:deg-noB2} and $B=0$.
\end{enumerate}
\end{theorem}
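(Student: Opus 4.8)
The plan is to dispose of the five families of pairs $(A,B)$ listed in Propositions~\ref{prop:deg-B} and~\ref{prop:deg-noB} one at a time; by Lemmas~\ref{lem:col} and~\ref{lem:colfactor}, together with the reduction of a degenerate $A$ to the single-column shape \eqref{eq:Adeg-simple}, these exhaust every Hamiltonian operator \eqref{eq:diffop-1st} with degenerate leading term obeying \eqref{eq:cond0}. For each family I will produce an explicit point transformation $\phi$ carrying $(A,B)$ to one of \eqref{eq:normaform-1}--\eqref{eq:normalform-2}. The only machinery needed is read off from \eqref{eq:tildeK}: since $J_\phi$ has order $0$, the leading and free terms transform as
\[
\widetilde A = J_\phi\, A\, \cS(J_\phi^T), \qquad \widetilde B = J_\phi\, B\, J_\phi^T,
\]
and, because $B=b\left(\begin{smallmatrix}0&1\\-1&0\end{smallmatrix}\right)$ is a skew block, the identity $J\left(\begin{smallmatrix}0&1\\-1&0\end{smallmatrix}\right)J^T=\det(J)\left(\begin{smallmatrix}0&1\\-1&0\end{smallmatrix}\right)$ gives $\widetilde b = b\,\det J_\phi$.

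Case (2) of Proposition~\ref{prop:deg-noB} needs no transformation: it is already the Type~II form. The two ``single-entry'' families, namely case (2) of Proposition~\ref{prop:deg-B} and case (3) of Proposition~\ref{prop:deg-noB}, both have $A=\left(\begin{smallmatrix} f(u,v)f(u_1,v_1)g(v,v_1)&0\\0&0\end{smallmatrix}\right)$ and will be sent to the Type~I form \eqref{eq:normalform-2}. Keeping $\widetilde v=V(v)$ a function of $v$ alone guarantees that the lower row and the $(1,2)$-entry of $\widetilde A$ remain zero, and choosing $\widetilde u=U(u,v)$ with $\partial_u U=1/f(u,v)$ makes $\partial_u U\cdot f\equiv1$, whence $\widetilde a_{11}=g(v,v_1)$ is a function of $(v,v_1)=(V^{-1}\widetilde v,V^{-1}\widetilde v_1)$. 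For case (3) one has $B=0$, giving Type~I with $\sigma=0$; for case (2) the block $B$ of \eqref{eq:Adeg-withb-special} transforms with $\det J_\phi=V'(v)/f$ into $\widetilde b=k(v)V'(v)$, and the residual freedom in $V$ is used by taking $V(v)=\int\ud v/k(v)$ (legitimate because $b\neq0$ forces $k\neq0$), producing $\sigma=1$.

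The two ``two-entry first-column'' families, case (1) of each proposition, will be carried to the constant form \eqref{eq:normaform-1}. In both, the ratio $a_{11}/a_{21}$ is a function of $(u,v)$ only, so the homogeneous first-order PDE $a_{11}\,\partial_u\widetilde u+a_{21}\,\partial_v\widetilde u=0$ has genuine $(u,v)$-coefficients and, solved by characteristics, annihilates the top row of $\widetilde A$. To normalise the surviving lower-left entry I take $\widetilde v=\widetilde v(v)$ and require in addition that $a_{21}\,\cS(\partial_u\widetilde u)$ depend on $v$ alone; writing $a_{21}$ as a product of a current factor and a shifted factor, this prescribes $\partial_u\widetilde u$ as a constant multiple of the reciprocal of the (unshifted) shifted factor. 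The two requirements fix \emph{both} partial derivatives of $\widetilde u$, and the equality of the mixed second derivatives is precisely the ODE that $F$ was forced to satisfy, namely \eqref{eq:prop10-eq2} when $b=0$ and \eqref{eq:Adeg-withb-defF} when $b\neq0$. A final choice of $\widetilde v'(v)$ scales the lower-left entry to $1$, while the determinant bookkeeping turns the block $B$ of \eqref{eq:Adeg-withb} into $\kappa\left(\begin{smallmatrix}0&1\\-1&0\end{smallmatrix}\right)$ (with $\kappa=0$ in the $b=0$ family), completing the constant form.

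The crux of the argument is this constant-form reduction, which is also where the main obstacle lies: the function $\widetilde u$ is asked to satisfy simultaneously the homogeneous PDE annihilating the top row and an inhomogeneous condition normalising the lower-left entry, so that a priori its two partial derivatives are overdetermined. The decisive observation is that the integrability condition of this pair of equations coincides exactly with the ODE imposed on $F$ in Propositions~\ref{prop:deg-B} and~\ref{prop:deg-noB}; thus the Hamiltonian constraints already guarantee that $\phi$ exists. Everything else reduces to single integrations, together with the routine verification that each Jacobian $J_\phi$ is nondegenerate, so that $\phi$ is a genuine point transformation.
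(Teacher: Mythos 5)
Your proposal is correct and follows essentially the same route as the paper: the same five-family case split from Propositions \ref{prop:deg-B} and \ref{prop:deg-noB}, the same target normal form for each family, and the same key observation that the compatibility condition on the candidate change of coordinates (equality of mixed partials of $\widetilde u$, which the paper phrases as the requirement that the posited matrix $J$ be a genuine Jacobian) is exactly the ODE \eqref{eq:Adeg-withb-defF} (resp.\ \eqref{eq:prop10-eq2}) already imposed on $F$ by the Hamiltonian property. The only difference is presentational — you build the map $(\widetilde u,\widetilde v)$ directly where the paper writes down the Jacobian ansatz and verifies closedness — so nothing further is needed.
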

\begin{proof}
Let $J=\left(\begin{smallmatrix}a&b\\c&d\end{smallmatrix}\right)$, with $\det J\neq 0$, $a=a(u,v)$, $b=b(u,v)$, $c=c(u,v)$, and $d=d(u,v)$ such that $\partial_v a=\partial_ub$, $\partial_vc=\partial_ud$, be the Jacobian matrix of the point transformation. From equation~\eqref{eq:tildeK}, the matrices $A$ and $B$ defining the Hamiltonian operator transform as
\begin{equation}
    \tilde{A}=JA\left(\cS J^T\right),\qquad\tilde{B}=JBJ^T.
\end{equation}

\emph{Case 1.} Let us consider $A$ and $B$ as in Proposition~\ref{prop:deg-B}(1). Under point transformations we have
\begin{align*}
\tilde{A}&=\begin{pmatrix}g(\cS a)(\cS f)\left(afF+b
\right)&g(\cS c)(\cS f)\left(afF+b
\right)\\ g(\cS a)(\cS f)\left(cfF+d
\right) &  g(\cS c)(\cS f)\left(cfF+d
\right)
\end{pmatrix},\\ \tilde{B}&=\begin{pmatrix}0 & \kappa(\det J) f g\\ -\kappa(\det J) f g&0\end{pmatrix}.
\end{align*}
For $c=0$ and $b=-afF$, the matrix $\tilde{A}$ assumes the form $\tilde{A}_{11}=\tilde{A}_{12}=\tilde{A}_{22}=0$, $\tilde{A}_{21}=g(\cS a)(\cS f)d$.  Observing that all the functions in $\tilde{A}_{12}$ depend only on the variables $(u,v)$, $\tilde{A}_{12}$ can assume a constant value for $af=\alpha$ constant and, independently, $gd=\beta$. Choosing the two constants in such a way that their product is $1$ gives $\tilde{A}$ of the form \eqref{eq:normaform-1}. Summarizing, our ansatz for the Jacobian matrix is
$$
J=\begin{pmatrix} \frac{\alpha}{f(u,v)} & -\alpha F(u,v)\\ 0 & \frac{1}{\alpha g(v)} \end{pmatrix}.
$$
It is immediate to see that $\det J=1/f(u,v)g(v)\neq 0$, so $\tilde{B}$ takes the form \eqref{eq:normaform-1} for any arbitrary nonzero $\kappa$ (as in \eqref{eq:Adeg-withb}); to be sure that $J$ is indeed a Jacobian matrix we need to verify $\dev_va=0=\dev_ub$ and $\dev_vc=\dev_ud$. This latter identity is indeed equivalent to the definition of $F(u,v)$ in \eqref{eq:Adeg-withb-defF}, so it is verified. We can repeat the same computation for $A$ as in Proposition~\ref{prop:deg-noB}(1). An ansatz for the Jacobian matrix taking $A$ to the constant form is
$$
J=\begin{pmatrix} \frac{\alpha}{f(u)g(v)} & -\alpha F(u,v)\\ 0 & \frac{1}{\alpha}\end{pmatrix}
$$
for an arbitrary constant $\alpha$; the final result is the same as \eqref{eq:normaform-1} with $\kappa=0$.

\emph{Case 2.} Take $A$ as in Proposition~\ref{prop:deg-B}(2). Under point transformations we have
$$
\tilde{A}=\begin{pmatrix}afg(\cS a)(\cS f)&afg(\cS c)(\cS f)\\ cfg(\cS a)(\cS f)&  cfg(\cS c)(\cS f)
\end{pmatrix},\qquad \tilde{B}=\begin{pmatrix}0 & (\det J) f k\\ (\det J) f k&0\end{pmatrix}.
$$
For $c=0$, $a=1/f(u,v)$ the matrix $\tilde{A}$ assumes the form $\tilde{A}_{12}=\tilde{A}_{21}=\tilde{A}_{22}=0$ and $\tilde{A}_{11}=g(v,v_1)$; $\tilde{B}$ assumes the form $\tilde{B}_{12}=-\tilde{B}_{21}= d\,k(v)$. Therefore, it is possible to obtain the form \eqref{eq:normalform-2} with $\sigma=1$ with a point transformation whose Jacobian matrix is
$$
J=\begin{pmatrix}\frac{1}{f(u,v)}&b(u,v)\\0 &\frac{1}{k(v)}\end{pmatrix},
$$
where $b(u,v)$ is a solution of $f^2\partial_ub+\dev_v f=0$. Note that also in the new coordinates we have $g=g(\tilde{v},\tilde{v}_1)$. On the other hand, an operator of the form of Proposition~\ref{prop:deg-noB}(3)is defined by a matrix $A$ of the same form. We can then use a point transformation with Jacobian matrix
$$
J=\begin{pmatrix}\frac{1}{f(u,v)}&b(u,v)\\0 &1\end{pmatrix}
$$ to obtain the same form for $\tilde{A}$ and $\tilde{B}=B=0$.

\emph{Case 3.} A matrix $A$ as in Proposition~\ref{prop:deg-noB}(2), under point transformations, becomes
$$
\tilde{A}=\begin{pmatrix}(\cS a)b f&(\cS c)b f\\(\cS a)df & (\cS c)d f\end{pmatrix}.$$
For a generic function $f(u_1,v)$ it is not possible to find a Jacobian matrix producing a constant matrix, because that would require a factorized form for the function $f=g(u_1)h(v)$. If that were the case, a Jacobian of the form
$$
J=\begin{pmatrix}0 & \frac{1}{\alpha h(v)}\\\frac{\alpha}{g(u)}&0\end{pmatrix}$$
would take the operator to the form \eqref{eq:normaform-1} with $\kappa=0$. On the other hand, if $f$ is not factorized we cannot simplify further.
\end{proof}

\begin{remark}\label{remark:Toda}
Observe that the matrices $(A,B)$ for the first Hamiltonian structure of the Toda lattice (see for example the review paper \cite{kmw13})
\begin{equation}\label{eq:HToda-0}
H=\begin{pmatrix} 0 & u(\cS-1) \\ (1-\cS^{-1})\circ u & 0
\end{pmatrix}
\end{equation}
are, respectively
$$
A=\begin{pmatrix}0&u\\0&0\end{pmatrix},\qquad B=\begin{pmatrix}0&-u\\u&0\end{pmatrix}.
$$
$A$ is degenerate, so $H$ must be equivalent, under point transformations, to one of the three normal forms of Theorem~\ref{thm:normal}. It is indeed of the constant form \eqref{eq:normaform-1} for $\kappa=1$ under the point transformation $\tilde{u}=v$, $\tilde{v}=\log u$. In Section \ref{ssec:cohoH0} we study the Poisson cohomology of the (equivalent for the exchange of $u$ and $v$ variables) Hamiltonian structure of the form
\begin{equation}\label{eq:HToda}
H_0=\begin{pmatrix}0 & \cS-1\\1-\cS^{-1} & 0\end{pmatrix}.
\end{equation}
\end{remark}
\begin{remark}
As previously remarked, the form $A=A(u,v,u_1,v_1)$ and $B=B(u,v)$ is a necessary condition for a difference operator $K$ of the form
\eqref{eq:diffop-1st} to be Hamiltonian if $A$ is nondegenerate. When $A$ is degenerate, Propositions \ref{prop:deg-B} and \ref{prop:deg-noB} produce only a subfamily -- to which all the examples known in the literature belong -- of all the possible solutions of \eqref{20240617:eq1}. For example, operators of the form
\begin{align*}
A&=\begin{pmatrix}g(v_M,\ldots,v_N)&0\\0&0\end{pmatrix},&B&=\begin{pmatrix}0&0\\0&0\end{pmatrix},
\end{align*}
$M\leq N$, are solutions too.
\end{remark}
\section{The Poisson cohomology}\label{sec:coho}
\subsection{The Poisson cohomology of a difference Hamiltonian structure}
Let $P$ be a Poisson bivector.
It is well known that from the property $[P,P]=0$ and the graded Jacobi identity for the Schouten bracket \eqref{eq:sch-jac} it follows that the adjoint action of $P$, which we denote by $d_P=[P,-]$, is a coboundary operator (i.e. a differential) on the graded space of local poly-vectors which we have identified with $\hF$ in Section \ref{sec:2.3}.
Explicitly (cf. \eqref{eq:sch-def}) it is given by
$$
d_P(Q)=-\sum_{l=1}^\ell \int \left(
\frac{\delta P}{\delta \theta_l}\frac{\delta Q}{\delta u^l}+\frac{\delta P}{\delta u^l}\frac{\delta Q}{\delta \theta_l}
\right)\,,
\quad
Q\in\hF
\,.
$$
This turns $(\hF,d_P)$ into the \emph{Poisson-Lichnerowicz complex}
$$
0\to\hF^0=\F\xrightarrow{d_P}\hF^1\xrightarrow{d_P}\hF^2\to\cdots
$$
Note that, in contrast with the finite dimensional case, there exist nontrivial $p$-vectors for any value of $p\geq0$, so the complex does not terminate.
The cohomology of this complex is the so-called \emph{Poisson cohomology} (originally introduced by Lichnerowicz \cite{l77} in finite dimensional Poisson geometry, see \cite{CW19} for the difference case); it controls the deformation theory of the Hamiltonian operator and carries information about the existence of bi-Hamiltonian hierarchies \cite{G01,kra,dSK}.

The Poisson cohomology for the normal form of a scalar Hamiltonian difference operator of $(-1,1)$-order has been previously computed in \cite{CW19}. In this section, we extend the results to some class of multi-component operators. We study in particular the operator $H_0$ in \eqref{eq:HToda}, which is the normal form of the $(-1,1)$-order Hamiltonian structure of the Toda lattice. Its importance in the study and classification of two-component difference systems will be discussed in Section \ref{sec:eg}, since it appears in several bi-Hamiltonian pairs.

Throughout this section we assume that $\mc A$ is a normal algebra of difference functions (see Section \ref{sec:2.1}) and that $\mc C=\bar{\mc C}$ is a field.

\subsection{The Poisson cohomology of an ultralocal structure}\label{ssec:cohoUL}
Hamiltonian structures of order $(0,0)$ are called \emph{ultralocal}. Because of the skewsymmetry requirement, there are not nontrivial ultralocal structures in the scalar case. The picture is richer in the multi-component case. Specializing the computation of the Jacobi identity \eqref{20240617:eq1} for a $\lambda$-bracket of the form $\{u^i_{\lambda}u^j\}=B^{ji}\in\mc A$ gives 
that $B$ needs to satisfy \eqref{eq:cond4}
and the conditions
\begin{equation}\label{eq:ultraloc-cond}
\sum_{s=1}^\ell \frac{\dev B^{ij}}{\dev u^s_n}\cS^n B^{sk} =0 \quad \text{for }i,j,k=1,\dots\ell\text{ and }n\neq0
\,.
\end{equation}

Assuming that $B$ is nondegenerate, then condition \eqref{eq:ultraloc-cond} implies $B^{ij}\in\mathcal A_0\subset \mathcal A$. Hence, nondegenerate ultralocal difference Hamiltonian structures are in one-to-one correspondence with nondegenerate Poisson bivectors on a $\ell$-dimensional manifold whose space of functions is $\mc A_0$ (see Remark \ref{rmk:A0}). Indeed, the density of the associate bivector
\begin{equation}\label{eq:bivUL}
P_B=\frac12\sum_{i,j=1}^\ell\int\theta_i B^{ij}\theta_j
\end{equation}
defines a Poisson bivector on $\A_0$.

The Poisson cohomology in the finite dimensional case, for $B$ nondegenerate, has been computed by Lichnerowicz himself \cite{l77}, who proved that it is isomorphic to the De Rham cohomology of the manifold. The same result holds true even if we consider ultralocal operators on the full space $\hF$. We have then
\begin{theorem}
Let $B$ be a nondegenerate ultralocal Hamiltonian operator with associated bivector $P_B$ given by \eqref{eq:bivUL}. Then $H^p(\hF,d_{P_B})=\delta_{p,0}\mc C$, $p\geq0$.
\end{theorem}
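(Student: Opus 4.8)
The plan is to reduce the computation to a Poincaré-type lemma, following exactly the strategy by which Lichnerowicz identified Poisson and de Rham cohomology in the symplectic case. Concretely, I would show that $(\hF,d_{P_B})$ is isomorphic, as a complex, to the variational de Rham complex $(\hF,\mathsf{d})$ where $\mathsf{d}=\sum_{i,n}\theta_{i,n}\frac{\dev}{\dev u^i_n}$, and then prove that the latter has the asserted cohomology. Both steps use essentially that $B^{ij}\in\mc A_0$ (so $P_B$ is supported at a single site) and that $B$ is invertible over $\mc A_0$, so that $\omega=B^{-1}\in\Mat(\mc A_0)$ defines a symplectic form whose closedness is encoded by the Jacobi condition.

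First I would make the differential explicit. From \eqref{eq:bic-def} and the skewsymmetry of $B$ one computes $\frac{\delta P_B}{\delta\theta_l}=\sum_j B^{lj}\theta_j$ and $\frac{\delta P_B}{\delta u^l}=\frac12\sum_{i,j}\frac{\dev B^{ij}}{\dev u^l}\theta_i\theta_j$, so that, after moving shifts under $\int$, $d_{P_B}(Q)=-\int\sum_{l,j,n}B^{lj}\theta_{j,n}\frac{\dev Q}{\dev u^l_n}-\tfrac12\int\sum_l(\sum_{ij}\frac{\dev B^{ij}}{\dev u^l}\theta_i\theta_j)\frac{\delta Q}{\delta\theta_l}$. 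When $B$ is constant the second term vanishes and the invertible, site-wise change of odd variables $\theta_{j,n}\mapsto\sum_k\cS^n(\omega_{jk})\theta_{k,n}$ carries $d_{P_B}$ to $-\mathsf{d}$ verbatim. For a general nondegenerate $B\in\Mat(\mc A_0)$ I would let $\Phi$ be the automorphism of $\hF$ induced by this same (now $u$-dependent) change of odd variables; shift-covariance makes $\Phi$ well defined on the quotient, and I would verify the intertwining $\Phi\,d_{P_B}=\mathsf{d}\,\Phi$. The terms in which $\mathsf{d}$ differentiates the coefficients $\cS^n(\omega_{jk})$ are exactly compensated by the contribution of $\frac{\delta P_B}{\delta u^l}$, the required cancellation being equivalent to the closedness $d\omega=0$ of the symplectic form, that is, to the Jacobi condition \eqref{eq:cond4}. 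Alternatively one could first bring $B$ to constant form by a Darboux point transformation, which by \eqref{eq:tildeK} preserves the order and hence ultralocality, reducing to the transparent constant case above.

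It then remains to compute $H^p(\hF,\mathsf{d})$. In degree $0$, using $\int\theta_{i,n}\frac{\dev f}{\dev u^i_n}=\int\theta_i\frac{\delta F}{\delta u^i}$ one finds $\mathsf{d}F=\int\sum_i\theta_i\frac{\delta F}{\delta u^i}$, whose kernel is the space of functionals with vanishing variational derivative; since we assume $\mc C=\bar{\mc C}$ is a field, a total-difference element cannot be a nonzero constant, so this kernel is exactly $\mc C$ and $H^0=\mc C$. For $p\geq1$ I would construct a contracting homotopy for $\mathsf{d}$: it is the difference analogue of the vertical de Rham differential on densities, and exactness is an algebraic Poincaré lemma. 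I would build the homotopy inductively along the filtration \eqref{eq:filtration}, peeling off one generator $u^i_n$ at a time and using the normality of $\mc A$, i.e. the integration \eqref{integral}, to invert $\frac{\dev}{\dev u^i_n}$ at each stage. The ordering $\epsilon$ guarantees that on any fixed element only finitely many variables occur, so the induction terminates and yields $h$ with $\mathsf{d}h+h\mathsf{d}=\mathrm{id}$ in positive degree after descent to $\hF$.

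The main obstacle is the intertwining of differentials for non-constant $B$: one must check that the single Lichnerowicz identity $d\omega=0$ accounts simultaneously for all the $u$-dependence of $\Phi$ and of $d_{P_B}$, across every site, with no spurious boundary terms surviving under $\int$. The second delicate point is ensuring that the homotopy $h$ descends to the quotient $\hF=\hA/(\cS-1)\hA$ and is compatible with the normalization \eqref{eq:b-norm}; here the shift-covariance of $\mathsf{d}$ and the locally finite nature of the filtration are what make the construction well defined. Granting these, the two steps combine to give $H^p(\hF,d_{P_B})=H^p(\hF,\mathsf{d})=\delta_{p,0}\,\mc C$.
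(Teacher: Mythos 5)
Your first step is exactly the paper's: after observing that a nondegenerate ultralocal structure is the inverse of a symplectic form on $\mc A_0$ and passing to Darboux coordinates where $B^{ij}\in\mc C$, the paper identifies $(\hA,D_{P_B})$ with the basic de Rham complex of \cite{DSKVW18-1} via the substitution $\tilde\theta^i_m=\sum_j B^{ij}\theta_{j,m}$ (your $\Phi$, with $\omega=B^{-1}$), and $(\hF,d_{P_B})$ with the reduced de Rham complex. Your alternative ``direct intertwining'' for non-constant $B\in\Mat_{\ell\times\ell}(\mc A_0)$ is a legitimate variant (ultralocality makes the Lichnerowicz computation site-wise), but it is superfluous once Darboux coordinates are invoked; note also a small slip in your explicit formula for $d_{P_B}$: moving the shifts under $\int$ produces the coefficient $\cS^n(B^{lj})\theta_{j,n}$, not $B^{lj}\theta_{j,n}$, which matters precisely in the non-constant case you set out to treat.

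The divergence, and the genuine gap, is in the second step. The paper does not re-prove the Poincar\'e lemma: it quotes Theorems 5.1 and 5.2 of \cite{DSKVW18-1} (adapted to the filtration, see Remark \ref{rem:filtration}), which give $H^p(\hA,D_{P_B})=H^p(\hF,d_{P_B})=\delta_{p,0}\mc C$ for \emph{both} the basic and the reduced complex. Your homotopy built from the integrations \eqref{integral} along the filtration does prove exactness of the basic complex $(\hA,\mathsf{d})$ in positive degree, but your assertion that it ``descends to the quotient $\hF=\hA/(\cS-1)\hA$'' by shift-covariance is false: any operator inverting $\frac{\dev}{\dev u^i_n}$ at a fixed site $n$ (such as $\int\cdot\,\ud u^i_n$, or the paper's $h_{n,i}$ in Section \ref{ssec:cohoH0}) cannot commute with $\cS$, since $\cS\frac{\dev}{\dev u^i_n}=\frac{\dev}{\dev u^i_{n+1}}\cS$, so it induces no map on $\hF$. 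Passing from $H(\hA,\mathsf d)$ to $H(\hF,\mathsf d)$ requires a separate argument --- either the long exact sequence attached to $0\to(\cS-1)\hA\to\hA\to\hF\to 0$, which is exactly the machinery the paper deploys for $P_0$ in Section \ref{ssec:cohoH0}, or the proof of Theorem 5.2 in \cite{DSKVW18-1}. That this step is not formal is shown by the paper's own Theorem \ref{thm:cohoToda}, where the quotient strictly enlarges the cohomology ($H^1(\hF,d_{P_0})$ is three-dimensional while $H^1(\hA,D_{P_0})$ is two-dimensional). For the same reason your degree-$0$ computation is circular: the identity $\ker(\delta/\delta u)=\mc C+(\cS-1)\mc A$ is part of what must be proved, not a consequence of the fact that a total difference cannot be a nonzero constant. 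With that quotient step supplied (or cited), your argument closes and agrees with the paper's.
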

\begin{proof}

As previously observed, a nondegenerate ultralocal Hamiltonian operator is a nondegenerate Poisson structure on the underlying space of functions $\A_0$. As it is well known, such a structure is the inverse of a symplectic form, and thus there exists a reference frame on $\A_0$ (the Darboux coordinates) in which it is constant, namely we may assume $B^{ij}\in\mc C$.
Let $D_{P_B}:\hA\to \hA$ be the linear operator
\begin{equation}\label{0110:eq2}
D_{P_B}:=\sum_{i=1}^\ell\sum_{m\in\mb Z}\cS^m\left(\frac{\delta P_B}{\delta \theta_i}\right)\frac{\dev}{\dev u^i_m}
=\sum_{i,j=1}^\ell\sum_{m\in\mb Z}B^{ij}\theta_{j,m}\frac{\dev}{\dev u^i_m}
\,.
\end{equation}
Note that $\int D_{P_B}(Q)=-d_{P_B}Q$ and that, as seen in the proof of Theorem 2 of \cite{CW19}, $D_{P_B}^2=0$. Hence we have a complex $(\hA, D_{P_B})$.
We can regard this complex as the basic de Rham complex $(\widetilde{\Omega}(\mc A),\delta)$ defined in \cite[Section 5]{DSKVW18-1},
by identifying $\tilde\theta^i_m:=\sum_{j=1}^\ell B^{ij}\theta_{j,m}\in\hA$ with the differential $\delta u_m^i\in\widetilde{\Omega}(\mc A)$.
Under this identification, $(\hF, d_{P_B})$ is identified with the reduced de Rham complex $(\Omega(\mc A),\delta)$.
From Theorem 5.1 and Theorem 5.2 in \cite{DSKVW18-1} (see Remark \ref{rem:filtration}) we have that
$H^p(\hA,D_{P_B})=H^p(\hF,d_{P_B})=\delta_{p,0}\mc C$, $p\geq0$.
\end{proof}
\subsection{The Poisson cohomology of \texorpdfstring{$H_0$}{H0}}\label{ssec:cohoH0}
Let us consider the first Hamiltonian structure of Toda lattice given in \eqref{eq:HToda}. Its associated Poisson bivector \eqref{eq:bic-def}
is
\begin{equation}\label{eq:PToda}
P_0=\int\theta\left(\zeta_1-\zeta\right),
\end{equation}
where for simplicity we denote $\theta_{1,m}=\theta_m$ and $\theta_{2,m}=\zeta_m$, $m\in\mb Z$,
conjugate variables to $u_m^1=u_m$ and $u_m^2=v_m$.
In this section we compute its Poisson cohomology, showing that it is nontrivial but it is concentrated in the ultralocal part of $\hF$. In the sequel, given a closed $\omega\in\hF^p$, $p\geq0$, we denote $[\omega]=\omega+d_{P_0}(\hF^{p-1})\in H^p(\hF,d_{P_0})$.
\begin{theorem}\label{thm:cohoToda}
The Poisson cohomology $H^p(\hF,d_{P_0})$ is trivial for $p>2$. For $p\leq 2$ we have
\begin{align}
H^0(\hF,d_{P_0})
&=
\mc C[\smallint 1]\oplus\mc C[\smallint u]\oplus\mc C[\smallint v]\label{coh1}
\,,
\\
H^1(\hF,d_{P_0})&=\mc C[\smallint \theta]\oplus\mc C[\smallint \zeta]\oplus\mc C[\smallint\left(u\theta-v\zeta\right)]\,,\label{coh2}\\
H^2(\hF,d_{P_0})&=\mc C[\smallint\theta\zeta]
\label{coh3}\,.
\end{align}
\end{theorem}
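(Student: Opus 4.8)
The plan is to lift the whole computation to the complex $(\hA,D_{P_0})$, exactly as in the ultralocal case, and only afterwards descend to $\hF$. Since $P_0=\int\theta(\zeta_1-\zeta)$ involves only the odd variables, $\frac{\delta P_0}{\delta u}=\frac{\delta P_0}{\delta v}=0$, so the operator $D_{P_0}$ of \eqref{0110:eq2} already encodes the full differential: explicitly $D_{P_0}u_m=\zeta_{m+1}-\zeta_m$, $D_{P_0}v_m=\theta_m-\theta_{m-1}$, and $D_{P_0}\theta_m=D_{P_0}\zeta_m=0$. One checks as before that $D_{P_0}^2=0$, that $D_{P_0}$ commutes with $\cS$, and that $d_{P_0}$ is induced by $-D_{P_0}$ on passing to $\hF$. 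So the first task is to compute $H^\bullet(\hA,D_{P_0})$ and the second is to pass to the quotient by $(\cS-1)\hA$.

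For the first task I would perform an invertible change of odd generators: replace $\{\zeta_m\}$ by $\{\delta u_m:=\zeta_{m+1}-\zeta_m\}_{m\in\Z}\cup\{\zeta_0\}$ and $\{\theta_m\}$ by $\{\delta v_m:=\theta_m-\theta_{m-1}\}_{m\in\Z}\cup\{\theta_0\}$; every $\zeta_m$ and $\theta_m$ is recovered by telescoping, so this is a genuine coordinate change. Since $\delta u_m=D_{P_0}u_m$ and $\delta v_m=D_{P_0}v_m$, the subalgebra $\A[\delta u_m,\delta v_m]$ equipped with $D_{P_0}$ is exactly the basic de Rham complex $(\widetilde\Omega(\A),\delta)$ of \cite[Sec.~5]{DSKVW18-1}, whereas $\theta_0,\zeta_0$ are $D_{P_0}$-closed and appear in no image. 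Thus $(\hA,D_{P_0})\cong(\widetilde\Omega(\A),\delta)\otimes(\Lambda[\theta_0,\zeta_0],0)$. Invoking the difference Poincar\'e lemma (Theorem 5.1 of \cite{DSKVW18-1}, valid here by Remark~\ref{rem:filtration}), $H^p(\widetilde\Omega(\A),\delta)=\delta_{p,0}\mc C$, so by Künneth $\mb H^\bullet:=H^\bullet(\hA,D_{P_0})=\Lambda[\theta_0,\zeta_0]$; that is $\mb H^0=\mc C$, $\mb H^1=\mc C\theta_0\oplus\mc C\zeta_0$, $\mb H^2=\mc C\theta_0\zeta_0$, and $\mb H^p=0$ for $p\geq3$. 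This already explains both the vanishing above degree $2$ and the ultralocal shape of the answer.

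To descend, I would use the short exact sequence of complexes $0\to(\cS-1)\hA\to\hA\to\hF\to0$ (the subspace is a subcomplex because $D_{P_0}$ commutes with $\cS$). Since $\ker(\cS-1)|_{\hA}=\mc C$ lives in degree $0$, the map $\cS-1$ identifies $((\cS-1)\hA,D_{P_0})$ with $(\hA/\mc C,D_{P_0})$, whence $H^p((\cS-1)\hA,D_{P_0})\cong\mb H^p$ for $p\geq1$ and $=0$ for $p=0$. The crucial point is that $\cS$ acts trivially on $\mb H^\bullet$: the representatives satisfy $(\cS-1)\theta_0=D_{P_0}v_1$, $(\cS-1)\zeta_0=D_{P_0}u_0$, and $(\cS-1)(\theta_0\zeta_0)=D_{P_0}\bigl(-u_0\theta_0+v_1\zeta_0+v_1\,\delta u_0\bigr)$ are all exact in $\hA$. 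Hence the inclusion-induced maps $H^p((\cS-1)\hA)\to\mb H^p$ vanish, the long exact sequence splits into $0\to\mb H^p\to H^p(\hF,d_{P_0})\to\mb H^{p+1}\to0$, and therefore $H^p(\hF,d_{P_0})\cong\mb H^p\oplus\mb H^{p+1}$. Reading off dimensions reproduces \eqref{coh1}--\eqref{coh3}: the $\mb H^p$ summand supplies $[\int 1]$, then $[\int\theta],[\int\zeta]$, then $[\int\theta\zeta]$, while the $\mb H^{p+1}$ summand, reached by the connecting map $\partial$, supplies in degree $0$ the Casimirs $[\int u],[\int v]$ — for which $\partial[\int u]=[\zeta_1-\zeta]$ and $\partial[\int v]=[\theta-\theta_{-1}]$ are the two nonzero classes of $H^1((\cS-1)\hA)\cong\mb H^1$ (they are exact in $\hA$ but not inside the subcomplex, since $u_0,v_0\notin(\cS-1)\hA$) — and in degree $1$ the mixed class $[\int(u\theta-v\zeta)]$ spanning $\mb H^2$.

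The main obstacle is the descent step, not the acyclicity. Unlike the ultralocal case, where $(\hF,d_P)$ was literally the reduced de Rham complex and Theorem 5.2 of \cite{DSKVW18-1} applied verbatim, here the splitting $\hA\cong\widetilde\Omega(\A)\otimes\Lambda[\theta_0,\zeta_0]$ is \emph{not} preserved by $\cS-1$ (because $\cS\theta_0=\theta_0+\delta v_1$ mixes the two factors), so one cannot simply quote the reduced Poincar\'e lemma. The substitute is the long exact sequence together with the verification that $\cS$ acts as the identity on the finite-dimensional space $\mb H^\bullet=\Lambda[\theta_0,\zeta_0]$; establishing this triviality on the explicit closed representatives, and matching the resulting classes with the Casimirs and the mixed class $\int(u\theta-v\zeta)$, is the technical heart of the argument. (As in the ultralocal theorem, the infinitude of variables in the Koszul/de Rham step is handled by normality of $\A$ rather than by a naive Künneth over infinitely many factors.)
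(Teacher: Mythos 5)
Your proof is correct, and it has the same two-step architecture as the paper's: an auxiliary computation of $H^\bullet(\hA,D_{P_0})$, followed by descent to $\hF$ via the short exact sequence \eqref{eq:short} and its long exact sequence. The descent step coincides with the paper's argument (Lemma \ref{lem:coh} and the proof of Theorem \ref{thm:cohoToda}): the same vanishing of the inclusion-induced maps through the primitives $(\cS-1)\theta=D_{P_0}(v_1)$, $(\cS-1)\zeta=D_{P_0}(u)$, $(\cS-1)(\theta\zeta)=D_{P_0}(v_1\zeta_1-u\theta)$ -- your primitive $-u\theta_0+v_1\zeta_0+v_1\,\delta u_0$ is literally $v_1\zeta_1-u\theta$ after telescoping -- and the same identification of $[\smallint u]$, $[\smallint v]$, $[\smallint(u\theta-v\zeta)]$ through the connecting homomorphisms. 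Where you genuinely diverge is the first step, i.e.\ Proposition \ref{prop:coh1}. The paper proves it by hand: local homotopy operators $h_{n,i}$ built from the integration \eqref{integral} reduce any cocycle along the filtration to $\hA^p_{0,0}$ (Lemma \ref{lem:coh1}), and a second reduction in the degree \eqref{eq:degree} (Lemma \ref{lem:coh2}) lands it in $\Lambda[\theta,\zeta]$. You instead perform the invertible change of odd generators $\{\theta_m,\zeta_m\}\mapsto\{\delta u_m,\delta v_m\}\cup\{\theta_0,\zeta_0\}$ (legitimate, by your telescoping/finite-support argument), which exhibits $(\hA,D_{P_0})$ as $(\widetilde\Omega(\A),\delta)\otimes(\Lambda[\theta_0,\zeta_0],0)$, and you conclude by Theorem 5.1 of \cite{DSKVW18-1} (valid here by Remark \ref{rem:filtration} and the standing assumption $\mc C=\bar{\mc C}$) plus a K\"unneth argument that is unproblematic because the second factor is finite-dimensional with zero differential. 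This is exactly the mechanism the paper reserves for the ultralocal case, and your version makes transparent \emph{why} the answer is $\Lambda[\theta_0,\zeta_0]$: the generators $\theta_0,\zeta_0$ span a complement of the image of the discrete gradient inside the odd variables, and they are precisely the obstruction to acyclicity. Your route buys brevity and reuses an established Poincar\'e lemma; the paper's route is self-contained and produces homotopy operators and filtration estimates that could be recycled for other degenerate leading terms. Two cosmetic points only: ``$\theta_0,\zeta_0$ appear in no image'' should rather be ``$D_{P_0}$ acts only on the $\widetilde\Omega(\A)$ factor'' (e.g.\ $D_{P_0}(u\theta_0)=\delta u_0\,\theta_0$ does contain $\theta_0$), and the non-vanishing of $\partial[\smallint u]$, $\partial[\smallint v]$ in $H^1((\cS-1)\hA,D_{P_0})$ is cleanest via the isomorphism with $H^1(\hA,D_{P_0})$ coming from injectivity of $\cS-1$ on $\hA^1$, rather than via the remark that $u_0,v_0\notin(\cS-1)\hA$, which only excludes one particular primitive.
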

The proof of Theorem \ref{thm:cohoToda} is obtained in two main steps.
Recall from \cite{CW19} that we have the differential on $\hA$
\begin{equation}\label{eq:DP0}
D_{P_0}=
\sum_{m\in\mb Z}
\left((\zeta_{m+1}-\zeta_{m})
\frac{\partial }{\partial u_m}
+(\theta_{m}-\theta_{m-1})
\frac{\partial }{\partial v_m}
\right)\,,
\end{equation}
such that $\int D_{P_0}(Q)=-d_{P_0}Q$, for every $Q\in\hF$.
As a first step we compute, in Proposition \ref{prop:coh1}, the cohomology of 
the auxiliary complex $(\hA,D_{P_0})$.
In the second step, since $\cS-1$ and $D_{P_0}$
commute we have that $((\cS-1)\hA,D_{P_0})\subset(\hA,D_{P_0})$ is a subcomplex.
Hence, we use the
short exact sequence of complexes
\begin{equation}\label{eq:short}
0\longrightarrow ( (\mc S-1)\hA,D_{P_0})
\stackrel{\alpha}{\longrightarrow}(\hA,D_{P_0})\stackrel{\beta}{\longrightarrow}
(\hF,d_{P_0})\longrightarrow 0
\,,
\end{equation}
where $\alpha$ is the inclusion map and $\beta=\int$ is the projection map,
together with the standard long exact sequence machinery to compute the cohomology of the complex $(\hF,d_{P_0})$.

Let us start by recalling the filtration $\hA^p_{n,i}$ of $\hA$ introduced in Section \ref{sec:2.3}. We define local homotopy operators $h_{n,i}:\hA_{n,i}^p\to\hA_{n,i}^{p-1}$, $n\in\mb Z$, $i=1,2$, as follows
\begin{equation}\label{eq:hom-op}
h_{n,1}=\left(\int \ud u_{n}\right)X_{n,1}
\quad\text{and}\quad
h_{n,2}=
\left(\int \ud v_{n}\right)X_{n,2}
\,,
\end{equation}
where the integral map \eqref{integral} is extended to an endomorphism of $\hA_{n,i}^p$ by letting it act trivially on the variables $\theta_m$ and $\zeta_m$, and $X_{n,1}$ and $X_{n,2}$ are the following vector fields
$$
X_{n,1}
=
\frac12\left(\sum_{m>n}\frac{\partial}{\partial \zeta_{m}}
-\sum_{m\leq n}\frac{\partial}{\partial \zeta_{m}}\right)
\quad\text{and}\quad
X_{n,2}
=\frac12\left(\sum_{m\geq n}\frac{\partial}{\partial \theta_{m}}
-
\sum_{m<n}\frac{\partial}{\partial \theta_{m}}
\right)\,.
$$
For example, for $\omega=f P\in\hA^p_{n,1}$, where $f\in\mc A_{n,1}$ and $P\in\mc C[\theta_m,\zeta_m\mid m\in\mb Z]$ of degree $p$, we have
\begin{equation}\label{20241106:eq4}
h_{n,1}(\omega)
=\frac12\left(\int f \,\ud u_{n}\right)
\left(
\sum_{m>n}\frac{\partial P}{\partial \zeta_{m}}
-\sum_{m\leq n}\frac{\partial P}{\partial \zeta_{m}}
\right)
\,.
\end{equation}
By a direct computation we have ($m,n\in\mb Z$)
\begin{equation}\label{20241106:eq2}
X_{n,1}(\zeta_{m+1}-\zeta_m)=\delta_{m,n}\,,
\qquad
X_{n,2}(\theta_{m}-\theta_{m-1})=\delta_{m,n}\,.
\end{equation}
\begin{lemma}\label{lem:coh1}
Let $\omega\in\hA^p_{n,i}$. The operators $h_{n,i}$ satisfy the following
homotopy condition
\begin{equation}\label{eq:homotopy}
h_{n,i}(D_{P_0}(\omega))+D_{P_0}(h_{n,i}(\omega))-\omega\in\hA^p_{n,i-1}
\,.
\end{equation}
\end{lemma}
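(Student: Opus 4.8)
The plan is to establish \eqref{eq:homotopy} for $i=1$; the case $i=2$ is identical after exchanging the roles of $u,v$ and of $\theta,\zeta$ and invoking the second identity in \eqref{20241106:eq2}. Writing $h_{n,1}=(\smallint\ud u_n)X_{n,1}$ as in \eqref{eq:hom-op}, I would reduce the claim to two ingredients: an \emph{exact} operator identity for the graded commutator of $X_{n,1}$ with $D_{P_0}$, and the commutation of $D_{P_0}$ with $\smallint\ud u_n$ \emph{modulo} the lower piece $\hA^p_{n,0}$ of the filtration.

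First ingredient. Both $X_{n,1}$ and $D_{P_0}$ (see \eqref{eq:DP0}) are odd derivations of $\hA$, so their anticommutator $X_{n,1}D_{P_0}+D_{P_0}X_{n,1}$ is an even derivation and is therefore determined by its action on the generators $u_k,v_k,\theta_k,\zeta_k$. Since $X_{n,1}$ differentiates only in the $\zeta$ variables while $D_{P_0}$ differentiates only in $u,v$, the single surviving contribution is on $u_k$, where $D_{P_0}(u_k)=\zeta_{k+1}-\zeta_k$ and, by the first identity in \eqref{20241106:eq2}, $X_{n,1}(\zeta_{k+1}-\zeta_k)=\delta_{k,n}$. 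Hence one obtains the exact identity $X_{n,1}D_{P_0}+D_{P_0}X_{n,1}=\partial_{u_n}$, where $\partial_{u_n}$ denotes the partial derivative with respect to $u_n$.

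Second ingredient. By construction $\smallint\ud u_n$ acts trivially on the $\theta,\zeta$ variables, and the coefficients $\zeta_{m+1}-\zeta_m$ and $\theta_m-\theta_{m-1}$ of $D_{P_0}$ are independent of $u,v$; thus they pass through the integral, and $[D_{P_0},\smallint\ud u_n]$ becomes a sum of terms $c_m\,[\partial_{u_m},\smallint\ud u_n]$ and $c'_m\,[\partial_{v_m},\smallint\ud u_n]$ with coefficients $c_m,c'_m$ depending only on $\theta,\zeta$. Each commutator of a partial derivative with the antiderivative $\smallint\ud u_n$ takes values in the constants of integration, which by \eqref{integral} lie in $\ker\partial_{u_n}|_{\mc A_{n,1}}=\mc A_{n,0}$; since multiplication by the $\theta,\zeta$-coefficients does not alter the $u,v$-filtration level, this yields
\[
D_{P_0}\circ(\smallint\ud u_n)\equiv(\smallint\ud u_n)\circ D_{P_0}\pmod{\hA^p_{n,0}}.
\]

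Combining the two ingredients, for $\omega\in\hA^p_{n,1}$ one gets
\[
h_{n,1}(D_{P_0}\omega)+D_{P_0}(h_{n,1}\omega)\equiv(\smallint\ud u_n)\big(X_{n,1}D_{P_0}+D_{P_0}X_{n,1}\big)\omega\pmod{\hA^p_{n,0}},
\]
and the right-hand side equals $(\smallint\ud u_n)\partial_{u_n}\omega$, which is congruent to $\omega$ modulo $\hA^p_{n,0}$ because $\smallint\ud u_n$ inverts $\partial_{u_n}$ up to the same constant of integration. As $\hA^p_{n,i-1}=\hA^p_{n,0}$ for $i=1$, this is exactly \eqref{eq:homotopy}. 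I expect the main difficulty to be the filtration bookkeeping in the second ingredient: one must verify, using the normality of $\mc A$, that every error term genuinely lands in $\hA^p_{n,i-1}$ and that $X_{n,1}$, $\smallint\ud u_n$ and $D_{P_0}$ all respect the filtration, so that the chain of congruences is legitimate.
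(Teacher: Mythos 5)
Your proposal is correct, and its computational core is the same as the paper's: both arguments rest on the identities \eqref{20241106:eq2}, on the commutativity of partial derivatives, and on the fact that $\smallint\ud u_n$ inverts $\dev/\dev u_n$ up to $\ker\dev_{u_n}|_{\mc A_{n,1}}=\mc A_{n,0}$. The organization, however, differs. The paper reduces \eqref{eq:homotopy} to the single statement that $\dev_{u_n^i}$ annihilates $h_{n,i}(D_{P_0}(\omega))+D_{P_0}(h_{n,i}(\omega))-\omega$, and verifies this by expanding both terms on monomials $\omega=fP$ (its equations \eqref{20241106:eq1} and \eqref{20241106:eq3}). You instead isolate two reusable statements: the exact anticommutator identity $X_{n,1}D_{P_0}+D_{P_0}X_{n,1}=\dev_{u_n}$, and the commutation of $D_{P_0}$ with $\smallint\ud u_n$ modulo $\hA^p_{n,0}$, and then compose them. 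Your packaging makes the Koszul-type homotopy mechanism behind \eqref{eq:hom-op} transparent, at the modest cost of having to check separately that $X_{n,1}$, $D_{P_0}$ and $\smallint\ud u_n$ all respect the filtration (these checks are immediate, as you note).

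One step needs repair. You justify $X_{n,1}D_{P_0}+D_{P_0}X_{n,1}=\dev_{u_n}$ by saying that an even derivation of $\hA$ is determined by its values on the generators $u_k,v_k,\theta_k,\zeta_k$. That is valid when $\mc A=\mc P_\ell$, but Lemma \ref{lem:coh1} is stated for a general normal algebra of difference functions, and such algebras (for instance $\mb C[u_n^{\pm1},v_n^{\pm1},\log u_n\mid n\in\mb Z]$, which occurs in the paper's own treatment of the Toda lattice) are \emph{not} generated by the variables $u_k,v_k$: two distinct derivations can agree on all $u_k,v_k$ and differ on $\log u_k$, say. The identity itself survives in full generality, because $D_{P_0}$ and $X_{n,1}$ are explicit sums of the form (coefficient)$\cdot$(partial derivative), with coefficients independent of the $u,v$ variables (respectively, constants); hence the second-order terms in the two compositions cancel, or, equivalently, one verifies directly on elements $fP$, with $f\in\mc A$ and $P\in\mc C[\theta_m,\zeta_m\mid m\in\mb Z]$, that $X_{n,1}(D_{P_0}(fP))+D_{P_0}(X_{n,1}(fP))=\dev_{u_n}(f)\,P$, which is exactly the computation carried out in the paper. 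With this one-line repair, and the symmetric argument for $i=2$ using the second identity in \eqref{20241106:eq2}, your proof is complete.
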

\begin{proof}
It is clear from \eqref{eq:DP0} that $D_{P_0}(\hA^p_{n,i})\subset\hA_{n,i}^{p+1}$. Hence,
$h_{n,i}(D_{P_0}(\omega))+D_{P_0}(h_{n,i}(\omega))-\omega\in\hA_{n,i}^p$.
To prove the claim we need to show that
\begin{equation}\label{eq:toprove}
\frac{\partial }{\partial u_{n}^i}\left(h_{n,i}(D_{P_0}(\omega))+D_{P_0}(h_{n,i}(\omega))-\omega
\right)=0
\end{equation}
(where we recall that we have set $u_n^1=u_n$ and $u_n^2=v_n$).

Let us prove \eqref{eq:toprove} for $i=1$. By linearity,
it suffices to check it for $\omega=fP$, where $f\in\mc A_{n,1}$ and $P\in\mc C[\theta_m,\zeta_m\mid m\in\mb Z]$ of degree $p$.
From the definition of the differential $D_{P_0}$ in \eqref{eq:DP0}, the definition of the homotopy operator $h_{n,1}$ and the first identity in \eqref{20241106:eq2} we have
\begin{equation}\label{20241106:eq1}
\begin{split}
h_{n,1}(D_{P_0}(\omega))&=
\left(\int\frac{\partial f}{\partial u_n}\,\ud u_n \right)P
\\
&-\sum_{m\in\mb Z}\left(\int\frac{\partial f}{\partial u_m}\,\ud u_n \right) (\zeta_{m+1}-\zeta_m)X_{n,1}(P)
\\
&
-\sum_{m\in\mb Z}\left(\int\frac{\partial f}{\partial v_m}\,\ud u_n \right) (\theta_{m}-\theta_{m-1})X_{n,1}(P)
\,.\end{split}
\end{equation}
On the other hand, using \eqref{20241106:eq4} and the definition of $D_{P_0}$ in \eqref{eq:DP0} we get
\begin{equation}\label{20241106:eq3}
\begin{split}
D_{P_0}(h_{n,1}(\omega))&=
-\sum_{m\in\mb Z}\frac{\partial }{\partial u_m}\left(\int f\,\ud u_n \right) (\zeta_{m+1}-\zeta_m)X_{n,1}(P)
\\
&
-\sum_{m\in\mb Z}\frac{\partial }{\partial v_m}\left(\int f\,\ud u_n \right) (\theta_{m}-\theta_{m-1})X_{n,1}(P)
\,.
\end{split}
\end{equation}
Equation \eqref{eq:toprove} for $\omega=fP$ follows from \eqref{20241106:eq1} and \eqref{20241106:eq3} using the fact that partial derivatives commute and that
$\frac{\partial }{\partial u_n}(\int g \,\ud u_n)=g$, for every $g\in\mc A_{n,1}$. The proof of \eqref{eq:toprove} for $i=2$ is done similarly using the second identity in \eqref{20241106:eq2} to compute $h_{n,2}(D_{P_0}(\omega))$.
\end{proof}
Let us assume that $\omega\in\hA^p_{n,i}$ is such that $D_{P_0}(\omega)=0$. Then, by \eqref{eq:homotopy}
we have that $\omega=D_{P_0}(h_{n,i}(\omega))+\widetilde \omega$, for some $\widetilde\omega\in\hA^p_{n,i-1}$ such that
$D_{P_0}(\widetilde \omega)=0$. Repeating the same argument finitely many times we have that
$\omega=D_{P_0}(\bar{\omega})+\widetilde \omega$ for some $\bar{\omega}\in \hA^{p-1}_{n,i}$and $\widetilde\omega\in\hA^p_{0,0}$.
Hence,
\begin{equation}\label{20241101:eq3}
\ker\left( D_{P_0}:\hA^{p}\rightarrow\hA^{p+1}\right)
=D_{P_0}(\hA^{p-1})+\hA^{p}_{0,0}
\,,
\end{equation}
for every $p\geq0$.

Note that $\hA^p_{0,0}$ is spanned (over $\mc C$) by monomials (recall that we denoted $\theta_{1,n}=\theta_{n}$ and $\theta_{2,n}=\zeta_n$)
\begin{equation}\label{20241101:eq1}
\omega=\theta_{i_1,n_1}\theta_{i_2,n2}\dots\theta_{i_p,n_p}
\,,
\end{equation}
where $(\epsilon(n_1),i_1)>(\epsilon(n_2),i_2)>\dots(\epsilon(i_p),n_p)$ in the lexicographic order. We define the degree of $\omega$ as in \eqref{20241101:eq1} by letting
\begin{equation}\label{eq:degree}
\deg{\omega}=\sum_{k=1}^p\epsilon(n_p)\,.
\end{equation}
We then have the direct sum decomposition
$$
\hA^p_{0,0}=\bigoplus_{n\geq0}\hA^p_{0,0}[n]
\,,
$$
where $\hA^p_{0,0}[n]$ is the subspace consisting of homogeneous elements of degree \eqref{eq:degree} equal to $n$.
\begin{lemma}\label{lem:coh2}
Let $\omega\in\hA^p_{0,0}$. There exist $\widetilde\omega\in \hA^p_{0,0}[0]$ such that
\begin{equation}\label{20241101:eq2}
\omega-\widetilde{\omega}\in D_{P_{0}}(\hA^{p-1})
\,.
\end{equation}
\end{lemma}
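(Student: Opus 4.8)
The plan is to argue by induction on the degree \eqref{eq:degree}, exploiting that $D_{P_0}$ is an odd derivation of $\hA$ which on the coordinate generators acts by $D_{P_0}(u_k)=\zeta_{k+1}-\zeta_k$ and $D_{P_0}(v_k)=\theta_k-\theta_{k-1}$, as read off from \eqref{eq:DP0}. Since in this section $\mc C=\overline{\mc C}$, every element of $\hA^p_{0,0}$ is a $\mc C$-linear combination of the monomials \eqref{20241101:eq1} in the variables $\theta_m,\zeta_m$ with constant coefficients, and every such monomial is $D_{P_0}$-closed, because the partial derivatives $\partial/\partial u_m$ and $\partial/\partial v_m$ annihilate it; this is precisely the inclusion $\hA^p_{0,0}\subset\ker D_{P_0}$ already exploited in \eqref{20241101:eq3}. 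By linearity it suffices to treat a single monomial $\omega$, and by the decomposition $\hA^p_{0,0}=\bigoplus_{n\ge0}\hA^p_{0,0}[n]$ we may assume $\omega$ is homogeneous of degree $d=\deg\omega$. If $d=0$ there is nothing to do and we take $\widetilde\omega=\omega$.

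Assume $d>0$. First I would isolate one factor of positive degree: the monomial contains some factor $\theta_{i_1,n_1}$ with $n_1\neq0$, so I write $\omega=\pm\,\theta_{i_1,n_1}\,\omega'$, where $\omega'$ collects the remaining factors. Note that $\deg\omega'=d-\epsilon(n_1)<d$ and, as observed above, $D_{P_0}(\omega')=0$. The key step is to replace the shifted generator $\theta_{i_1,n_1}$ by its unshifted counterpart $\theta_{i_1,0}$ modulo an exact term, using the telescoping identities
\[
\zeta_{n}-\zeta_0=D_{P_0}\Big(\textstyle\sum_k u_k\Big),\qquad
\theta_{n}-\theta_0=D_{P_0}\Big(\textstyle\sum_k v_k\Big),
\]
with the index ranges and overall sign dictated by whether $n>0$ or $n<0$. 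Multiplying such an identity by $\omega'$ on the right and using that $D_{P_0}$ is a derivation together with $D_{P_0}(\omega')=0$ and the fact that the $u_k,v_k$ are even, the correction $\big(D_{P_0}(\sum_k u_k)\big)\omega'$ equals $D_{P_0}\big((\sum_k u_k)\,\omega'\big)$ exactly. Hence
\[
\omega\equiv\pm\,\theta_{i_1,0}\,\omega'\pmod{D_{P_0}(\hA^{p-1})},
\]
and the right-hand side again lies in $\hA^p_{0,0}$ but has strictly smaller degree $d-\epsilon(n_1)$. If $\theta_{i_1,0}$ already occurs in $\omega'$ the product vanishes and $\omega$ is outright exact, which is even better.

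The inductive hypothesis applied to this lower-degree representative then yields the required $\widetilde\omega\in\hA^p_{0,0}[0]$, and since the degree is a nonnegative integer the process terminates after finitely many steps; for $p\geq3$ it forces $\widetilde\omega=0$, as $\hA^p_{0,0}[0]=0$ because the only degree-zero generators are $\theta$ and $\zeta$. The single genuinely delicate point is the sign bookkeeping: one must track the Koszul signs arising from $D_{P_0}$ being an odd derivation and from reordering the anticommuting $\theta$'s into the standard decreasing order of \eqref{20241101:eq1}. Because every auxiliary primitive $\sum_k u_k$ and $\sum_k v_k$ is \emph{even}, these signs never obstruct the Leibniz step, so the correction always reassembles into one honest exact term and the reduction goes through.
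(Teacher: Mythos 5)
Your proof is correct and follows essentially the same route as the paper's: reduce by linearity to a monomial of positive degree, then trade the shifted generator for a less-shifted one modulo an exact term built from $D_{P_0}(u_k)=\zeta_{k+1}-\zeta_k$ and $D_{P_0}(v_k)=\theta_k-\theta_{k-1}$, using that the remaining factor $\omega'$ is $D_{P_0}$-closed so the Leibniz rule reassembles the correction into $D_{P_0}$ of an element of $\hA^{p-1}$. The only difference is cosmetic: the paper lowers the shift index one step at a time (its four cases), whereas you telescope the primitives $\sum_k u_k$, $\sum_k v_k$ to jump directly to $\theta_{i_1,0}$ or $\zeta_0$; both iterations terminate by the same degree argument.
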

\begin{proof}
By linearity it suffices to check \eqref{20241101:eq2} for $\omega$ as in \eqref{20241101:eq1} with $\deg{\omega}>0$ (and thus $\epsilon(n_1)>0$, namely $n_1\neq0$.). 
We have four possible cases.
\begin{enumerate}[(a)]
\item
If $i_1=1$ and $n_1>0$, then, using the definition of $D_{P_0}$ we have
$$
\omega=\theta_{n_1-1}\theta_{i_2,n_2}\dots\theta_{i_p,n_p}+D_{P_0}(v_{n_1}\theta_{i_2,n_2}\dots\theta_{i_p,n_p})
\,,
$$
and $\deg(\theta_{n_1-1}\theta_{i_2,n_2}\dots\theta_{i_p,n_p})<\deg(\omega)$ since $\epsilon(n_1-1)<\epsilon(n_1)$.
\item
If $i_1=1$ and $n_1<0$, we have
$$
\omega=\theta_{n_1+1}\theta_{i_2,n_2}\dots\theta_{i_p,n_p}-D_{P_0}(v_{n_1+1}\theta_{i_2,n_2}\dots\theta_{i_p,n_p})
\,,
$$
and $\deg(\theta_{n_1+1}\theta_{i_2,n_2}\dots\theta_{i_p,n_p})<\deg(\omega)$ since $\epsilon(n_1+1)<\epsilon(n_1)$.
\item
If $i_1=2$ and $n_1>0$,  we have
$$
\omega=\zeta_{n_1-1}\theta_{i_2,n_2}\dots\theta_{i_p,n_p}+D_{P_0}(u_{n_1-1}\theta_{i_2,n_2}\dots\theta_{i_p,n_p})
\,,
$$
and $\deg(\zeta_{n_1-1}\theta_{i_2,n_2}\dots\theta_{i_p,n_p})<\deg(\omega)$ since $\epsilon(n_1-1)<\epsilon(n_1)$.
\item
If $i_1=2$ and $n_1<0$, we have
$$
\omega=\zeta_{n_1+1}\theta_{i_2,n_2}\dots\theta_{i_p,n_p}-D_{P_0}(u_{n_1}\theta_{i_2,n_2}\dots\theta_{i_p,n_p})
\,,
$$
and $\deg(\zeta_{n_1+1}\theta_{i_2,n_2}\dots\theta_{i_p,n_p})<\deg(\omega)$ since $\epsilon(n_1+1)<\epsilon(n_1)$.
\end{enumerate}
We have thus found $\widetilde\omega_1\in \hA^p_{0,0}$ with $\deg(\widetilde{\omega}_1)<\deg(\omega)$ 
such that $\omega-\widetilde{\omega}_1\in D_{P_{0}}(\hA^{p-1})$. Iterating this process finitely many times we arrive at \eqref{20241101:eq2}.
\end{proof}
We are then able to describe the cohomology of the differential $(\hA,D_{P_0})$. For an element $\omega\in\hA^p$ such that
$D_{P_0}(\omega)=0$, we denote by $[\omega]=\omega+D_{P_0}(\hA^{p-1})\in H^{p}(\hA,D_{P_0})$ its cohomology class.
\begin{proposition}\label{prop:coh1}
The cohomology $H(\hA,D_{P_0})$ is finite-dimensional and we have
\begin{equation}
H(\hA,D_{P_0})\cong \mc C\oplus\mc C\theta\oplus\mc C\zeta\oplus\mc C\theta\zeta
\,.
\end{equation}
In particular, $H^0(\hA,D_{P_0})=\mc C[1]$, $H^1(\hA,D_{P_0})=\mc C[\theta]\oplus\mc C[\zeta]$,
$H^2(\hA,D_{P_0})=\mc C[\theta\zeta]$
and $H^p(\hA,D_{P_0})=0$, for $p\geq3$.
\end{proposition}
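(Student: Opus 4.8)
The plan is to combine the two preceding lemmas into a reduction of every cocycle to the ``ultralocal, degree-zero'' part $\hA^p_{0,0}[0]$, and then to show that this reduction is faithful at the level of cohomology. First, iterating the homotopy identity of Lemma \ref{lem:coh1} as already recorded in \eqref{20241101:eq3}, every $D_{P_0}$-closed $\omega\in\hA^p$ is cohomologous to an element of $\hA^p_{0,0}$; applying Lemma \ref{lem:coh2} afterwards shows it is in fact cohomologous to an element of the degree-zero subspace $\hA^p_{0,0}[0]$. Since the monomials spanning $\hA^p_{0,0}[0]$ involve only $\theta_0=\theta$ and $\zeta_0=\zeta$ (as $\epsilon(n)>0$ for $n\neq0$), and these generators are odd, over the field $\mc C=\bar{\mc C}$ one has $\hA^0_{0,0}[0]=\mc C$, $\hA^1_{0,0}[0]=\mc C\theta\oplus\mc C\zeta$, $\hA^2_{0,0}[0]=\mc C\theta\zeta$, and $\hA^p_{0,0}[0]=0$ for $p\geq3$. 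These four elements are manifestly $D_{P_0}$-closed, since by \eqref{eq:DP0} the differential only involves $\partial_{u_m},\partial_{v_m}$ and they contain no $u_m,v_m$. Hence $\ker(D_{P_0}\colon\hA^p\to\hA^{p+1})=D_{P_0}(\hA^{p-1})+\hA^p_{0,0}[0]$, and the second isomorphism theorem identifies $H^p(\hA,D_{P_0})$ with the quotient of $\hA^p_{0,0}[0]$ by its intersection with $D_{P_0}(\hA^{p-1})$; this already yields the stated upper bounds.

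The crux is therefore to prove that no nonzero element of $\hA^p_{0,0}[0]$ is a coboundary, that is $\hA^p_{0,0}[0]\cap D_{P_0}(\hA^{p-1})=0$. For this I would introduce the graded algebra homomorphism $\rho\colon\hA\to\mc A\otimes_{\mc C}\Lambda$, where $\Lambda$ is the exterior $\mc C$-algebra on two odd generators $\bar\theta,\bar\zeta$, defined by $\rho|_{\mc A}=\mathrm{id}$ and $\rho(\theta_m)=\bar\theta$, $\rho(\zeta_m)=\bar\zeta$ for every $m\in\mb Z$; this is well defined because the anticommutativity of the $\theta_{i,n}$ is compatible with $\bar\theta^2=\bar\zeta^2=0$. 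The point is that $\rho$ collapses each ``difference'' occurring in the differential: $\rho(\zeta_{m+1}-\zeta_m)=0$ and $\rho(\theta_m-\theta_{m-1})=0$, so the explicit form \eqref{eq:DP0} gives $\rho(D_{P_0}(\eta))=0$ for all $\eta\in\hA$, i.e. $\rho$ annihilates every coboundary. On the other hand $\rho$ sends $1,\theta,\zeta,\theta\zeta$ to $1,\bar\theta,\bar\zeta,\bar\theta\bar\zeta$, which are $\mc C$-linearly independent in $\mc A\otimes_{\mc C}\Lambda$, so $\rho$ is injective on $\hA^p_{0,0}[0]$. Consequently, if $c\in\hA^p_{0,0}[0]$ is a coboundary then $\rho(c)=0$ forces $c=0$, giving the required injectivity.

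Combining the two halves yields $H^p(\hA,D_{P_0})\cong\hA^p_{0,0}[0]$ for all $p$, which is exactly the asserted description with representatives $[1]$, $[\theta]$, $[\zeta]$, $[\theta\zeta]$. The main obstacle, and the only genuinely new ingredient beyond the homotopy machinery of Lemmas \ref{lem:coh1} and \ref{lem:coh2}, is the construction of the collapsing homomorphism $\rho$ together with the observation that it simultaneously kills the image of $D_{P_0}$ and separates the four candidate generators; everything else is the assembly of reductions already established. A minor point to be careful about is that the argument must work over a general normal algebra $\mc A$ rather than only $\mc P_\ell$, which is why I prefer the algebraic map $\rho$ (leaving the coefficients in $\mc A$ untouched) to a naive ``evaluation at $u=v=0$'' that need not exist in $\mc A$.
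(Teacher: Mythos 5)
Your proposal is correct, and its skeleton coincides with the paper's own proof: both combine the iterated homotopy identity \eqref{20241101:eq3} (coming from Lemma \ref{lem:coh1}) with Lemma \ref{lem:coh2} to reduce every cocycle, modulo coboundaries, to the subspace $\hA^p_{0,0}[0]$, and both then observe that this subspace is $\mc C$, $\mc C\theta\oplus\mc C\zeta$, $\mc C\theta\zeta$, $0$ in degrees $p=0,1,2,\geq3$ respectively. Where you genuinely depart from the paper is the final nondegeneracy step: the paper merely \emph{asserts} that $D_{P_0}(\hA^{p-1})\cap\hA^p_{0,0}[0]=0$ for $p\leq2$, leaving the verification to the reader, whereas you prove it by constructing the collapsing homomorphism $\rho\colon\hA\to\mc A\otimes_{\mc C}\Lambda$ with $\rho|_{\mc A}=\mathrm{id}$, $\rho(\theta_m)=\bar\theta$, $\rho(\zeta_m)=\bar\zeta$. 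This argument is sound: $\rho$ is well defined since the images of the odd generators anticommute and square to zero; every term of $D_{P_0}(\eta)$ carries a left factor $\zeta_{m+1}-\zeta_m$ or $\theta_m-\theta_{m-1}$, so multiplicativity of $\rho$ kills the entire image of $D_{P_0}$; and $\rho$ maps $1,\theta,\zeta,\theta\zeta$ to $1,\bar\theta,\bar\zeta,\bar\theta\bar\zeta$, which are $\mc C$-linearly independent because $\mc A\otimes_{\mc C}\Lambda$ is a free $\mc A$-module on these monomials and $\mc C\subset\mc A$ is a field. Your remark that $\rho$ is preferable to a naive ``evaluation at $u=v=0$'' is also well taken, since such an evaluation need not exist in a general normal algebra of difference functions. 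In short, your write-up buys a complete and uniform-in-$p$ justification of the one step the paper leaves implicit, at the modest cost of introducing the auxiliary algebra $\mc A\otimes_{\mc C}\Lambda$; the paper's version is shorter but relies on the reader checking that no nonzero constant-coefficient ultralocal monomial can arise as a coboundary.
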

\begin{proof}
From \eqref{20241101:eq3} and \eqref{20241101:eq2} we then have
$$
\ker\left( D_{P_0}:\hA^{p}\rightarrow\hA^{p+1}\right)
=D_{P_0}(\hA^{p-1})+\hA^{p}_{0,0}[0]
\,,
$$
for every $p\geq0$. To conclude the proof we note that
$$
\hA^p_{0,0}[0]=
\left\{
\begin{array}{cc}
\mc C\,, & p=0\,,
\\
\mc C\theta\oplus\mc C\zeta\,,&p=1\,,
\\
\mc C\theta \zeta\,,& p=2\,,
\\
0\,,&p\geq3\,.
\end{array}
\right.
$$
and that $D_{P_0}(\hA^{p-1})\cap \hA^p_{0,0}[0]=0$ for $p\leq 2$.
\end{proof}
Next, we use Proposition \ref{prop:coh1} to conclude the proof of Theorem \ref{thm:cohoToda}.
The short exact sequence
\eqref{eq:short} induces the long exact sequence
\begin{equation}\label{eq:long}
\begin{split}
0&\longrightarrow H^0((\cS-1)\hA,D_{P_0})
\stackrel{\alpha_0}{\longrightarrow}H^0(\hA,D_{P_0})\stackrel{\beta_0}{\longrightarrow}
H^0(\hF,d_{P_0})\stackrel{\gamma_0}{\longrightarrow}
\\
&\stackrel{\gamma_0}{\longrightarrow} H^1((\cS-1)\hA,D_{P_0})
\stackrel{\alpha_1}{\longrightarrow}H^1(\hA,D_{P_0})\stackrel{\beta_1}{\longrightarrow}
H^1(\hF,d_{P_0})\stackrel{\gamma_1}{\longrightarrow}
\\
&\stackrel{\gamma_1}{\longrightarrow} H^2( (\cS-1)\hA,D_{P_0})
\stackrel{\alpha_2}{\longrightarrow}H^2(\hA,D_{P_0})\stackrel{\beta_2}{\longrightarrow}
H^2(\hF,d_{P_0})\stackrel{\gamma_2}{\longrightarrow}
\\
&\stackrel{\gamma_2}{\longrightarrow} H^3((\cS-1)\hA,D_{P_0})
\stackrel{\alpha_3}{\longrightarrow}H^3(\hA,D_{P_0})\stackrel{\beta_3}{\longrightarrow}
H^3(\hF,d_{P_0})\stackrel{\gamma_3}{\longrightarrow}
\\
&\stackrel{\gamma_3}{\longrightarrow} H^3((\cS-1)\hA,D_{P_0})
\dots
\,.
\end{split}
\end{equation}
We start by computing the cohomology of the complex $((\mc S-1)\hA,D_{P_0})$ and the maps $\alpha_p$ in \eqref{eq:long}.
\begin{lemma}\label{lem:coh}
\begin{enumerate}[(a)]
\item
We have that $H^0((\mc S-1)\hA,D_{P_0})=0$ and, for $p\geq1$, 
\begin{equation}\label{eq:subcomplex}
H^p((\mc S-1)\hA,D_{P_0})\cong H^p(\hA,D_{P_0})
\,.
\end{equation}
\item
All the linear maps $\alpha_p$, $p\geq0$, in the long exact sequence \eqref{eq:long} are trivial.
\end{enumerate}
\end{lemma}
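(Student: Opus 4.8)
The plan is to complement the short exact sequence \eqref{eq:short} with a second one built from the chain endomorphism $\mc S-1$, which commutes with $D_{P_0}$ and hence is a chain map on $(\hA,D_{P_0})$. The kernel of $\mc S-1$ acting on $\hA$ is exactly $\mc C$: an element of theta-degree $p\geq 1$ cannot be shift-invariant, since $\mc S$ strictly raises the largest $\theta$-index occurring in it, while on $\hA^0=\A$ shift-invariance means membership in $\mc C$ by definition. Since $\mc C\subseteq\bar{\mc C}=\ker D_{P_0}|_{\hA^0}$ sits in degree $0$ with trivial differential (here I use the standing hypothesis $\mc C=\bar{\mc C}$), we obtain a short exact sequence of complexes
\[
0\longrightarrow (\mc C,0)\longrightarrow(\hA,D_{P_0})\stackrel{\mc S-1}{\longrightarrow}((\mc S-1)\hA,D_{P_0})\longrightarrow 0\,.
\]

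For part (a) I would run the long exact sequence attached to this sequence. Because $H^p(\mc C,0)=\delta_{p,0}\mc C$, for $p\geq 1$ the two neighbouring terms vanish and the map $(\mc S-1)_\ast\colon H^p(\hA,D_{P_0})\to H^p((\mc S-1)\hA,D_{P_0})$ is an isomorphism, which is \eqref{eq:subcomplex}. At $p=0$ the inclusion $\mc C\hookrightarrow\hA$ already induces the isomorphism $H^0(\mc C)\cong H^0(\hA,D_{P_0})=\mc C[1]$ of Proposition \ref{prop:coh1}; hence $(\mc S-1)_\ast$ vanishes on $H^0$, and exactness forces $H^0((\mc S-1)\hA,D_{P_0})=0$.

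For part (b) the key observation is that the composition of the surjection $\mc S-1\colon\hA\to(\mc S-1)\hA$ with the inclusion $\alpha\colon(\mc S-1)\hA\hookrightarrow\hA$ is precisely the endomorphism $\mc S-1$ of $\hA$, so on cohomology $\alpha_p\circ(\mc S-1)_\ast=\mc S_\ast-\mathrm{id}$. It therefore suffices to show that $\mc S$ induces the identity on $H^\ast(\hA,D_{P_0})$, and by Proposition \ref{prop:coh1} this need only be checked on the generators $[1],[\theta],[\zeta],[\theta\zeta]$. The required primitives are explicit from \eqref{eq:DP0}: one has $\theta_1-\theta=D_{P_0}(v_1)$, $\zeta_1-\zeta=D_{P_0}(u)$, and, using that $D_{P_0}$ is an odd derivation annihilating the variables $\theta_m,\zeta_m$, a short computation gives $\theta_1\zeta_1-\theta\zeta=D_{P_0}(v_1\zeta_1-u\theta)$ (alternatively one invokes the degree-reduction of Lemma \ref{lem:coh2}). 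Thus $\mc S_\ast=\mathrm{id}$ and $\alpha_p\circ(\mc S-1)_\ast=0$. For $p\geq 1$ the factor $(\mc S-1)_\ast$ is the isomorphism from part (a), whence $\alpha_p=0$; for $p=0$ the vanishing $H^0((\mc S-1)\hA,D_{P_0})=0$ makes $\alpha_0=0$ automatic.

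The only genuine content is the claim that $\mc S$ acts as the identity on $H^\ast(\hA,D_{P_0})$; once the primitives above are in hand everything else is formal homological bookkeeping with the two short exact sequences. The one point demanding care is the identification $\ker(\mc S-1)|_{\hA}=\mc C$ concentrated in degree $0$, which is where the hypothesis $\mc C=\bar{\mc C}$ enters.
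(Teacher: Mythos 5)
Your proposal is correct, and its computational core coincides exactly with the paper's: both proofs hinge on the identification $\ker(\mc S-1)|_{\hA}=\mc C$ (concentrated in $\theta$-degree $0$, with injectivity of $\mc S-1$ in positive degree) and on the same three primitives $(\mc S-1)\theta=D_{P_0}(v_1)$, $(\mc S-1)\zeta=D_{P_0}(u)$, $(\mc S-1)\theta\zeta=D_{P_0}(v_1\zeta_1-u\theta)$. The organization, however, is genuinely different. For part (a) the paper argues directly: since $\mc S-1$ commutes with $D_{P_0}$, one has $\ker(D_{P_0}|_{(\mc S-1)\hA^p})=(\mc S-1)\ker(D_{P_0}|_{\hA^p})$ and $D_{P_0}((\mc S-1)\hA^{p-1})=(\mc S-1)D_{P_0}(\hA^{p-1})$, and \eqref{eq:subcomplex} follows by cancelling the injective map $\mc S-1$; you instead package the same facts into a second short exact sequence $0\to\mc C\to\hA\to(\mc S-1)\hA\to0$ and read off (a) from its long exact sequence. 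For part (b) the paper lists the generators of $H^p((\mc S-1)\hA,D_{P_0})$ supplied by (a) and checks that each becomes exact in $\hA$ via the primitives above; you reformulate this as the statement that $\mc S$ induces the identity on $H^\ast(\hA,D_{P_0})$ and conclude from the factorization $\alpha_p\circ(\mc S-1)_\ast=\mc S_\ast-\mathrm{id}=0$ together with the invertibility of $(\mc S-1)_\ast$ established in (a). Your version is slightly more formal and makes conceptually transparent \emph{why} the maps $\alpha_p$ must vanish (shift-invariance of the cohomology of the ambient complex), at the cost of invoking a second long exact sequence; the paper's argument is more hands-on and self-contained. Both are complete proofs, and your one delicate point --- that $\ker(\mc S-1)$ sits entirely in degree $0$ and equals $\mc C$ --- is justified correctly.
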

\begin{proof}
Since $D_{P_0}$ and $(\cS-1)$ commute, we clearly have $\ker(D_{P_0}|_{(\cS-1)\hA^p})=(\mc S-1)\ker(D_{P_0}|_{\mc A^p})\subset\hA^p$,
for every $p\geq0$, and $D_{P_0}((\mc S-1)\hA^{p-1})=(\mc S-1)D_{P_0}(\mc A^{p-1})\subset\hA^p$,
for every $p\geq1$.
Note that $\mc S-1:\hA^p\rightarrow\hA^p$ is injective for $p\geq1$ while for $p=0$ we have $\hA^0=\mc A$  and $\ker (\mc S-1)|_{\mc A}=\mc C$.
Hence, we have
$$
H^0((\mc S-1)\hA,D_{P_0})
=\ker(D_{P_0}|_{(\mc S-1)\mc A})=(\mc S-1)\ker(D_{P_0}|_{\mc A})=(\mc S-1)\mc C=0\,,
$$
and, for $p\geq1$,
\begin{align*}
H^p((\mc S-1)\hA,D_{P_0})
&=\frac{\ker(D_{P_0}|_{(\mc S-1)\hA^p})}{D_{P_0}((\mc S-1)\hA^{p-1})}
=\frac{(\mc S-1)\ker(D_{P_0}|_{\hA^p})}{(\mc S-1)D_{P_0}(\hA^{p-1})}
\\
&\cong 
\frac{\ker(D_{P_0}|_{\hA^p})}{D_{P_0}(\hA^{p-1})}
=H^p(\hA,D_{P_0})\,,
\end{align*}
where in the third identity we used the fact that $\mc S-1$ is injective. This proves part (a). The claim of part (b) is clear for $p\neq 1,2$ using part (a) and Proposition \ref{prop:coh1}. From part (a) and Proposition \ref{prop:coh1} we also have that 
$H^1((\mc S-1)\hA,D_{P_0})$ is the linear (over $\mc C$) span of the cohomology classes of $(\mc S-1)\theta$ and $(\cS-1)\zeta$. Since 
$$
(\cS-1)\theta=D_{P_0}(v_1)
\quad\text{and}\quad
(\cS-1)\zeta=D_{P_0}(u)
\,,
$$
they are mapped by $\alpha_1$ to the trivial cohomology class in $H(\hA,D_{P_0})$.
Similarly, 
$H^2((\mc S-1)\hA,D_{P_0})$ is the linear (over $\mc C$) span of the cohomology class of $(\mc S-1)\theta\zeta$
which is mapped by $\alpha_2$ to the trivial cohomology class in in $H(\hA,D_{P_0})$ since
$$
(\cS-1)\theta\zeta
=
D_{P_0}(v_1\zeta_1-u\theta)
\,.
$$
This concludes the proof of part (b).
\end{proof}
Finally, we are able to prove Theorem \ref{thm:cohoToda}.
\begin{proof}[Proof of Theorem \ref{thm:cohoToda}]
By Proposition \ref{prop:coh1} and Lemma \ref{lem:coh}(a), using the long exact sequence \eqref{eq:long} we immediately have that $H^p(\hF,d_{P_0})$ is trivial for $p>2$.

By Lemma \ref{lem:coh} we have that $H^0((\cS-1)\hA,D_{P_0})$ is trivial. Moreover, by Lemma \ref{lem:coh}(b), $\alpha_1$ is trivial, hence $\gamma_0$ is surjective. Then we have the following short exact sequence from \eqref{eq:long}:
\begin{equation}\label{eq:long1}
\begin{split}
0&\longrightarrow 
H^0(\hA,D_{P_0})\stackrel{\beta_0}{\longrightarrow}
H^0(\hF,d_{P_0})\stackrel{\gamma_0}{\longrightarrow}H^1((\mc S-1)\hA,D_{P_0})
\longrightarrow0\,,
\end{split}
\end{equation}
where the map $\beta_0$ sends $[1]\in H^0(\hA,D_{P_0})$ to $[\int 1]\in H^0(\hF,d_{P_0})$.
Using Proposition \ref{prop:coh1} and the isomorphism \eqref{eq:subcomplex}, from \eqref{eq:long1} we get that
$$
H^0(\hF,d_{P_0})=\mc C[\smallint 1]\oplus\mc C[\omega_1]\oplus\mc C[\omega_2]
\,,
$$
where $[\omega_i]\in H^0(\hF,d_{P_0})$, $i=1,2$, are such that $\gamma_0([\omega_1]))=[(\cS-1)\theta]$ and
$\gamma_0([\omega_2]))=[(\cS-1)\zeta]$.
Note that $d_{P_0}(\int u)=d_{P_0}(\int v)=0$, and, by an explicit computation using \eqref{eq:DP0} and the definition of
the connecting homomorphism $\gamma_0$ we have
$$
\gamma_0([\smallint u])=[D_{P_0}(u)]=[(\mc S-1)\zeta]
\,,
$$
and
$$
\gamma_0([\smallint v])=[D_{P_0}(v)]=[(\mc S-1)\theta_{-1}]=[(\mc S-1)\theta]
\,,
$$
where in the last identity we used the fact that $\theta-\theta_{-1}=D_{P_0}(v)$.
This proves \eqref{coh1}.
Similarly, since $\alpha_2$ is trivial, then $\gamma_1$ is surjective and, from \eqref{eq:long}, we get the short exact sequence
\begin{equation}\label{eq:long2}
\begin{split}
0&\longrightarrow 
H^1(\hA,D_{P_0})\stackrel{\beta_1}{\longrightarrow}
H^1(\hF,d_{P_0})\stackrel{\gamma_1}{\longrightarrow}H^2((\mc S-1)\hA,D_{P_0})
\longrightarrow0\,,
\end{split}
\end{equation}
where 
and $\beta_1[\theta]=[\int\theta]$, $\beta_1[\zeta]=[\int\zeta]$. From \eqref{eq:long2} it follows that
$$
H^1(\hF,d_{P_0})=\mc C[\smallint \theta]\oplus\mc C[\smallint\zeta]\oplus\mc C[\omega]
\,,
$$
where $[\omega]\in H^1(\hF,d_{P_0})$ is such that  $\gamma_1([\omega])=[\theta\zeta]$.
A choice for $\omega$ is $\omega=\int v\zeta-u\theta$.
Indeed, using the properties of the integral map and the definition of $d_{P_0}$ given after \eqref{eq:DP0} it is straightforward to check that $d_{P_0}(\omega)=0$, and that, by an explicit computation using the definition of the connecting homomorphism $\gamma_1$, we have 
$$
\gamma_1[\omega]=[D_{P_0}(v\zeta-u\theta)]
=[\theta\zeta_{1}-\theta_{-1}\zeta]=[(\mc S-1)\theta_{-1}\zeta]=[(\mc S-1)\theta\zeta]
\,,$$
where in the last identity we used the fact that $(\theta-\theta_{-1})\zeta=D_{P_0}(v\zeta)$.
This proves \eqref{coh2}.
Finally, since $\alpha_2$ is trivial and $H^3((\mc S-1)\hA,D_{P_0})=0$ we have the isomorphism
$$
0\longrightarrow H^2(\hA,D_{P_0})\stackrel{\beta_2}{\longrightarrow}
H^2(\hF,d_{P_0})\longrightarrow0
\,,
$$
where $\beta_2[\theta\zeta]=[\int \theta\zeta]$, thus proving \eqref{coh3} and concluding the proof.
\end{proof}
\begin{remark}
The representative $\frac12\int\theta\zeta$ of the cohomology class $[\frac12\int \theta\zeta]\in H^2(\hF,d_{P_0})$
corresponds via \eqref{eq:bic-def} to the ultralocal Hamiltonian operator
$$
H_{(ul)}=\begin{pmatrix} 0 & 1 \\-1 &0 \end{pmatrix}
\,.$$
\end{remark}
\begin{remark}
Observe that the normal forms of
Hamiltonian difference operators provided by Theorem \ref{thm:normal} are different from the
Dubrovin-Novikov normal forms of hydrodynamic differential Hamiltonian operators \cite{DN83}. Despite this fact, by Theorem \ref{thm:cohoToda},
we have an isomorphism of graded vector spaces between the Poisson cohomology $H(\hF,d_{P_0})$ and the Poisson cohomology of a 2-component non-degenerate Dubrovin-Novikov differential Hamiltonian operator \cite{G01}.
\end{remark}

\section{Local bi-Hamiltonian pairs in two components}\label{sec:eg}

In this section we present a few examples of compatible pairs of local Hamiltonian difference structures in two components, taken by \cite{kmw13}.

As it is well known, the Poisson bivectors $(P_1,P_2)$ associated to a pair of compatible Hamiltonian structures satisfy the identity $[P_1+\lambda P_2,P_1+\lambda P_2]=0$, for every  $\lambda\in\mc C$. In particular, the compatibility condition $[P_1,P_2]=0$ means that the bivector $P_2$ is a cocycle of $d_{P_1}$. 


In the previous section, we computed the Poisson cohomology of $P_0$ given in \eqref{eq:PToda}; Theorem \ref{thm:cohoToda} guarantees that all the bivectors $P$ compatible with $P_0$ are a linear combination of the only nontrivial cocycle $P_{(ul)}=\int\theta\zeta$ with trivial deformations, namely that there exist $\alpha\in\mc C$, $X\in\hF^1$ such that $P=\alpha P_{(ul)}+[P_0,X]$. Observe, in particular, that the vector $X$ must be nonzero if $P$ is not ultralocal.

Point transformations do not affect the Poisson cohomology; this means that, where there exists a point transformation taking a bivector $P_1$ to the form $P_0$, for any $P$ such that $[P_1,P]=0$ there must exist $X\in\hF^1$ such that $P=\alpha P'_{(ul)}+d_{P_1}X$, where we denoted as $P'_{(ul)}$ the inverse transformed form of the bivector $P_{(ul)}$.

When looking for the explicit expression of the local 1-vector $X=\int f(u,v,\ldots)\theta+g(u,v,\dots)\zeta$ in the following sections we need to make an ansatz on the variable dependency of its components $f,g\in\mc A$. To this end, we preliminary observe the expression for the adjoint action of a local bivector $P=\frac{1}{2}\int\sum_{i,j=1}^\ell\sum_{s\in\mb Z}\theta_i K^{ij}_s\theta_{j,s}$ on a local 1-vector $X=\int \sum_{i=1}^\ell X^i\theta_i$. A direct computation using \eqref{eq:sch-def} shows that
\begin{equation}\label{eq:adjaction}
[P,X]=\sum_{i,j,l=1}^\ell\sum_{m,s\in\mb Z}
\int\left(\theta_i\frac{\dev X^i}{\dev u^l_m}\left(\cS^m K^{lj}_s\right)\theta_{j,s+m}-\frac12\theta_i\left(\cS^mX^l\right)\frac{\dev K^{ij}_s}{\dev u^l_m}\theta_{j,s}\right).
\end{equation}
Recall now the filtered spaces $\A_{n,i}$ defined in Section~\ref{sec:2.1}
and the filtration given by equation \eqref{eq:filtration} and let $(-N,N)$ be the order of the matrix difference operator corresponding to $P$. Assume that $X^i\in\A_{p+1,0}$; from the first term of \eqref{eq:adjaction} we can conclude that the order of the difference operator associated to $[P,X]$ is $(-n,n)$, with $n\leq N+|p|$. This imposes a lower bound on $p$; in general there is no upper bound on the number of shifted variables on which the characteristics $X^i$ of $X$ should depend -- we start looking from the minimal possible dependency, namely $X^i\in\A_{N-n+1,0}$, and increase the filtration in case we cannot find a solution there.

We conclude the section showing how our cohomological approach to the Hamiltonian structures can be used to derive new examples of local bi-Hamiltonian pairs.

\subsection{Toda lattice}\label{sec:toda}
The first Hamiltonian structure of the Toda lattice is \eqref{eq:HToda-0}, whereas the second one is
\begin{equation}\label{eq:HToda2-0}
H_2= \begin{pmatrix} u(\cS-\cS^{-1})\circ u & u(\cS-1)\circ v\\ v(1-\cS^{-1})\circ u & u\cS-\cS^{-1}\circ u \end{pmatrix}
\,.
\end{equation}
As already claimed, the change of coordinate $u'=\log u$ and $v'=v$ takes the first Hamiltonian structure $H$ given in \eqref{eq:HToda-0} to the constant form $H_0$ of \eqref{eq:HToda}. The same change of coordinates applied to $H_2$ gives
\begin{equation}\label{eq:HToda2}
\tilde{H}_2=\begin{pmatrix}\cS-\cS^{-1} & (\cS-1)\circ v\\v(1-\cS^{-1}) & e^u\cS-\cS^{-1}\circ e^u \end{pmatrix}
\,.
\end{equation}
Clearly, the change of coordinates does not affect the compatibility of the two structures. Let us denote by $P_2$ the Poisson bivector associated to $H_2$ in \eqref{eq:HToda2-0}, by $\tilde{P}_2$ the one associated to $\tilde{H}_2$, by $P_1$ the one associated to $H$ of \eqref{eq:HToda-0} and, finally, $P_0$ is given by \eqref{eq:PToda} as above mentioned. From $[P_1,P_2]=0$ we have that $\tilde{P_2}$ is a cocycle in $H^2(d_{P_0})$. Because the only nontrivial cocycle in $H^2(d_{P_0})$ is the constant ultralocal operator $H_{(ul)}$, there must exist an element $\tilde{X}\in\hF^1$ such that \emph{at least} the coefficients of $\theta\theta_1$, $\theta\zeta_1$, $\zeta\zeta_1$ of $\tilde{H}_2$ are given by the corresponding coefficients of $d_{P_0}\tilde{X}$; on the other hand the coefficients of $\theta\zeta$ may differ only up to an additive constant. In the original coordinates of \eqref{eq:HToda-0}, we have $P_2=[P_1,X]+\alpha P_{(ul)}$ for some $X\in\hF^1$, and $P_{(ul)}=\frac12\int u \theta \zeta$. We can then compute $X$ directly in this coordinate system.

Explicitly, the two Poisson bivectors $P_1$ and $P_2$ are of the form
\begin{align}\label{eq:P1Toda}
P_1&=\int\left(u\theta\zeta_1-u\theta\zeta\right)\\ \label{eq:P2}
P_2&=\int\left(uu_1\theta\theta_1+uv_1\theta\zeta_1-uv\theta\zeta+u\zeta\zeta_1\right).
\end{align}
We look for a vector $X$ of the form $X=f(u,v)\theta+g(u,v)\zeta$ satisfying the equation $[P_1,X]=P_2-\alpha P_{(ul)}$. Note in particular that in our ansatz $f$ and $g$ depend only on the coordinates $(u,v)$ (i.e., not on their shifts).
An explicit computation with formula \eqref{eq:sch-def} gives
\begin{equation}
\begin{split}
[P_1,X]&=\int\left(-u_{-1}f_v\theta_{-1}\theta+(f-uf_u)\theta\zeta_1-(f-uf_u-ug_v)\theta\zeta\right.\\
&\qquad\left.-u_1g_v\theta_{-1}\zeta-ug_u\zeta\zeta_1\right)\\
&=\int\left(u(\cS f_v)\theta\theta_1+(f-uf_u-u\cS g_v)\theta\zeta_1-(f-uf_u-ug_v)\theta\zeta\right.\\&\qquad\left.-ug_u\zeta\zeta_1\right),
\end{split}
\end{equation}
where we denoted by $f_u$ the partial derivative of $f$ with respect to the variable $u$ (similar meaning for $f_v, g_u$ and $g_v$). By direct inspection we have the system of equations
\begin{equation}
\left\{\begin{array}{rl}
-u\cS f_v&=uu_1\\
f-uf_u-u\cS g_v &=uv_1\\
-f+uf_u+ug_v&=-uv -\alpha u\\
-ug_u&=u
\end{array}\right.
\end{equation}
which can be solved only if $\alpha=0$, obtaining $f=-uv$ and $g=-u-\frac12v^2$ and then confirming that $P_2$ is a coboundary. Recall that $[P_1,X]=\mathcal{L}_{-X}(P_1)$.
\begin{remark}
It is very easy to notice that, as $P_2$ is $d_{P_1}$-exact, also $P_1$ is $d_{P_2}$ exact: it is sufficient to note that the change of coordinates $u\mapsto u$, $v\mapsto v+\lambda$ produces $P_2\mapsto P_2+\lambda P_1$ (this corresponds to a local $1$- vector $X=-\lambda\int\zeta$). However, we cannot obtain from it any information on the cohomology of $P_2$.
\end{remark}
\subsection{Bruschi-Ragnisco lattice}\label{sec:BR}
The Bruschi-Ragnisco lattice equation is \cite{kmw13}
\begin{equation}
\left\{\begin{array}{rl} u_t&=u(u-u_{-1})\\ v_t&=uv_1-u_{-1}v\end{array}\right.
\end{equation}
and it is bi-Hamiltonian with respect to the first structure of the Toda lattice and to
\begin{equation}\label{eq:BR2}
H_2^{(BR)}=\begin{pmatrix}
0 & u(1-\cS^{-1})\circ u\\ u(\cS-1)\circ u & u\cS v-v\cS^{-1}\circ u
\end{pmatrix}
\end{equation}
which corresponds to the bivector
\begin{equation}\label{eq:BR2b}
P_2^{(BR)}=\int\left(u^2\theta\zeta+uu_1\zeta\theta_1+uv_1\zeta\zeta_1\right).
\end{equation}
The compatibility between $P_1$ and $P_2$ implies that $P_2^{(BR)}=[P_1,Y]+P_{(ul)}$. Note that now it is impossible to find a solution using the ansatz of Section \ref{sec:toda}, because for the coefficient of the term $\zeta\zeta_1$ we would have $g_u=-v_1$, but we have postulated that $g$ depends only on $(u,v)$. We then make the slightly more general ansatz that $f$ and $g$ depends on $(u_n,v_n)$ with $n=0,\pm1$. Computing the Schouten bracket of $P_1$ and $Y$ and comparing it with $P_2$ we obtain the following system of equations:
\begin{equation}
\left\{\begin{array}{rl}
u\cS^2 f_{v_{-1}}&=0\\
u\cS f_{v_{-1}}-u\cS f_v-u_1f_{v_1}&=0\\
u_1 f_{u_1}+u\cS^2 g_{v_{-1}}&=0\\
f-uf_u+u_1 f_{u_1}+u\cS g_{v_{-1}}-u\cS g_v&=0\\
-f+u f_u-u_{-1} f_{u_{-1}}+ug_v-ug_{v_1}&=u^2-\alpha u\\
-u\cS f_{u_{-1}}-u_1 g_{v_1}&=uu_1\\
-ug_u-u\cS g_{u_{-1}}+u_1g_{u_1}&=uv_1\\
u_1 g_{u_1}&=0
\end{array}\right.
\end{equation}
for which we find an easy solution for $\alpha=0$, namely $f=0$, $g=-uv_1$.
\subsection{Two-component Volterra lattice}\label{sec:5.3}
The two-component Volterra lattice \cite{kmw13}
\begin{equation}
\left\{\begin{array}{rl} u_t &=u(v_1-v)\\v_t&=v(u-u_{-1})\end{array}\right.
\end{equation}
is bi-Hamiltonian with respect to the two Hamiltonian operators
\begin{equation}\label{eq:2V-1}
H_1^{(2V)}=\begin{pmatrix} 0 & u(\cS-1)\circ v\\v(1-\cS^{-1})\circ u & 0\end{pmatrix}
\end{equation}
and
\begin{equation}\label{eq:2V-2}
H_2^{(2V)}=\begin{pmatrix} u(\cS v-v\cS^{-1})\circ u &  u(u\cS-u+\cS\circ  v-v)\circ v\\
v(u-\cS^{-1}\circ u+v-v\cS^{-1})\circ u & v(u\cS-\cS^{-1}\circ u)\circ v\end{pmatrix}
\,.
\end{equation}
The two associated bivectors are, respectively,
\begin{equation}\label{eq:P2V1}
P_1^{(2V)}=\int\left(uv_1\theta\zeta_1-uv\theta\zeta\right)
\end{equation}
and
\begin{equation}\label{eq:P2V2}
P_2^{(2V)}=\int\left(uu_1v_1\theta\theta_1+uv_1(u+v_1)\theta\zeta_1-uv(u+v)\theta\zeta+uvv_1\zeta\zeta_1\right)
\,.
\end{equation}
We observe that the change of coordinates $u'=\log u$, $v'=\log v$ brings $H_1^{(2V)}$ to the constant form $H_0$ in \eqref{eq:HToda-0}. Since the second cohomology of $P_0$ is concentrated in the ultralocal part, there exists a local 1-vector $X$ such that $P_2^{(2V)}=[P_1^{(2V)},X]+\alpha\, P'_{(ul)}$, where $P'_{(ul)}$ is obtained by the inverse change of coordinate for $H_{(ul)}$ as $P'_{(ul)}=\frac12\int uv\theta\zeta$. We look for the solution adopting the ansatz $X=f(u,v)\theta+g(u,v)\zeta$ and computing, as usual $P_2^{(2V)}=[P_1^{(2V)},X]$. We obtain the following set of equations
\begin{equation}
\left\{\begin{array}{rl}
-uv_1\cS f_v&=uu_1v_1\\ v_1 f+u\cS g-uv_1 f_u-uv_1\cS g_v&=uv_1(u+v_1)\\
-vf-ug+uvf_u+uvg_v&=-uv(u+v)-\alpha uv\\
-uv_1g_u&=uvv_1
\end{array}\right.
\end{equation}
From the first and fourth equation it is immediate to deduce $f=-uv + \tilde{f}(u)$ and $g=-uv+\tilde{g}(v)$. For degree reasons, a possible solution for $\tilde{f}$ and $\tilde{g}$ must be quadratic; it is then straightforward to find the solution $f=-uv-u^2$ and $g=-uv-v^2$, with the necessary and sufficient condition $\alpha=0$.
\begin{remark}
Note that the second Hamiltonian operator $H^{(2V)}_2$ is of the form $A\cS+B-\cS^{-1}\circ A^t$ with
$$
A=\begin{pmatrix}
u v_1 u_1 & u(u+v_1)v_1\\0 & uvv_1
\end{pmatrix}
$$
a nondegenerate matrix. According to Remark \ref{rem:DP}, then $H^{(2V)}_2$ must be defined by a triple $(\aff(\R),\rmat,\kmat)$
associated to the Poisson-Lie group $\aff(\R)$ and a choice of matrices $\rmat$ and $\kmat$. Under the change of coordinate $u'=u/v$, $v'=-1/v+u/v$, indeed, $H^{(2V)}_2$ takes the form
\begin{equation*}
\begin{split}
\widetilde{H}^{(2V)}_2&=\begin{pmatrix}0 & u'\\-u'u'_1 & -u'v'_1+u'\end{pmatrix}\cS+\begin{pmatrix}0 & -u'-(u')^2\\ u'+(u')^2& 0\end{pmatrix}\\&\quad-\cS^{-1}\circ\begin{pmatrix}0 &-u'u'_1\\ u' & -u'v'_1+u'\end{pmatrix}
\,,
\end{split}
\end{equation*}
namely $A$ and $B$ are respectively of the form \eqref{eq:Aaff} and \eqref{eq:Baff} for $\rmat=\begin{pmatrix}0&1\\-1&1\end{pmatrix}$ and $\kmat=\begin{pmatrix}0&1\\-1&0\end{pmatrix}$. Furthermore, the Lie bracket \eqref{eq:20240621-gstar} on $\mf g^*$ is $[e_1^*,e_2^*]=e_1^*$. 
\end{remark}

\subsection{Relativistic Volterra lattice}
The relativistic Volterra lattice \cite{kmw13}
\begin{equation}
\left\{\begin{array}{rl} u_t&=u(v_1-v+u_1v_1-uv) \\ v_t&=v(u-u_{-1}+uv-u_{-1}v_{-1})
\end{array}\right.
\end{equation}
is bi-Hamiltonian with respect to the same first Hamiltonian structure of the two-component Volterra equation $H_1^{(2V)}$ and to the $(-2,2)$ order structure
$$
H_2^{(rV)}=\begin{pmatrix}
a(\mc S)&b(\mc S)\\
-b^*(\mc S)&c(\mc S)
\end{pmatrix}
\,,
$$
where
\begin{align*}
&a(\mc S)= u\cS\circ u v(1+u)-uv(1+u)\cS^{-1}\circ u
\,,
\\
&b(\mc S)=
u(\cS\circ  uv\cS + u\cS+\cS\circ v)\circ v-uv(u+v+uv)
\,,
\\
&
c(\mc S)=uv(1+v)\cS\circ v-v\cS^{-1}\circ uv(1+v)
\,,
\end{align*}
which corresponds to the Poisson bivector
\begin{multline}
P_2^{(rV)}=\int\left(uu_1v_1(1+u_1)\theta\theta_1 + uu_1v_1v_2\theta\zeta_2+uv_1(u+v_1)\theta\zeta_1\right.\\
\left.-uv(u+v+uv)\theta\zeta+uv(1+v)v_1\zeta\zeta_1\right)
\,.
\end{multline}
The compatibility of $P_2^{(rV)}$ with $P_1^{(2V)}$ means that there exists a local $1$-vector $Y$ such that $P_2^{(rV)}-[P_1^{(2V)},Y]$ is ultralocal, in particular either zero or proportional to $P'_{(ul)}$ of the Section \ref{sec:5.3}. Moreover, we note that $P_2^{(rV)}=P_2^{(2V)}+\tilde{P}_2$, which -- by the linearity of the Schouten bracket -- implies that $Y=X+\tilde{X}$, with $X$ as computed in 
Section \ref{sec:5.3}. Since the order of $\tilde{P}_2$ is higher than the one of $P_1^{(2V)}$, we expect the local 1-vector $\tilde{X}$ to depend at least on the variables $(u,v,u_1,v_1)$; indeed, we can find a solution of such a form of the equation $\tilde{P}_2=[P_1^{(2V)},\tilde{X}]$. From the ansatz $\tilde{X}=f(u,v,u_1,v_1)\theta+g(u,v,u_1,v_1)\zeta$ we obtain the system of equations
\begin{equation}
\left\{\begin{array}{rl}
-u_1v_1f_{v_1}-uv_1\cS f_v&=uu_1^2v_1\\
-u_1v_2f_{u_1}&=uu_1v_1v_2\\
v_1 f+u\cS g-uv_1f_u+u_1v_1f_{u_1}-uv_1\cS g_v&=0\\
-vf-ug+uvf_u+uvg_v-uv_1g_{v_1}&=-u^2v^2-\alpha u^2\\
-uv_1g_u+u_1v_1g_{u_1}&=uv^2v_1\\
-u_1v_1g_{u_1}&=0
\end{array}\right.
\end{equation}
which has solution $f=-uu_1v_1$ and $g=-uv^2$ only if $\alpha=0$. Then we reconstruct the local $1$-vector $Y$ as
\begin{equation}
Y=-\left(u^2+uv+uu_1v_1\right)\theta-\left(v^2+uv+uv^2\right)\zeta
\end{equation}
and conclude $P_2^{(rV)}=[P_1^{(2V)},Y]$.
\subsection{Relativistic Toda lattice} 
The relativistic Toda lattice \cite{kmw13}
\begin{equation}
\left\{\begin{array}{rl}
u_t&=u\left(u_{-1}-u_{1}+v-v_1\right)\\
v_t&=v\left(u_{-1}-u\right)
\end{array}\right.
\end{equation}
is bi-Hamiltonian with respect to the Hamiltonian operators
\begin{equation}\label{eq:HrT1}
H^{(rT)}_1=\begin{pmatrix} 0 & u(\cS-1)\\ (1-\cS^{-1})\circ u & \cS^{-1}\circ u-u\cS\end{pmatrix}
\end{equation}
and
\begin{equation}\label{eq:HrT2}
H^{(rT)}_2=\begin{pmatrix} u(\cS-\cS^{-1})\circ u & u(\cS-1)\circ v\\v(1-\cS^{-1})\circ u & 0\end{pmatrix}
\,.
\end{equation}
\begin{remark}
Note that the pair $(H_1^{(rT)}, H_2^{(rT)})$ can be obtained by the pair $(H,H_2)$ of the Toda lattice (\eqref{eq:HToda} and \eqref{eq:HToda2}) by noting $H_1^{(rT)}=H-\tilde{H}$ and $H_2^{(rT)}=H_2-\tilde{H}$ with
\begin{equation}
\tilde{H}=\begin{pmatrix}0 & 0 \\ 0 & u\cS-\cS^{-1}\circ u\end{pmatrix}
\,.
\end{equation}
In fact, $(H, H_2, \tilde{H})$ is a triple of compatible Hamiltonian operators. 
\end{remark}
The first Hamiltonian operator in \eqref{eq:HrT1}, whose associated bivector is
\begin{equation}\label{eq:PrT1}
P_1^{(rT)}=\int\left(u\theta\zeta_1-u\theta\zeta-u\zeta\zeta_1\right)
\end{equation}
can be brought to the constant form \eqref{eq:HToda-0} by the change of coordinates $u'=\log u$, $v'=u+v$; this means that there must exist a local 1-vector $X$ such that, when written in the same coordinate system as $P_1^{(rT)}$, is a solution of $[P_1^{(rT)},X]=P_2^{(rT)}-\alpha P_{(ul)}$, where $P_2^{(rT)}$ is the Poisson bivector associated to \eqref{eq:HrT2}, namely
\begin{equation}\label{eq:PrT2}
P^{(rT)}_2=\int\left(uu_1\theta\theta_1+uv_1\theta\zeta_1-uv\theta\zeta\right),
\end{equation}
 and $P_{(ul)}$ is the same ultralocal bivector of Section~\ref{sec:toda} and Section~\ref{sec:BR}, as one can easily verify by applying the inverse change of coordinates to $\frac12\int\theta\zeta\in H^2(\hF,d_{P_0})$.
To find a solution we look for a more generic ansatz than the previous examples (there are not solutions choosing simpler ones) of the form $X=f\theta+g\zeta$ with both $f$ and $g$ depending on $(u_n,v_n)$, $n=0,\pm 1$. The system of conditions we obtain is
\begin{equation}
\left\{\begin{array}{rl}
u \cS^{2}f_{v_{-1}}&=0\\
u_1 f_{v_1}+u\cS f_v&=uu_1\\
-u_1 f_{v_1}+u_1 f_{u_1}+u\cS^2 g_{v_{-1}}&=0\\
-f-u f_v-u_1 f_{u_1}+uf_u+u\cS g_v-u\cS g_{v_{-1}}&=uv_1\\
f+u f_v -u f_u+u_{-1}f_{u_{-1}}+ug_{v_1}-ug_v&=-uv-\alpha u\\
u_{-1}f_v-u_{-1}f_{u_{-1}}-u \cS^{-1} g_{v_1}&=0\\
-u_1 g_{v_1}-u\cS^2 g_{v_{-1}}+u_1 g_{u_1}&=0\\
f-ug_v-u\cS g_v-u_1 g_{u_1}+ug_u+ug_{u_{-1}}&=0
\end{array}\right.
\end{equation}
whose solution is $f=-uv-uu_1$, $g=uv_1-u_{-1}v_{-1}+uv+\frac12(u^2+v^2)$  with $\alpha=0$.

\subsection{Derivation of bi-Hamiltonian pairs}
Our result about the structure of $H^2(\hF,d_{P_0})$ guarantees that any bivector $P$ \emph{compatible} with $P_0$ is of the form $P=\alpha P_{(ul)}+[P_0,X]$. Therefore, given an arbitrary local 1-vector $X$ and a constant $\alpha$, we can obtain a new compatible bivector; however, there is no guarantee that the bivector $P$ is Poisson and therefore that the pair $(P_0,P)$ is bi-Hamiltonian. There exist choices of $X$ for which this is indeed the case: the pairs of the previous section are some examples.

In this section, we characterize the local 1-vectors $X$ that produce compatible Poisson bivectors and present a simple class of new examples. For simplicity, we work with $P_0$; however, any point transformation (such as those connecting $P_0$ with $P_1$ of the previous examples) applied to both the bivectors preserve the bi-Hamiltonian property.

\begin{proposition}
Let $P_0$ be given by \eqref{eq:PToda}, let $X\in\hF^1$ and let $P=\alpha P_{(ul)}+[P_0,X]$. The pair $(P_0,P)$  is a bi-Hamiltonian pair if and only if \begin{equation}\label{eq:biHam-thm}
\left[X,\left[X,P_0\right]-2\alpha P_{(ul)}\right]=\beta P_{(ul)}+[P_0,Y],\end{equation}
for some value of the constant $\beta$ and a local $1$-vector $Y$.
\end{proposition}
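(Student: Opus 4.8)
The plan is to show that, for $P$ of the given form, the bi-Hamiltonian condition collapses to a single cohomological equation, which Theorem \ref{thm:cohoToda} then converts into \eqref{eq:biHam-thm}. Recall that $(P_0, P)$ is bi-Hamiltonian exactly when $[P_0, P_0] = 0$, $[P_0, P] = 0$ and $[P, P] = 0$. The first holds since $P_0$ is Poisson. The key preliminary observation is that the compatibility condition $[P_0, P] = 0$ holds \emph{automatically}, for every $X \in \hF^1$ and every $\alpha \in \mc C$: expanding $[P_0, P] = \alpha[P_0, P_{(ul)}] + [P_0, [P_0, X]]$, the first term vanishes because $P_{(ul)}$ is a $d_{P_0}$-cocycle (it is the generator of $H^2(\hF, d_{P_0})$ in Theorem \ref{thm:cohoToda}), while the second vanishes because $d_{P_0} = [P_0, -]$ is a differential, i.e. $d_{P_0}^2 = 0$, which follows from $[P_0, P_0] = 0$ and the graded Jacobi identity \eqref{eq:sch-jac}. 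Thus everything reduces to analysing the single remaining Poisson condition $[P, P] = 0$.

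First I would expand $[P, P]$. Setting $R := [P_0, X] \in \hF^2$, bilinearity of the Schouten bracket together with the symmetry of the bracket on two bivectors (a special case of \eqref{eq:sch-skew} with both entries of degree $2$) gives
$$[P, P] = \alpha^2 [P_{(ul)}, P_{(ul)}] + 2\alpha [P_{(ul)}, R] + [R, R].$$
The term $[P_{(ul)}, P_{(ul)}]$ vanishes because $P_{(ul)}$ has constant coefficients, so its variational derivatives in the $u^l$ are zero and the defining formula \eqref{eq:sch-def} returns $0$; equivalently, the constant operator attached to $P_{(ul)}$ via \eqref{eq:bic-def} is manifestly Hamiltonian.

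The heart of the argument is to rewrite the two surviving terms as a single $d_{P_0}$-coboundary. Applying the graded Jacobi identity \eqref{eq:sch-jac} to $[R, [P_0, X]]$ and to $[P_{(ul)}, [P_0, X]]$, and using $[P_0, R] = [P_0, [P_0, X]] = 0$ and $[P_0, P_{(ul)}] = 0$ to kill the terms in which $P_0$ lands next to $R$ or $P_{(ul)}$, each bracket reduces to one with an outer $P_0$ factored out. After normalising signs via \eqref{eq:sch-skew} and by linearity in the second slot, the computation yields
$$[P, P] = -\big[P_0,\, [X, [X, P_0] - 2\alpha P_{(ul)}]\big].$$
Writing $W := [X, [X, P_0] - 2\alpha P_{(ul)}] \in \hF^2$, this says $[P, P] = -d_{P_0} W$, so $P$ is Poisson if and only if $W$ is $d_{P_0}$-closed.

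Finally I would invoke Theorem \ref{thm:cohoToda}, which gives $H^2(\hF, d_{P_0}) = \mc C[\smallint \theta\zeta] = \mc C[P_{(ul)}]$. A $2$-cocycle $W$ is closed if and only if its class is a multiple of $[P_{(ul)}]$, i.e. $W - \beta P_{(ul)}$ is exact for some $\beta \in \mc C$; since the coboundaries in $\hF^2$ are precisely the elements $[P_0, Y]$ with $Y \in \hF^1$, this is equivalent to $W = \beta P_{(ul)} + [P_0, Y]$, which is \eqref{eq:biHam-thm}. The converse implication is immediate, since $W = \beta P_{(ul)} + [P_0, Y]$ forces $d_{P_0} W = \beta\, d_{P_0} P_{(ul)} + d_{P_0}^2 Y = 0$, hence $[P, P] = 0$. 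The main obstacle I anticipate is purely the sign discipline in the graded Jacobi and skewsymmetry steps leading to the displayed formula for $[P, P]$ (the parities $(a-1)(b-1)$ must be tracked carefully at each move); the conceptual content — compatibility is free, and being Poisson is a single closedness condition — and the final cohomological translation are straightforward once $H^2(\hF, d_{P_0})$ is in hand.
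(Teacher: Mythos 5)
Your proof is correct and follows essentially the same route as the paper's: expand $[P,P]$, use the graded Jacobi identity and skewsymmetry of the Schouten bracket to recast it as $-\left[P_0,\left[X,[X,P_0]-2\alpha P_{(ul)}\right]\right]$, and then invoke Theorem~\ref{thm:cohoToda} to characterise closed $2$-vectors as $\beta P_{(ul)}+[P_0,Y]$. The only additions (the explicit check that $[P_0,P]=0$ holds automatically and that $[P_{(ul)},P_{(ul)}]=0$) are details the paper disposes of by reference to the preceding discussion.
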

\begin{proof}
We have already proved that $P_0$ is a Poisson bivector and that $P_0$ and $P$ are compatible, so we only need to impose the Poisson property for $P$, which reads
$$
[\alpha P_{(ul)}+[P_0,X],\alpha P_{(ul)}+[P_0,X]]=2\alpha[P_{(ul)},[P_0,X]]+[[P_0,X],[P_0,X]]=0.
$$
Using the graded Jacobi identity of the Schouten bracket, we rewrite
\begin{align*}
[P_{(ul)},[P_0,X]]&=[[P_{(ul)},P_0],X]-[P_0,[P_{(ul)},X]]=-[P_0,[P_{(ul)},X]],\\
[[P_0,X],[P_0,X]]&=[[[P_0,X],P_0],X]-[P_0,[[P_0,X],X]]=-[P_0,[[P_0,X],X]],
\end{align*}
and, by graded skewsymmetry,
\begin{equation}\label{eq:biHam-proof}
[P,P]=[P_0,2\alpha [X,P_{(ul)}]-[X,[X,P_0]]]=0.
\end{equation}
Condition \eqref{eq:biHam-proof} is satisfied if and only if the second entry of the Schouten bracket is a cocycle of $d_{P_0}$ which, by our Theorem~\ref{thm:cohoToda}, imposes the form $\beta P_{(ul)}+[P_0,Y]$.
\end{proof}
Local 1-vectors $X$ satisfying this property are, for example, those whose characteristics are linear in the $(u,v)$ variables and their shifts, namely of the form $X^i=A^{im}_j u^j_m$. It is quite cumbersome to observe condition~\eqref{eq:biHam-thm} directly; however, the bivector $[P_0,X]$ does not depend on $u,v$ and their shifts, and therefore it is always Poisson and compatible with $P_{(ul)}$, so that $\alpha P_{(ul)}+[P_0,X]$ is Poisson, too. Explicitly we have 
$$
P=\alpha P_{(ul)}+[P,X]=\begin{pmatrix} a(\cS)&b(\cS)+\alpha \\-b^*(\cS)-\alpha&c(\cS)\end{pmatrix}
$$
where
\begin{align*}
a(\cS)&=A^{1m}_2\left(\cS^{m-1}-\cS^m+\cS^{-m}-\cS^{1-m}\right),\\
b(\cS)&=A^{1m}_1\left(\cS^m-\cS^{m+1}\right)+A^{2m}_2\left(\cS^{1-m}-\cS^{-m}\right),\\
c(\cS)&=A^{2m}_1\left(\cS^m-\cS^{m+1}+\cS^{-m-1}-\cS^{-m}\right).
\end{align*}
 A very simple example to illustrate  this family can be obtained for any $\alpha$ and $X$ such that $A^{10}_2$ is the only non-vanishing parameter; in this case $X$ satisfies the stronger properties $[X,[X,P_0]]=0$ and $[X,P_{(ul)}]=0,$ from which \eqref{eq:biHam-thm} is solved by $\beta=Y=0$. The bi-Hamiltonian pairs are of the form
$$
P_0=\begin{pmatrix} 0 &\cS-1\\1-\cS^{-1}&0\end{pmatrix},\qquad P=\begin{pmatrix}-A^{10}_2\left(\cS-\cS^{-1}\right)&\alpha\\-\alpha & 0\end{pmatrix}
$$
which, in the same local coordinates as the first Hamiltonian structure of Toda, are
\begin{equation}
P_1=\begin{pmatrix}0 & u(\cS-1)\\(1-\cS^{-1})\circ u& 0\end{pmatrix},\qquad P_2=\begin{pmatrix}-A^{10}_2\, u(\cS-\cS^{-1})\circ u & \alpha\, u\\-\alpha\,u&0\end{pmatrix}.
\end{equation}

Other classes of examples can be obtained imposing different ansatz for the local 1-vector $X$, and selecting among all the possible compatible bivectors $[P_0,X]$ those which satisfy the Jacobi identity. If we choose $X=X(u,v)$ at most cubic in its variables, we obtain two non-constant  families of Hamiltonian operators compatible with $P_0$, namely
$$
P_{(1)}=\begin{pmatrix} 0 & p_2(u)(\cS-1)\\(1-\cS^{-1})\circ p_2(u)& q_2(u)\cS-\cS^{-1}\circ q_2(u) \end{pmatrix}
$$
and
$$
P_{(2)}=\begin{pmatrix} \cS\circ p_2(v)-p_2(v)\cS^{-1} & q_2(v)(1-\cS^{-1})\\(\cS^{-1}-1)\circ q_2(v)& 0 \end{pmatrix}
$$
with $p_2(x)$, $q_2(x)$ arbitrary quadratic polynomials of one variable.  Observe that, despite the similar structure, $P_{(1)}$ and $P_{(2)}$ are not interchangeable keeping $P_0$ constant. Indeed, the transformations preserving $P_0$ are only those of the form $\tilde{u}=a\,u+b$, $\tilde{v}=a^{-1}v+c$, and those can only change the coefficients of the polynomials $p_2$, $q_2$, not their position or variable dependency.


\section{Final remarks}

In this work we addressed the study of multi-component difference Hamiltonian operators, extending to the $\ell=2$ case the investigation on their structure carried out in \cite{DSKVW18-1,DSKVW18-2} and on their Poisson cohomology presented in \cite{CW19}. About the former problem, we have insofar limited ourselves to the ``first neighbours'' operators of order $(-1,1)$. While Dubrovin's result \cite{Dub89} holds true for any number of components, it is worthy noticing that most of the differential-difference integrable systems studied in the literature have a leading order defined by degenerate matrices and therefore fall outside the scope of Dubrovin's work. In Proposition \ref{prop:deg-B} and Proposition \ref{prop:deg-noB} we listed the $\ell=2$ degenerate cases depending on once-shifted variables.

Our computation of the Poisson cohomology for $\ell=2$ Hamiltonian operators essentially carries the same message as the previously studied scalar case \cite{CW19}: the only nontrivial part of the Poisson cohomology is concentrated in the so-called ultralocal part, and hence do not exist ``dispersive'' (such a term is borrowed from the analogue differential case: strictly speaking, of order $(-N,N)$ for $N>1$) deformations of $(-1,1)$-order Hamiltonian operators. This hints to a prominent role, within the theory of differential-difference integrable systems, for operators of $(-1,1)$-order, as it is indeed demonstrated by the abundance of examples in such a class \cite{kmw13}.

As it was already observed in \cite{DSKVW18-1}, the $n$-stretched versions, namely the ones obtained from the substitution $(\cS^k\mapsto \cS^{nk}, u^i_k\mapsto u^i_{nk})$, of all the Hamiltonian operators in our classification are Hamiltonian operators, too. However, we expect that their corresponding Poisson cohomology becomes richer, in particular allowing nontrivial ``dispersive'' deformations. Moreover, as the order of the Hamiltonian operators grows, also the classification will grow richer, as it is already the case for the $\ell=1$ case -- indeed, there exist examples \cite{kmw13} of integrable differential--difference systems in two components with bi-Hamiltonian structure of order $(-2,2)$ that, for the moment, fall outside the scope of our classification.

Finally, the class of Hamiltonian operators we considered in this paper is limited to the \emph{local} ones. Nonlocal Hamiltonian operators in the $\ell=1$ case have been considered in \cite{DSKVW18-2,cmw20}. An extension of their results to the $\ell>1$ case is of interest, not only to better investigate multi-component differential-difference integrable systems as the ones considered in this work, but also in the direction of non-commutative systems, whose Hamiltonian structure, even in the simplest cases, is often nonlocal \cite{CW21, CW22, FV23}.


\medskip

{\bf Acknowledgements.}
M.~C.~is a member of the GNFM INdAM group and he is supported by the National Science Foundation of China (Grant no.~12101341 and no.~12431008), Ningbo City Yongjiang Innovative Talent Program and Ningbo University Talent Introduction and Resarch Initiation Fund. He is also grateful to the Department of Mathematics ``G.~Castelnuovo'' of Sapienza University of Rome for their kind hospitality during several visits.
D.~V. is a member of the GNSAGA INdAM group, he has been supported by the National PRIN 2022 Grant 2022HMBTTL and he acknowledges the financial support of INFN, IS CSN4 MMNLP.


\end{document}